\documentclass[a4paper]{article}
\usepackage{amsfonts}
\usepackage{amsmath}
\usepackage[margin=2cm]{geometry}
\usepackage{graphicx}
\usepackage{mathtools}

\setcounter{MaxMatrixCols}{10}

\newtheorem{theorem}{Theorem}

\newtheorem{axiom}[theorem]{Axiom}

\newtheorem{conjecture}[theorem]{Conjecture}
\newtheorem{corollary}[theorem]{Corollary}

\newtheorem{definition}[theorem]{Definition}
\newtheorem{example}[theorem]{Example}
\newtheorem{exercise}[theorem]{Exercise}
\newtheorem{lemma}[theorem]{Lemma}

\newtheorem{proposition}[theorem]{Proposition}
\newtheorem{remark}[theorem]{Remark}

\newenvironment{proof}[1][Proof]{\noindent\textbf{#1.} }{\ \rule{0.5em}{0.5em}}

\typeout{TCILATEX Macros for Scientific Word 5.0 <13 Feb 2003>.}
\typeout{NOTICE:  This macro file is NOT proprietary and may be 
freely copied and distributed.}
\makeatletter

\ifx\pdfoutput\relax\let\pdfoutput=\undefined\fi
\newcount\msipdfoutput
\ifx\pdfoutput\undefined
\else
 \ifcase\pdfoutput
 \else 
    \msipdfoutput=1
    \ifx\paperwidth\undefined
    \else
      \ifdim\paperheight=0pt\relax
      \else
        \pdfpageheight\paperheight
      \fi
      \ifdim\paperwidth=0pt\relax
      \else
        \pdfpagewidth\paperwidth
      \fi
    \fi
  \fi  
\fi

%

%
\newcount\@hour\newcount\@minute\chardef\@x10\chardef\@xv60
\def\tcitime{
\def\@time{%
  \@minute\time\@hour\@minute\divide\@hour\@xv
  \ifnum\@hour<\@x 0\fi\the\@hour:%
  \multiply\@hour\@xv\advance\@minute-\@hour
  \ifnum\@minute<\@x 0\fi\the\@minute
  }}%


\def\x@hyperref#1#2#3{%
   \catcode`\~ = 12
   \catcode`\$ = 12
   \catcode`\_ = 12
   \catcode`\# = 12
   \catcode`\& = 12
   \catcode`\% = 12
   \y@hyperref{#1}{#2}{#3}%
}

\def\y@hyperref#1#2#3#4{%
   #2\ref{#4}#3
   \catcode`\~ = 13
   \catcode`\$ = 3
   \catcode`\_ = 8
   \catcode`\# = 6
   \catcode`\& = 4
   \catcode`\% = 14
}

\@ifundefined{hyperref}{\let\hyperref\x@hyperref}{}
\@ifundefined{msihyperref}{\let\msihyperref\x@hyperref}{}

\@ifundefined{qExtProgCall}{\def\qExtProgCall#1#2#3#4#5#6{\relax}}{}
%
%
%
%
\def\QCTOpt[#1]#2{%
  \def\QCTOptB{#1}
  \def\QCTOptA{#2}
}
\def\QCTNOpt#1{%
  \def\QCTOptA{#1}
  \let\QCTOptB\empty
}
\def\Qct{%
  \@ifnextchar[{%
    \QCTOpt}{\QCTNOpt}
}
\def\QCBOpt[#1]#2{%
  \def\QCBOptB{#1}%
  \def\QCBOptA{#2}%
}
\def\QCBNOpt#1{%
  \def\QCBOptA{#1}%
  \let\QCBOptB\empty
}
\def\Qcb{%
  \@ifnextchar[{%
    \QCBOpt}{\QCBNOpt}%
}
\def\PrepCapArgs{%
  \ifx\QCBOptA\empty
    \ifx\QCTOptA\empty
      {}%
    \else
      \ifx\QCTOptB\empty
        {\QCTOptA}%
      \else
        [\QCTOptB]{\QCTOptA}%
      \fi
    \fi
  \else
    \ifx\QCBOptA\empty
      {}%
    \else
      \ifx\QCBOptB\empty
        {\QCBOptA}%
      \else
        [\QCBOptB]{\QCBOptA}%
      \fi
    \fi
  \fi
}
\newcount\GRAPHICSTYPE
\GRAPHICSTYPE=\z@
\def\GRAPHICSPS#1{%
 \ifcase\GRAPHICSTYPE
   \special{ps: #1}%
 \or
   \special{language "PS", include "#1"}%
 \fi
}%
%
%
%

\def\graffile#1#2#3#4{%
    \bgroup
	   \@inlabelfalse
       \leavevmode
       \@ifundefined{bbl@deactivate}{\def~{\string~}}{\activesoff}%
        \raise -#4 \BOXTHEFRAME{%
           \hbox to #2{\raise #3\hbox to #2{\null #1\hfil}}}%
    \egroup
}%
%
\def\draftbox#1#2#3#4{%
 \leavevmode\raise -#4 \hbox{%
  \frame{\rlap{\protect\tiny #1}\hbox to #2%
   {\vrule height#3 width\z@ depth\z@\hfil}%
  }%
 }%
}%
\newcount\@msidraft
\@msidraft=\z@
\let\nographics=\@msidraft
\newif\ifwasdraft
\wasdraftfalse

\def\GRAPHIC#1#2#3#4#5{%
   \ifnum\@msidraft=\@ne\draftbox{#2}{#3}{#4}{#5}%
   \else\graffile{#1}{#3}{#4}{#5}%
   \fi
}
\def\addtoLaTeXparams#1{%
    \edef\LaTeXparams{\LaTeXparams #1}}%
%

\newif\ifBoxFrame \BoxFramefalse
\newif\ifOverFrame \OverFramefalse
\newif\ifUnderFrame \UnderFramefalse

\def\BOXTHEFRAME#1{%
   \hbox{%
      \ifBoxFrame
         \frame{#1}%
      \else
         {#1}%
      \fi
   }%
}

\def\doFRAMEparams#1{\BoxFramefalse\OverFramefalse\UnderFramefalse\readFRAMEparams#1\end}%
\def\readFRAMEparams#1{%
 \ifx#1\end%
  \let\next=\relax
  \else
  \ifx#1i\dispkind=\z@\fi
  \ifx#1d\dispkind=\@ne\fi
  \ifx#1f\dispkind=\tw@\fi
  \ifx#1t\addtoLaTeXparams{t}\fi
  \ifx#1b\addtoLaTeXparams{b}\fi
  \ifx#1p\addtoLaTeXparams{p}\fi
  \ifx#1h\addtoLaTeXparams{h}\fi
  \ifx#1X\BoxFrametrue\fi
  \ifx#1O\OverFrametrue\fi
  \ifx#1U\UnderFrametrue\fi
  \ifx#1w
    \ifnum\@msidraft=1\wasdrafttrue\else\wasdraftfalse\fi
    \@msidraft=\@ne
  \fi
  \let\next=\readFRAMEparams
  \fi
 \next
 }%
%

\def\IFRAME#1#2#3#4#5#6{%
      \bgroup
      \let\QCTOptA\empty
      \let\QCTOptB\empty
      \let\QCBOptA\empty
      \let\QCBOptB\empty
      #6%
      \parindent=0pt
      \leftskip=0pt
      \rightskip=0pt
      \setbox0=\hbox{\QCBOptA}%
      \@tempdima=#1\relax
      \ifOverFrame
          \typeout{This is not implemented yet}%
          \show\HELP
      \else
         \ifdim\wd0>\@tempdima
            \advance\@tempdima by \@tempdima
            \ifdim\wd0 >\@tempdima
               \setbox1 =\vbox{%
                  \unskip\hbox to \@tempdima{\hfill\GRAPHIC{#5}{#4}{#1}{#2}{#3}\hfill}%
                  \unskip\hbox to \@tempdima{\parbox[b]{\@tempdima}{\QCBOptA}}%
               }%
               \wd1=\@tempdima
            \else
               \textwidth=\wd0
               \setbox1 =\vbox{%
                 \noindent\hbox to \wd0{\hfill\GRAPHIC{#5}{#4}{#1}{#2}{#3}\hfill}\\%
                 \noindent\hbox{\QCBOptA}%
               }%
               \wd1=\wd0
            \fi
         \else
            \ifdim\wd0>0pt
              \hsize=\@tempdima
              \setbox1=\vbox{%
                \unskip\GRAPHIC{#5}{#4}{#1}{#2}{0pt}%
                \break
                \unskip\hbox to \@tempdima{\hfill \QCBOptA\hfill}%
              }%
              \wd1=\@tempdima
           \else
              \hsize=\@tempdima
              \setbox1=\vbox{%
                \unskip\GRAPHIC{#5}{#4}{#1}{#2}{0pt}%
              }%
              \wd1=\@tempdima
           \fi
         \fi
         \@tempdimb=\ht1
         \advance\@tempdimb by -#2
         \advance\@tempdimb by #3
         \leavevmode
         \raise -\@tempdimb \hbox{\box1}%
      \fi
      \egroup%
}%
%
\def\DFRAME#1#2#3#4#5{%
  \vspace\topsep
  \hfil\break
  \bgroup
     \leftskip\@flushglue
	 \rightskip\@flushglue
	 \parindent\z@
	 \parfillskip\z@skip
     \let\QCTOptA\empty
     \let\QCTOptB\empty
     \let\QCBOptA\empty
     \let\QCBOptB\empty
	 \vbox\bgroup
        \ifOverFrame 
           #5\QCTOptA\par
        \fi
        \GRAPHIC{#4}{#3}{#1}{#2}{\z@}%
        \ifUnderFrame 
           \break#5\QCBOptA
        \fi
	 \egroup
  \egroup
  \vspace\topsep
  \break
}%
%
\def\FFRAME#1#2#3#4#5#6#7{%
  \@ifundefined{floatstyle}
    {
     \begin{figure}[#1]%
    }
    {
	 \ifx#1h
      \begin{figure}[H]%
	 \else
      \begin{figure}[#1]%
	 \fi
	}
  \let\QCTOptA\empty
  \let\QCTOptB\empty
  \let\QCBOptA\empty
  \let\QCBOptB\empty
  \ifOverFrame
    #4
    \ifx\QCTOptA\empty
    \else
      \ifx\QCTOptB\empty
        \caption{\QCTOptA}%
      \else
        \caption[\QCTOptB]{\QCTOptA}%
      \fi
    \fi
    \ifUnderFrame\else
      \label{#5}%
    \fi
  \else
    \UnderFrametrue%
  \fi
  \begin{center}\GRAPHIC{#7}{#6}{#2}{#3}{\z@}\end{center}%
  \ifUnderFrame
    #4
    \ifx\QCBOptA\empty
      \caption{}%
    \else
      \ifx\QCBOptB\empty
        \caption{\QCBOptA}%
      \else
        \caption[\QCBOptB]{\QCBOptA}%
      \fi
    \fi
    \label{#5}%
  \fi
  \end{figure}%
 }%
%
%
%
%
%
\newcount\dispkind%

\def\makeactives{
  \catcode`\"=\active
  \catcode`\;=\active
  \catcode`\:=\active
  \catcode`\'=\active
  \catcode`\~=\active
}
\bgroup
   \makeactives
   \gdef\activesoff{%
      \def"{\string"}%
      \def;{\string;}%
      \def:{\string:}%
      \def'{\string'}%
      \def~{\string~}%
    }
\egroup

\def\FRAME#1#2#3#4#5#6#7#8{%
 \bgroup
 \ifnum\@msidraft=\@ne
   \wasdrafttrue
 \else
   \wasdraftfalse%
 \fi
 \def\LaTeXparams{}%
 \dispkind=\z@
 \def\LaTeXparams{}%
 \doFRAMEparams{#1}%
 \ifnum\dispkind=\z@\IFRAME{#2}{#3}{#4}{#7}{#8}{#5}\else
  \ifnum\dispkind=\@ne\DFRAME{#2}{#3}{#7}{#8}{#5}\else
   \ifnum\dispkind=\tw@
    \edef\@tempa{\noexpand\FFRAME{\LaTeXparams}}%
    \@tempa{#2}{#3}{#5}{#6}{#7}{#8}%
    \fi
   \fi
  \fi
  \ifwasdraft\@msidraft=1\else\@msidraft=0\fi{}%
  \egroup
 }%
%

\def\TEXUX#1{"texux"}

%
%
%
%
%
%
%
%
\def\func#1{\mathop{\rm #1}\nolimits}%
%

%
\long\def\QQQ#1#2{%
     \long\expandafter\def\csname#1\endcsname{#2}}%
\@ifundefined{QTP}{\def\QTP#1{}}{}
\@ifundefined{QEXCLUDE}{\def\QEXCLUDE#1{}}{}
\@ifundefined{Qlb}{}{}
\@ifundefined{Qlt}{}{}
\long\def\QQA#1#2{}%
\def\QTR#1#2{{\csname#1\endcsname {#2}}}%
\def\EXPAND#1[#2]#3{}%
\def\NOEXPAND#1[#2]#3{}%
\def\LaTeXparent#1{}%
\def\ChildStyles#1{}%
\def\ChildDefaults#1{}%
\def\QTagDef#1#2#3{}%

\@ifundefined{correctchoice}{}{}
\@ifundefined{HTML}{\def\HTML#1{\relax}}{}
\@ifundefined{TCIIcon}{\def\TCIIcon#1#2#3#4{\relax}}{}
\if@compatibility
  \typeout{Not defining UNICODE  U or CustomNote commands for LaTeX 2.09.}
\else
  \providecommand{\UNICODE}[2][]{\protect\rule{.1in}{.1in}}
  \providecommand{\U}[1]{\protect\rule{.1in}{.1in}}
  
\fi

\@ifundefined{lambdabar}{
      
   }{}

%
\@ifundefined{StyleEditBeginDoc}{}{}
%
\def\QQfnmark#1{\footnotemark}

%
%
\@ifundefined{TCIMAKEINDEX}{}{\makeindex}%
%
\@ifundefined{abstract}{%
 \def\abstract{%
  \if@twocolumn
   \section*{Abstract (Not appropriate in this style!)}%
   \else \small 
   \begin{center}{\bf Abstract\vspace{-.5em}\vspace{\z@}}\end{center}%
   \quotation 
   \fi
  }%
 }{%
 }%
\@ifundefined{endabstract}{\def\endabstract
  {\if@twocolumn\else\endquotation\fi}}{}%
\@ifundefined{maketitle}{\def\maketitle#1{}}{}%
\@ifundefined{affiliation}{\def\affiliation#1{}}{}%
\@ifundefined{proof}{}{}%
\@ifundefined{endproof}{}{}%
\@ifundefined{newfield}{\def\newfield#1#2{}}{}%
\@ifundefined{chapter}{\def\chapter#1{\par(Chapter head:)#1\par }%
 \newcount\c@chapter}{}%
\@ifundefined{part}{\def\part#1{\par(Part head:)#1\par }}{}%
\@ifundefined{section}{\def\section#1{\par(Section head:)#1\par }}{}%
\@ifundefined{subsection}{\def\subsection#1%
 {\par(Subsection head:)#1\par }}{}%
\@ifundefined{subsubsection}{\def\subsubsection#1%
 {\par(Subsubsection head:)#1\par }}{}%
\@ifundefined{paragraph}{\def\paragraph#1%
 {\par(Subsubsubsection head:)#1\par }}{}%
\@ifundefined{subparagraph}{\def\subparagraph#1%
 {\par(Subsubsubsubsection head:)#1\par }}{}%
\@ifundefined{therefore}{}{}%
\@ifundefined{backepsilon}{}{}%
\@ifundefined{yen}{}{}%
\@ifundefined{registered}{%
   \def\registered{\relax\ifmmode{}\r@gistered
                    \else$\m@th\r@gistered$\fi}%
 \def\r@gistered{^{\ooalign
  {\hfil\raise.07ex\hbox{$\scriptstyle\rm\text{R}$}\hfil\crcr
  \mathhexbox20D}}}}{}%
\@ifundefined{Eth}{}{}%
\@ifundefined{eth}{}{}%
\@ifundefined{Thorn}{}{}%
\@ifundefined{thorn}{}{}%
%
\@ifundefined{degree}{}{}%
%
\newdimen\theight
\@ifundefined{Column}{\def\Column{%
 \vadjust{\setbox\z@=\hbox{\scriptsize\quad\quad tcol}%
  \theight=\ht\z@\advance\theight by \dp\z@\advance\theight by \lineskip
  \kern -\theight \vbox to \theight{%
   \rightline{\rlap{\box\z@}}%
   \vss
   }%
  }%
 }}{}%
\@ifundefined{qed}{\def\qed{%
 \ifhmode\unskip\nobreak\fi\ifmmode\ifinner\else\hskip5\p@\fi\fi
 \hbox{\hskip5\p@\vrule width4\p@ height6\p@ depth1.5\p@\hskip\p@}%
 }}{}%
\@ifundefined{cents}{}{}%
\@ifundefined{tciLaplace}{}{}%
\@ifundefined{tciFourier}{}{}%
\@ifundefined{textcurrency}{}{}%
\@ifundefined{texteuro}{}{}%
\@ifundefined{euro}{}{}%
\@ifundefined{textfranc}{}{}%
\@ifundefined{textlira}{}{}%
\@ifundefined{textpeseta}{}{}%
\@ifundefined{miss}{\def\miss{\hbox{\vrule height2\p@ width 2\p@ depth\z@}}}{}%
\@ifundefined{vvert}{}{}
\@ifundefined{tcol}{\def\tcol#1{{\baselineskip=6\p@ \vcenter{#1}} \Column}}{}%
\@ifundefined{dB}{}{}
\@ifundefined{mB}{}{}
\@ifundefined{nB}{}{}
\@ifundefined{note}{}{}%
\def\newfmtname{LaTeX2e}
%
\ifx\fmtname\newfmtname
  \DeclareOldFontCommand{\rm}{\normalfont\rmfamily}{\mathrm}
  \DeclareOldFontCommand{\sf}{\normalfont\sffamily}{\mathsf}
  \DeclareOldFontCommand{\tt}{\normalfont\ttfamily}{\mathtt}
  \DeclareOldFontCommand{\bf}{\normalfont\bfseries}{\mathbf}
  \DeclareOldFontCommand{\it}{\normalfont\itshape}{\mathit}
  \DeclareOldFontCommand{\sl}{\normalfont\slshape}{\@nomath\sl}
  \DeclareOldFontCommand{\sc}{\normalfont\scshape}{\@nomath\sc}
\fi

%

\def\alpha{{\Greekmath 010B}}%
\def\beta{{\Greekmath 010C}}%
\def\gamma{{\Greekmath 010D}}%
\def\delta{{\Greekmath 010E}}%
\def\epsilon{{\Greekmath 010F}}%
\def\zeta{{\Greekmath 0110}}%
\def\eta{{\Greekmath 0111}}%
\def\theta{{\Greekmath 0112}}%
\def\iota{{\Greekmath 0113}}%
\def\kappa{{\Greekmath 0114}}%
\def\lambda{{\Greekmath 0115}}%
\def\mu{{\Greekmath 0116}}%
\def\nu{{\Greekmath 0117}}%
\def\xi{{\Greekmath 0118}}%
\def\pi{{\Greekmath 0119}}%
\def\rho{{\Greekmath 011A}}%
\def\sigma{{\Greekmath 011B}}%
\def\tau{{\Greekmath 011C}}%
\def\upsilon{{\Greekmath 011D}}%
\def\phi{{\Greekmath 011E}}%
\def\chi{{\Greekmath 011F}}%
\def\psi{{\Greekmath 0120}}%
\def\omega{{\Greekmath 0121}}%
\def\varepsilon{{\Greekmath 0122}}%
\def\vartheta{{\Greekmath 0123}}%
\def\varpi{{\Greekmath 0124}}%
\def\varrho{{\Greekmath 0125}}%
\def\varsigma{{\Greekmath 0126}}%
\def\varphi{{\Greekmath 0127}}%

\def\nabla{{\Greekmath 0272}}
\def\FindBoldGroup{%
   {\setbox0=\hbox{$\mathbf{x\global\edef\theboldgroup{\the\mathgroup}}$}}%
}

\def\Greekmath#1#2#3#4{%
    \if@compatibility
        \ifnum\mathgroup=\symbold
           \mathchoice{\mbox{\boldmath$\displaystyle\mathchar"#1#2#3#4$}}%
                      {\mbox{\boldmath$\textstyle\mathchar"#1#2#3#4$}}%
                      {\mbox{\boldmath$\scriptstyle\mathchar"#1#2#3#4$}}%
                      {\mbox{\boldmath$\scriptscriptstyle\mathchar"#1#2#3#4$}}%
        \else
           \mathchar"#1#2#3#4%
        \fi 
    \else 
        \FindBoldGroup
        \ifnum\mathgroup=\theboldgroup 
           \mathchoice{\mbox{\boldmath$\displaystyle\mathchar"#1#2#3#4$}}%
                      {\mbox{\boldmath$\textstyle\mathchar"#1#2#3#4$}}%
                      {\mbox{\boldmath$\scriptstyle\mathchar"#1#2#3#4$}}%
                      {\mbox{\boldmath$\scriptscriptstyle\mathchar"#1#2#3#4$}}%
        \else
           \mathchar"#1#2#3#4%
        \fi     	    
	  \fi}

\newif\ifGreekBold  \GreekBoldfalse
\let\SAVEPBF=\pbf
\def\pbf{\GreekBoldtrue\SAVEPBF}%

\@ifundefined{theorem}{\newtheorem{theorem}{Theorem}}{}
\@ifundefined{lemma}{}{}
\@ifundefined{corollary}{}{}
\@ifundefined{conjecture}{}{}
\@ifundefined{proposition}{}{}
\@ifundefined{axiom}{}{}
\@ifundefined{remark}{\newtheorem{remark}{Remark}}{}
\@ifundefined{example}{}{}
\@ifundefined{exercise}{}{}
\@ifundefined{definition}{}{}

\@ifundefined{mathletters}{%
  \newcounter{equationnumber}  
  \def\mathletters{%
     \addtocounter{equation}{1}
     \edef\@currentlabel{\theequation}%
     \setcounter{equationnumber}{\c@equation}
     \setcounter{equation}{0}%
     \edef\theequation{\@currentlabel\noexpand\alph{equation}}%
  }
  
}{}

\@ifundefined{BibTeX}{%
    \def\BibTeX{{\rm B\kern-.05em{\sc i\kern-.025em b}\kern-.08em
                 T\kern-.1667em\lower.7ex\hbox{E}\kern-.125emX}}}{}%
\@ifundefined{AmS}%
    {\def\AmS{{\protect\usefont{OMS}{cmsy}{m}{n}%
                A\kern-.1667em\lower.5ex\hbox{M}\kern-.125emS}}}{}%
\@ifundefined{AmSTeX}{}{}%
%

\def\@@eqncr{\let\@tempa\relax
    \ifcase\@eqcnt \def\@tempa{& & &}\or \def\@tempa{& &}%
      \else \def\@tempa{&}\fi
     \@tempa
     \if@eqnsw
        \iftag@
           \@taggnum
        \else
           \@eqnnum\stepcounter{equation}%
        \fi
     \fi
     \global\tag@false
     \global\@eqnswtrue
     \global\@eqcnt\z@\cr}

\def\TCItag{\@ifnextchar*{\@TCItagstar}{\@TCItag}}
\def\@TCItag#1{%
    \global\tag@true
    \global\def\@taggnum{(#1)}%
    \global\def\@currentlabel{#1}}
\def\@TCItagstar*#1{%
    \global\tag@true
    \global\def\@taggnum{#1}%
    \global\def\@currentlabel{#1}}
%
%
%
%
%
%
%
%
%
%
%
%
%
%
%
%
%
%
%

\def\tint{\msi@int\textstyle\int}%
\def\tiint{\msi@int\textstyle\iint}%
\def\tiiint{\msi@int\textstyle\iiint}%
\def\tiiiint{\msi@int\textstyle\iiiint}%
\def\tidotsint{\msi@int\textstyle\idotsint}%
\def\toint{\msi@int\textstyle\oint}%

%
%
%
%
%
%
%
%
%
%
%
%
%
%
%

\newtoks\temptoksa
\newtoks\temptoksb
\newtoks\temptoksc

\def\msi@int#1#2{%
 \def\@temp{{#1#2\the\temptoksc_{\the\temptoksa}^{\the\temptoksb}}}%
 \futurelet\@nextcs
 \@int
}

\def\@int{%
   \ifx\@nextcs\limits
      \typeout{Found limits}%
      \temptoksc={\limits}%
	  \let\@next\@intgobble%
   \else\ifx\@nextcs\nolimits
      \typeout{Found nolimits}%
      \temptoksc={\nolimits}%
	  \let\@next\@intgobble%
   \else
      \typeout{Did not find limits or no limits}%
      \temptoksc={}%
      \let\@next\msi@limits%
   \fi\fi
   \@next   
}%

\def\@intgobble#1{%
   \typeout{arg is #1}%
   \msi@limits
}

\def\msi@limits{%
   \temptoksa={}%
   \temptoksb={}%
   \@ifnextchar_{\@limitsa}{\@limitsb}%
}

\def\@limitsa_#1{%
   \temptoksa={#1}%
   \@ifnextchar^{\@limitsc}{\@temp}%
}

\def\@limitsb{%
   \@ifnextchar^{\@limitsc}{\@temp}%
}

\def\@limitsc^#1{%
   \temptoksb={#1}%
   \@ifnextchar_{\@limitsd}{\@temp}%
}

\def\@limitsd_#1{%
   \temptoksa={#1}%
   \@temp
}

\def\dint{\msi@int\displaystyle\int}%
\def\diint{\msi@int\displaystyle\iint}%
\def\diiint{\msi@int\displaystyle\iiint}%
\def\diiiint{\msi@int\displaystyle\iiiint}%
\def\didotsint{\msi@int\displaystyle\idotsint}%
\def\doint{\msi@int\displaystyle\oint}%

\if@compatibility\else
  \RequirePackage{amsmath}
\fi

\def\ExitTCILatex{\makeatother }

\bgroup
\ifx\ds@amstex\relax
   \message{amstex already loaded}\aftergroup\ExitTCILatex
\else
   \@ifpackageloaded{amsmath}%
      {\if@compatibility\message{amsmath already loaded}\fi\aftergroup\ExitTCILatex}
      {}
   \@ifpackageloaded{amstex}%
      {\if@compatibility\message{amstex already loaded}\fi\aftergroup\ExitTCILatex}
      {}
   \@ifpackageloaded{amsgen}%
      {\if@compatibility\message{amsgen already loaded}\fi\aftergroup\ExitTCILatex}
      {}
\fi
\egroup


\typeout{TCILATEX defining AMS-like constructs in LaTeX 2.09 COMPATIBILITY MODE}
%
%
\let\DOTSI\relax
\def\RIfM@{\relax\ifmmode}%
\def\FN@{\futurelet\next}%
\newcount\intno@
\def\iint{\DOTSI\intno@\tw@\FN@\ints@}%
\def\iiint{\DOTSI\intno@\thr@@\FN@\ints@}%
\def\iiiint{\DOTSI\intno@4 \FN@\ints@}%
\def\idotsint{\DOTSI\intno@\z@\FN@\ints@}%
\def\ints@{\findlimits@\ints@@}%
\newif\iflimtoken@
\newif\iflimits@
\def\findlimits@{\limtoken@true\ifx\next\limits\limits@true
 \else\ifx\next\nolimits\limits@false\else
 \limtoken@false\ifx\ilimits@\nolimits\limits@false\else
 \ifinner\limits@false\else\limits@true\fi\fi\fi\fi}%
\def\multint@{\int\ifnum\intno@=\z@\intdots@                          
 \else\intkern@\fi                                                    
 \ifnum\intno@>\tw@\int\intkern@\fi                                   
 \ifnum\intno@>\thr@@\int\intkern@\fi                                 
 \int}
\def\multintlimits@{\intop\ifnum\intno@=\z@\intdots@\else\intkern@\fi
 \ifnum\intno@>\tw@\intop\intkern@\fi
 \ifnum\intno@>\thr@@\intop\intkern@\fi\intop}%
\def\intic@{%
    \mathchoice{\hskip.5em}{\hskip.4em}{\hskip.4em}{\hskip.4em}}%
\def\negintic@{\mathchoice
 {\hskip-.5em}{\hskip-.4em}{\hskip-.4em}{\hskip-.4em}}%
\def\ints@@{\iflimtoken@                                              
 \def\ints@@@{\iflimits@\negintic@
   \mathop{\intic@\multintlimits@}\limits                             
  \else\multint@\nolimits\fi                                          
  \eat@}
 \else                                                                
 \def\ints@@@{\iflimits@\negintic@
  \mathop{\intic@\multintlimits@}\limits\else
  \multint@\nolimits\fi}\fi\ints@@@}%
\def\intkern@{\mathchoice{\!\!\!}{\!\!}{\!\!}{\!\!}}%
\def\plaincdots@{\mathinner{\cdotp\cdotp\cdotp}}%
\def\intdots@{\mathchoice{\plaincdots@}%
 {{\cdotp}\mkern1.5mu{\cdotp}\mkern1.5mu{\cdotp}}%
 {{\cdotp}\mkern1mu{\cdotp}\mkern1mu{\cdotp}}%
 {{\cdotp}\mkern1mu{\cdotp}\mkern1mu{\cdotp}}}%
%
%
%
\def\RIfM@{\relax\protect\ifmmode}
\def\text{\RIfM@\expandafter\text@\else\expandafter\mbox\fi}
\let\nfss@text\text
\def\text@#1{\mathchoice
   {\textdef@\displaystyle\f@size{#1}}%
   {\textdef@\textstyle\tf@size{\firstchoice@false #1}}%
   {\textdef@\textstyle\sf@size{\firstchoice@false #1}}%
   {\textdef@\textstyle \ssf@size{\firstchoice@false #1}}%
   \glb@settings}

\def\textdef@#1#2#3{\hbox{{%
                    \everymath{#1}%
                    \let\f@size#2\selectfont
                    #3}}}
\newif\iffirstchoice@
\firstchoice@true
%
%
\def\Let@{\relax\iffalse{\fi\let\\=\cr\iffalse}\fi}%
\def\vspace@{\def\vspace##1{\crcr\noalign{\vskip##1\relax}}}%
\def\multilimits@{\bgroup\vspace@\Let@
 \baselineskip\fontdimen10 \scriptfont\tw@
 \advance\baselineskip\fontdimen12 \scriptfont\tw@
 \lineskip\thr@@\fontdimen8 \scriptfont\thr@@
 \lineskiplimit\lineskip
 \vbox\bgroup\ialign\bgroup\hfil$\m@th\scriptstyle{##}$\hfil\crcr}%
\def\Sb{_\multilimits@}%
\def\endSb{\crcr\egroup\egroup\egroup}%
\def\Sp{^\multilimits@}%

%
%
%
\newdimen\ex@
\ex@.2326ex
\def\rightarrowfill@#1{$#1\m@th\mathord-\mkern-6mu\cleaders
 \hbox{$#1\mkern-2mu\mathord-\mkern-2mu$}\hfill
 \mkern-6mu\mathord\rightarrow$}%
\def\leftarrowfill@#1{$#1\m@th\mathord\leftarrow\mkern-6mu\cleaders
 \hbox{$#1\mkern-2mu\mathord-\mkern-2mu$}\hfill\mkern-6mu\mathord-$}%
\def\leftrightarrowfill@#1{$#1\m@th\mathord\leftarrow
\mkern-6mu\cleaders
 \hbox{$#1\mkern-2mu\mathord-\mkern-2mu$}\hfill
 \mkern-6mu\mathord\rightarrow$}%
\def\overrightarrow{\mathpalette\overrightarrow@}%
\def\overrightarrow@#1#2{\vbox{\ialign{##\crcr\rightarrowfill@#1\crcr
 \noalign{\kern-\ex@\nointerlineskip}$\m@th\hfil#1#2\hfil$\crcr}}}%

\def\overleftarrow{\mathpalette\overleftarrow@}%
\def\overleftarrow@#1#2{\vbox{\ialign{##\crcr\leftarrowfill@#1\crcr
 \noalign{\kern-\ex@\nointerlineskip}$\m@th\hfil#1#2\hfil$\crcr}}}%
\def\overleftrightarrow{\mathpalette\overleftrightarrow@}%
\def\overleftrightarrow@#1#2{\vbox{\ialign{##\crcr
   \leftrightarrowfill@#1\crcr
 \noalign{\kern-\ex@\nointerlineskip}$\m@th\hfil#1#2\hfil$\crcr}}}%
\def\underrightarrow{\mathpalette\underrightarrow@}%
\def\underrightarrow@#1#2{\vtop{\ialign{##\crcr$\m@th\hfil#1#2\hfil
  $\crcr\noalign{\nointerlineskip}\rightarrowfill@#1\crcr}}}%

\def\underleftarrow{\mathpalette\underleftarrow@}%
\def\underleftarrow@#1#2{\vtop{\ialign{##\crcr$\m@th\hfil#1#2\hfil
  $\crcr\noalign{\nointerlineskip}\leftarrowfill@#1\crcr}}}%
\def\underleftrightarrow{\mathpalette\underleftrightarrow@}%
\def\underleftrightarrow@#1#2{\vtop{\ialign{##\crcr$\m@th
  \hfil#1#2\hfil$\crcr
 \noalign{\nointerlineskip}\leftrightarrowfill@#1\crcr}}}%

\def\qopnamewl@#1{\mathop{\operator@font#1}\nlimits@}
\let\nlimits@\displaylimits
\def\setboxz@h{\setbox\z@\hbox}

\def\varlim@#1#2{\mathop{\vtop{\ialign{##\crcr
 \hfil$#1\m@th\operator@font lim$\hfil\crcr
 \noalign{\nointerlineskip}#2#1\crcr
 \noalign{\nointerlineskip\kern-\ex@}\crcr}}}}

 \def\rightarrowfill@#1{\m@th\setboxz@h{$#1-$}\ht\z@\z@
  $#1\copy\z@\mkern-6mu\cleaders
  \hbox{$#1\mkern-2mu\box\z@\mkern-2mu$}\hfill
  \mkern-6mu\mathord\rightarrow$}
\def\leftarrowfill@#1{\m@th\setboxz@h{$#1-$}\ht\z@\z@
  $#1\mathord\leftarrow\mkern-6mu\cleaders
  \hbox{$#1\mkern-2mu\copy\z@\mkern-2mu$}\hfill
  \mkern-6mu\box\z@$}

\def\projlim{\qopnamewl@{proj\,lim}}
\def\injlim{\qopnamewl@{inj\,lim}}
\def\varinjlim{\mathpalette\varlim@\rightarrowfill@}
\def\varprojlim{\mathpalette\varlim@\leftarrowfill@}
\def\varliminf{\mathpalette\varliminf@{}}
\def\varliminf@#1{\mathop{\underline{\vrule\@depth.2\ex@\@width\z@
   \hbox{$#1\m@th\operator@font lim$}}}}
\def\varlimsup{\mathpalette\varlimsup@{}}
\def\varlimsup@#1{\mathop{\overline
  {\hbox{$#1\m@th\operator@font lim$}}}}

%
%
%
%
%
%
\begingroup \catcode `|=0 \catcode `[= 1
\catcode`]=2 \catcode `\{=12 \catcode `\}=12
\catcode`\\=12 
|gdef|@alignverbatim#1\end{align}[#1|end[align]]
|gdef|@salignverbatim#1\end{align*}[#1|end[align*]]

|gdef|@alignatverbatim#1\end{alignat}[#1|end[alignat]]
|gdef|@salignatverbatim#1\end{alignat*}[#1|end[alignat*]]

|gdef|@xalignatverbatim#1\end{xalignat}[#1|end[xalignat]]
|gdef|@sxalignatverbatim#1\end{xalignat*}[#1|end[xalignat*]]

|gdef|@gatherverbatim#1\end{gather}[#1|end[gather]]
|gdef|@sgatherverbatim#1\end{gather*}[#1|end[gather*]]

|gdef|@gatherverbatim#1\end{gather}[#1|end[gather]]
|gdef|@sgatherverbatim#1\end{gather*}[#1|end[gather*]]

|gdef|@multilineverbatim#1\end{multiline}[#1|end[multiline]]
|gdef|@smultilineverbatim#1\end{multiline*}[#1|end[multiline*]]

|gdef|@arraxverbatim#1\end{arrax}[#1|end[arrax]]
|gdef|@sarraxverbatim#1\end{arrax*}[#1|end[arrax*]]

|gdef|@tabulaxverbatim#1\end{tabulax}[#1|end[tabulax]]
|gdef|@stabulaxverbatim#1\end{tabulax*}[#1|end[tabulax*]]

|endgroup

\def\align{\@verbatim \frenchspacing\@vobeyspaces \@alignverbatim
You are using the "align" environment in a style in which it is not defined.}

\@namedef{align*}{\@verbatim\@salignverbatim
You are using the "align*" environment in a style in which it is not defined.}
\expandafter\let\csname endalign*\endcsname =\endtrivlist

\def\alignat{\@verbatim \frenchspacing\@vobeyspaces \@alignatverbatim
You are using the "alignat" environment in a style in which it is not defined.}

\@namedef{alignat*}{\@verbatim\@salignatverbatim
You are using the "alignat*" environment in a style in which it is not defined.}
\expandafter\let\csname endalignat*\endcsname =\endtrivlist

\def\xalignat{\@verbatim \frenchspacing\@vobeyspaces \@xalignatverbatim
You are using the "xalignat" environment in a style in which it is not defined.}

\@namedef{xalignat*}{\@verbatim\@sxalignatverbatim
You are using the "xalignat*" environment in a style in which it is not defined.}
\expandafter\let\csname endxalignat*\endcsname =\endtrivlist

\def\gather{\@verbatim \frenchspacing\@vobeyspaces \@gatherverbatim
You are using the "gather" environment in a style in which it is not defined.}

\@namedef{gather*}{\@verbatim\@sgatherverbatim
You are using the "gather*" environment in a style in which it is not defined.}
\expandafter\let\csname endgather*\endcsname =\endtrivlist

\def\multiline{\@verbatim \frenchspacing\@vobeyspaces \@multilineverbatim
You are using the "multiline" environment in a style in which it is not defined.}

\@namedef{multiline*}{\@verbatim\@smultilineverbatim
You are using the "multiline*" environment in a style in which it is not defined.}
\expandafter\let\csname endmultiline*\endcsname =\endtrivlist

\def\arrax{\@verbatim \frenchspacing\@vobeyspaces \@arraxverbatim
You are using a type of "array" construct that is only allowed in AmS-LaTeX.}

\def\tabulax{\@verbatim \frenchspacing\@vobeyspaces \@tabulaxverbatim
You are using a type of "tabular" construct that is only allowed in AmS-LaTeX.}

\@namedef{arrax*}{\@verbatim\@sarraxverbatim
You are using a type of "array*" construct that is only allowed in AmS-LaTeX.}
\expandafter\let\csname endarrax*\endcsname =\endtrivlist

\@namedef{tabulax*}{\@verbatim\@stabulaxverbatim
You are using a type of "tabular*" construct that is only allowed in AmS-LaTeX.}
\expandafter\let\csname endtabulax*\endcsname =\endtrivlist


 \def\endequation{%
     \ifmmode\ifinner 
      \iftag@
        \addtocounter{equation}{-1} 
        $\hfil
           \displaywidth\linewidth\@taggnum\egroup \endtrivlist
        \global\tag@false
        \global\@ignoretrue   
      \else
        $\hfil
           \displaywidth\linewidth\@eqnnum\egroup \endtrivlist
        \global\tag@false
        \global\@ignoretrue 
      \fi
     \else   
      \iftag@
        \addtocounter{equation}{-1} 
        \eqno \hbox{\@taggnum}
        \global\tag@false%
        $$\global\@ignoretrue
      \else
        \eqno \hbox{\@eqnnum}
        $$\global\@ignoretrue
      \fi
     \fi\fi
 } 

 \newif\iftag@ \tag@false
 
 \def\TCItag{\@ifnextchar*{\@TCItagstar}{\@TCItag}}
 \def\@TCItag#1{%
     \global\tag@true
     \global\def\@taggnum{(#1)}%
     \global\def\@currentlabel{#1}}
 \def\@TCItagstar*#1{%
     \global\tag@true
     \global\def\@taggnum{#1}%
     \global\def\@currentlabel{#1}}

  \@ifundefined{tag}{
     \def\tag{\@ifnextchar*{\@tagstar}{\@tag}}
     \def\@tag#1{%
         \global\tag@true
         \global\def\@taggnum{(#1)}}
     \def\@tagstar*#1{%
         \global\tag@true
         \global\def\@taggnum{#1}}
  }{}

%
%
%
%
%

\makeatother

\begin{document}

\title{Bifurcation analysis of rotating axially compressed imperfect nano-rod%
}
\author{Teodor M. Atanackovi\'{c}\thanks{
Department of Mechanics, Faculty of Technical Sciences, University of Novi
Sad, Trg D. Obradovi\'{c}a 6, 21000 Novi Sad, Serbia, atanackovic@uns.ac.rs}%
, Ljubica Oparnica\thanks{
Faculty of Education in Sombor, University of Novi Sad, Podgori\v{c}ka 4,
25000 Sombor, Serbia, ljubica.oparnica@pef.uns.ac.rs}, Du\v{s}an Zorica%
\thanks{
Mathematical Institute, Serbian Academy of Arts and Sciences, Kneza Mihaila
36, 11000 Belgrade, Serbia, dusan\textunderscore zorica@mi.sanu.ac.rs and
Department of Physics, Faculty of Sciences, University of Novi Sad, Trg D.
Obradovi\'{c}a 4, 21000 Novi Sad, Serbia}}
\maketitle

\vspace{-0.55cm}
\begin{abstract}
\noindent Static stability problem for axially compressed rotating nano-rod
clamped at one and free at the other end is analyzed by the use of
bifurcation theory. It is obtained that the pitchfork bifurcation may be
either super- or sub-critical. Considering the imperfections in rod's shape
and loading, it is proved that they constitute the two-parameter universal
unfolding of the problem. 
Numerical analysis also revealed that for non-locality parameters having higher value than the critical one 
interaction curves have two branches, so that for a single critical value of angular velocity 
there exist two critical values of horizontal force.

\noindent \textbf{Keywords:} rotating nano-rod, critical load parameters,
Lyapunov-Schmidt reduction, two-parameter universal unfolding.
\end{abstract}

\section{Introduction and problem formulation}

The problem of static stability of cantilevered rotating axially compressed
rod displaying non-local effects is studied through the bifurcation theory,
extending the results presented in \cite{AZ-1}, where the Euler method of
adjacent equilibrium configuration is used to obtain critical values of the
angular velocity and intensity of the horizontal axial force acting on the
tip of rod's free end. The obtained critical values are shown to represent
the bifurcation points by using the Crandall-Rabinowitz theorem. Further,
the Lyapunov-Schmidt reduction method is applied in order to obtain
bifurcation equation corresponding to the non-linear equilibrium equations
of rotating compressed rod and it is shown that the problem admits pitchfork
bifurcation. Imperfections in shape, represented by the existence of a small
initial deformation of the rod, and imperfections in loading, represented by
the existence of a force of small intensity acting perpendicularly to rod's
axis on the tip of the rod, are also taken into account and it is proved
that the selected imperfections constitute the universal unfolding of the
problem. Moreover, the results presented in \cite{AZ-1} are extended by
finding the degenerate odd buckling modes for high values of non-locality
parameter. Considering the non-locality effects, included through the stress
gradient Eringen moment-curvature constitutive relation, the results of \cite%
{Ata}, where the same problem is analyzed in the case of Bernoulli-Euler
constitutive equation, are extended as well.

The buckling problem of a rotating compressed rod, described by the elastic
moment-curvature constitutive equation, is considered in \cite%
{Ata3,Bodnar,Lakin,CYWang}, while in \cite{Ata1,Ata2} the rod is allowed to
have variable cross section and extensible axis, and in \cite{Varadi} there
are additional rigid bodies attached to the rod. Static stability problem of
a non-local rotating compressed rod, described by the Eringen stress
gradient constitutive model, is studied in \cite{ANVZ,AZ-1} for the
clamped-clamped and clamped-free rod, while in \cite{Aranda} a non-local
clamped-free rod rotating about the axis perpendicular to rod's axis is
considered. The application of non-local theory in the static and dynamic
stability problems of different types of rods is quite extensive, see the
review articles \cite{ArashWang,EltaherKhaterEmam,ThaiVoNguyenKim} and book 
\cite{Elishakoff}.

Consider a rectangular Cartesian coordinate system $xOy$ forming a plane $%
\Pi $ that rotates about the $x$-axis with the constant angular velocity $%
\omega .$ Placed in its undeformed state in plane $\Pi ,$ an inextensible
rod of length $L$ and initial curvature $R_{0},$ changing along the rod, is
fixed in the origin of a coordinate system at one of its ends, while its
other end is free. Being in the relative equilibrium in plane $\Pi ,$ the
rod rotates and under the influence of inertial force it may lose its
stability and attain the relative equilibrium in the bent configuration, as
shown in Figure \ref{fig-1}. 
\begin{figure}[tbph]
\begin{center}
\includegraphics[width=0.525\columnwidth]{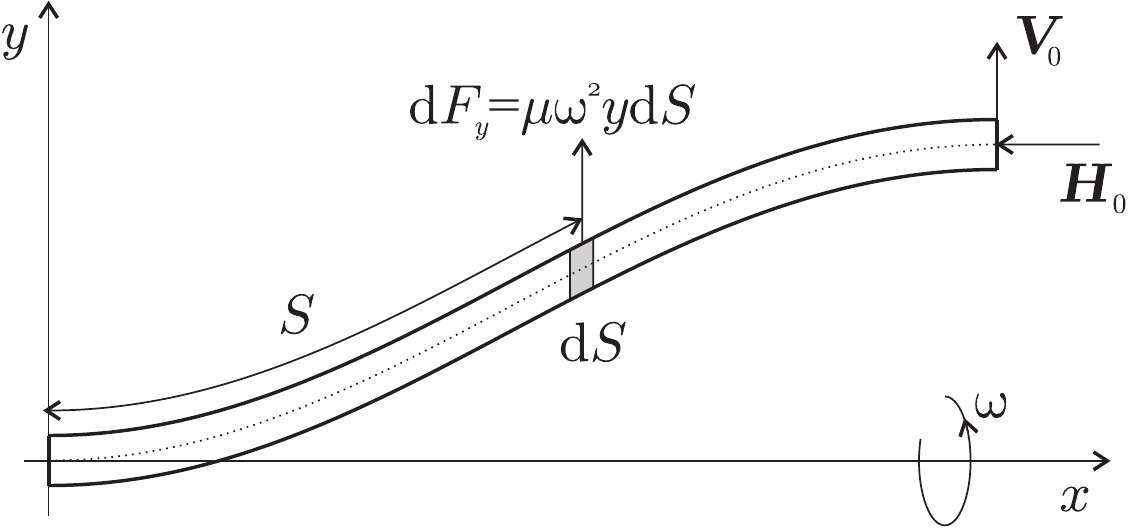}
\end{center}
\caption{Rotating axially compressed imperfect nano-rod.}
\label{fig-1}
\end{figure}

Differential equations and geometrical relations describing the relative
equilibrium in plane $\Pi $ are:%
\begin{gather}
H^{\prime }=0,\;\;\;\;V^{\prime }=-\mu \omega ^{2}y,\;\;\;\;M^{\prime
}=-V\cos \theta +H\sin \theta ,  \label{1a} \\
x^{\prime }=\cos \theta ,\;\;\;\;y^{\prime }=\sin \theta ,  \label{1b}
\end{gather}%
see \cite{a-ster}, where $H,$ $V,$ $M,$ $x,$ $y$ and $\theta $ are functions
of rod's arc length $S\in \left[ 0,L\right] $ and $\left( \cdot \right)
^{\prime }=\frac{\mathrm{d}}{\mathrm{d}S}\left( \cdot \right) ,$ with $H$
and $V$ being components of the contact force in an arbitrary cross-section
along $x$- and $y$-axis respectively, $M$ being the bending moment, $x$ and $%
y$ denoting the coordinates of an arbitrary point of a rod, and $\theta $
denoting the angle between the $x$-axis and tangent to rod's axis, while the
constant mass density per unit length of the rod is denoted by $\mu .$

The rod is assumed to display non-local effects and the moment-curvature
constitutive equation is assumed in the form of the Eringen stress-gradient
type model of non-locality as 
\begin{equation}
M-\ell ^{2}M^{\prime \prime }=EI\,\left( \frac{1}{R}-\frac{1}{R_{0}}\right)
,\;\;\text{with}\;\;\frac{1}{R}=\theta ^{\prime }=\frac{y^{\prime \prime }}{%
\sqrt{1-\left( y^{\prime }\right) ^{2}}},\text{ and }\frac{1}{R_{0}}=\frac{%
y_{0}^{\prime \prime }}{\sqrt{1-\left( y_{0}^{\prime }\right) ^{2}}},
\label{1c}
\end{equation}%
where $\frac{1}{R}$ and $\frac{1}{R_{0}}$ are the curvatures at equilibrium
and initial configuration as functions of arc-length $S$, while the
constants are: modulus of elasticity $E,$ moment of inertia of cross-section 
$I,$ and length-scale parameter $\ell .$ More on the Eringen type
stress-gradient constitutive equations can be found in \cite{eringen}.

System of equations (\ref{1a}) - (\ref{1c}) is subject to boundary conditions%
\begin{equation}
x\left( 0\right) =0,\;\;y\left( 0\right) =0,\;\;\theta (0)=0,\;\;H\left(
L\right) =-H_{0},\;\;V\left( L\right) =V_{0},\;\;M\left( L\right) =0,
\label{1d}
\end{equation}%
corresponding to the configuration shown in Figure \ref{fig-1}. Note that (%
\ref{1a})$_{1}$ and (\ref{1d})$_{4}$ imply $H\left( S\right) =-H_{0}.$

Dimensionless variables and parameters 
\begin{gather*}
t=\frac{S}{L},\;\;\bar{x}=\frac{x}{L},\;\;\bar{y}=\frac{y}{L},\;\;\bar{R}=%
\frac{R}{L},\;\;\bar{R}_{0}=\frac{R_{0}}{L},\;\;v=\frac{VL^{2}}{EI},\;\;m=%
\frac{ML}{EI},\;\;\kappa =\left( \frac{\ell }{L}\right) ^{2}, \\
\lambda _{1}=\frac{\mu \omega ^{2}L^{4}}{EI},\;\;\lambda _{2}=\frac{%
H_{0}L^{2}}{EI},\;\;\alpha _{1}=\frac{1}{\left\Vert R_{0}\right\Vert
_{L^{\infty }\left[ 0,1\right] }},\;\;\alpha _{2}=\frac{V_{0}L^{2}}{EI},
\end{gather*}%
where $\bar{R}_{0}\left( t\right) =\left\Vert \bar{R}_{0}\right\Vert
_{L^{\infty }\left[ 0,1\right] }\rho _{0}\left( t\right) ,$ with $\left\Vert 
\bar{R}_{0}\right\Vert _{L^{\infty }\left[ 0,1\right] }=\sup_{t\in \left[ 0,1%
\right] }\left\vert \bar{R}_{0}\left( t\right) \right\vert $ and $\rho
_{0}\left( t\right) =\frac{\bar{R}_{0}\left( t\right) }{\left\Vert \bar{R}%
_{0}\right\Vert _{L^{\infty }\left[ 0,1\right] }},$ after omitting bars,
transform system of equations (\ref{1a}) - (\ref{1c}), subject to (\ref{1d}%
), into%
\begin{gather}
\dot{v}=-\lambda _{1}y,\;\;\dot{m}=-v\cos \theta -\lambda _{2}\sin \theta
,\;\;\dot{x}=\cos \theta ,\;\;\dot{y}=\sin \theta ,\;\;m-\kappa \ddot{m}=%
\dot{\theta}-\frac{\alpha _{1}}{\rho _{0}},  \label{6} \\
x(0)=0,\;\;y(0)=0,\;\;\theta (0)=0,\;\;v(1)=\alpha _{2},\;\;m(1)=0,
\label{7}
\end{gather}%
where $\left( \cdot \right) ^{\cdot }=\frac{\mathrm{d}}{\mathrm{d}t}\left(
\cdot \right) $ and $\dot{\theta}=\frac{\ddot{y}}{\sqrt{1-\dot{y}^{2}}}.$

Parameters $\lambda _{1}$ and $\lambda _{2},$ corresponding to the angular
velocity and intensity of the horizontal force, are considered as load
parameters, while parameters $\alpha _{1}$ and $\alpha _{2},$ corresponding
to the maximal value of rod's initial curvature and intensity of the
vertical force, are considered as imperfections in shape and loading. It is
obvious from the governing system of equations (\ref{6}), subject to
boundary conditions (\ref{7}), that, for all real values of load parameters
and zero values of imperfection parameters, it admits the trivial solution%
\begin{equation*}
x_{0}=S,\;\;y_{0}=0,\;\;\theta _{0}=0,\;\;v_{0}=0,\;\;m_{0}=0.
\end{equation*}%
The critical values of load parameters $\lambda _{01}$ and $\lambda
_{02}=\lambda _{02}\left( \lambda _{01}\right) $ are found in \cite{AZ-1}
using the Euler method of adjacent equilibrium configuration, i.e., by
solving for the non-trivial solutions 
the linearized system of equations (\ref{6}), subject to (\ref{7}), with $\alpha_1=0$ and $\alpha_2=0.$ 
The present analysis will show that in the
neighborhood of critical loading values there also exists the non-trivial
solution to non-linear system of equations (\ref{6}), subject to (\ref{7}),
bifurcating from the trivial solution at the critical loading value. The
stability problem for perfect rod, i.e., initially straight rod without
vertical force acting on its tip (zero values of imperfection parameters),
will be studied in Section \ref{perfect}, while in Section \ref{imperfect} 
the study will focus on the stability problem for imperfect rod, i.e., rod having small initial
deformation with vertical force of small intensity acting on its tip
(non-zero values of imperfection parameters). 
Section \ref{calc} is devoted to numerical analysis of
the interaction curve equation, mode shapes, bifurcation equation for
perfect and imperfect rod.

\section{Bifurcation points for perfect rod \label{perfect}}

The static stability problem is considered for the perfect rod, i.e.,
rotating rod without initial deformation, loaded by the horizontal axial
force acting at its tip. The system of equations describing the equilibrium
of perfect rod is (\ref{6}), subject to boundary conditions (\ref{7}), with $%
\alpha _{1}=\alpha _{2}=0$ and it reads%
\begin{gather}
\dot{v}=-\lambda _{1}y,\;\;\dot{m}=-v\sqrt{1-\dot{y}^{2}}-\lambda _{2}\dot{y}%
,\;\;m-\kappa \ddot{m}=\frac{\ddot{y}}{\sqrt{1-\dot{y}^{2}}},  \label{9} \\
y(0)=0,\;\;\dot{y}(0)=0,\;\;v(1)=0,\;\;m(1)=0.  \label{10}
\end{gather}%
System of equations (\ref{9}), subject to (\ref{10}), can be reduced to a
single equation, represented by the action of a non-linear operator on
deflection $y$ equated with zero. The operator is obtained either as the
integro-differential operator of the second order, or as the differential
operator of the fourth order. In both cases, (\ref{9})$_{2}$ is
differentiated, (\ref{9})$_{1}$ is used, and such obtained expression is
substituted into (\ref{9})$_{3}$ yielding 
\begin{equation}
m=\left( 1+\kappa v\dot{y}-\kappa \lambda _{2}\sqrt{1-\dot{y}^{2}}\right) 
\frac{\ddot{y}}{\sqrt{1-\dot{y}^{2}}}+\kappa \lambda _{1}y\sqrt{1-\dot{y}^{2}%
}.  \label{11}
\end{equation}

The equation 
\begin{equation}
M^{2}\left( \lambda ,y\right) =0,\;\;\lambda \in \mathbb{R}^{2},\;y\in
C^{k}\left( \left[ 0,1\right] \right) ,\text{ }k\geq 2,  \label{EqM}
\end{equation}%
with the operator $M^{2},$ defined by 
\begin{equation}
M^{2}\left( \lambda ,y\right) :=\ddot{y}-\sqrt{1-\dot{y}^{2}}\frac{\lambda
_{1}\left( J_{2}y-\kappa y\sqrt{1-\dot{y}^{2}}\right) +\lambda _{2}I_{1}\dot{%
y}}{1+\kappa \lambda _{1}\dot{y}\left( I_{1}y\right) -\kappa \lambda _{2}%
\sqrt{1-\dot{y}^{2}}},  \label{opM}
\end{equation}%
where, for $z\in L^{1}\left[ 0,1\right] ,$%
\begin{equation*}
J_{2}z\left( t\right) :=\int_{t}^{1}\int_{\tau }^{1}z(\eta )\sqrt{1-\dot{z}%
^{2}\left( \tau \right) }\mathrm{d}\eta \,\mathrm{d}\tau \;\;\text{and}%
\;\;I_{1}z\left( t\right) :=\int_{t}^{1}z\left( \tau \right) \mathrm{d}\tau ,
\end{equation*}%
is obtained by integrating (\ref{9})$_{1}$ and (\ref{9})$_{2},$ taking into
account (\ref{10})$_{3}$ and (\ref{10})$_{4}$ and by substituting such
obtained expressions into (\ref{11}). Note that $M^{2}:\mathbb{R}^{2}\times
C^{k}\left( \left[ 0,1\right] \right) \rightarrow C^{k-2}\left( \left[ 0,1%
\right] \right) ,$ $k\geq 2.$ The equation (\ref{EqM}) is subject to
boundary conditions (\ref{10})$_{1}$ and (\ref{10})$_{2},$ i.e.,%
\begin{equation}
BC^{2}=\left\{ y:y\left( 0\right) =0,\;\;\dot{y}\left( 0\right) =0\right\} .
\label{bc2}
\end{equation}

The equation%
\begin{equation}
M^{4}\left( \lambda ,y\right) =0,\;\;\lambda \in \mathbb{R}^{2},\;y\in
C^{k}\left( \left[ 0,1\right] \right) ,\text{ }k\geq 4,  \label{EqM4}
\end{equation}%
with the operator $M^{4},$ defined by 
\begin{equation}
M^{4}\left( \lambda ,y\right) :=\left( \frac{\Big(\frac{\ddot{y}}{\sqrt{1-%
\dot{y}^{2}}}\Big)^{\boldsymbol{\cdot }}\left( 1-\kappa \lambda _{2}\sqrt{1-%
\dot{y}^{2}}\right) +\kappa \lambda _{1}\dot{y}\left( \sqrt{1-\dot{y}^{2}}-%
\frac{2y\ddot{y}}{\sqrt{1-\dot{y}^{2}}}\right) +\lambda _{2}\dot{y}\left(
1+\kappa \Big(\frac{\ddot{y}}{\sqrt{1-\dot{y}^{2}}}\Big)^{2}\right) }{\kappa 
\dot{y}\Big(\frac{\ddot{y}}{\sqrt{1-\dot{y}^{2}}}\Big)^{\boldsymbol{\cdot }%
}+\left( 1+\kappa \Big(\frac{\ddot{y}}{\sqrt{1-\dot{y}^{2}}}\Big)^{2}\right) 
\sqrt{1-\dot{y}^{2}}}\right) ^{\boldsymbol{\cdot }}-\lambda _{1}y,
\label{opM4}
\end{equation}%
is obtained directly from (\ref{9})$_{1},$ since the term in brackets is $v,$
obtained by differentiating (\ref{11}), with the subsequent use of (\ref{9})$%
_{2}.$ Note $M^{4}:\mathbb{R}^{2}\times C^{k}\left( \left[ 0,1\right]
\right) \rightarrow C^{k-4}\left( \left[ 0,1\right] \right) ,$ $k\geq 4.$
The equation (\ref{EqM4}) is subject to boundary conditions%
\begin{eqnarray}
BC^{4} &=&\Bigg\{y:y\left( 0\right) =0,\;\;\dot{y}\left( 0\right) =0,  \notag
\\
&&\left( 1-\kappa \lambda _{2}\sqrt{1-\dot{y}^{2}\left( 1\right) }\right) 
\frac{\ddot{y}\left( 1\right) }{\sqrt{1-\dot{y}^{2}\left( 1\right) }}+\kappa
\lambda _{1}y(1)\sqrt{1-\dot{y}^{2}\left( 1\right) }%
\begin{tabular}{l}
=%
\end{tabular}%
0,  \notag \\
&&\left( 1-\kappa \lambda _{2}\sqrt{1-\dot{y}^{2}\left( 1\right) }\right)
\left. \left( \frac{\ddot{y}\left( t\right) }{\sqrt{1-\dot{y}^{2}\left(
t\right) }}\right) ^{\boldsymbol{\cdot }}\right\vert _{t=1}  \notag \\
&&+\left( \kappa \lambda _{1}\left( \sqrt{1-\dot{y}^{2}\left( 1\right) }%
-2y\left( 1\right) \frac{\ddot{y}\left( 1\right) }{\sqrt{1-\dot{y}^{2}\left(
1\right) }}\right) +\lambda _{2}\left( 1+\kappa \left( \frac{\ddot{y}\left(
1\right) }{\sqrt{1-\dot{y}^{2}\left( 1\right) }}\right) ^{2}\right) \right) 
\dot{y}(1)%
\begin{tabular}{l}
=%
\end{tabular}%
0\Bigg\},  \label{bc4}
\end{eqnarray}%
where the first two boundary conditions are (\ref{10})$_{1}$ and (\ref{10})$%
_{2},$ while the third boundary condition is (\ref{10})$_{4},$ with (\ref{11}%
) calculated at $t=1$ and the fourth boundary condition is (\ref{10})$_{3},$
with the nominator of the term in brackets in (\ref{opM4}) calculated at $%
t=1.$

Equations (\ref{EqM}), subject to (\ref{bc2}), and (\ref{EqM4}), subject to (%
\ref{bc4}), are equivalent. The focus is on finding bifurcation points to
problem (\ref{EqM}), (\ref{bc2}) (or equivalently to (\ref{EqM4}), (\ref{bc4}%
)). It is easy to verify that for all $\lambda \in \mathbb{R}^{2}$ there is
a solution curve of (\ref{EqM}), (\ref{bc2}) (and of (\ref{EqM4}), (\ref{bc4}%
)), through $\left( \lambda ,0\right) $ and the critical value $\lambda _{0}$
for which there are other solution curves in neighborhood $U\times V\subset 
\mathbb{R}^{2}\times C^{k}\left( \left[ 0,1\right] \right) ,$ $k\geq 2,$ of $%
\left( \lambda _{0},0\right) $ for problem (\ref{EqM}), (\ref{bc2}) (or in
neighborhood $U\times V\subset \mathbb{R}^{2}\times C^{k}\left( \left[ 0,1%
\right] \right) ,$ $k\geq 4,$ of $\left( \lambda _{0},0\right) $ for problem
(\ref{EqM4}), (\ref{bc4})) are sought for. A necessary condition for $%
\lambda _{0}$ to be critical value is the failure of implicit function
theorem, see e.g. \cite[Theorem I.1.1]{Kielhofer}, i.e., that 
\begin{equation}
D_{y}M^{j}\left( \lambda _{0},0\right) :C^{k}\left( \left[ 0,1\right]
\right) \mapsto C^{k-j}\left( \left[ 0,1\right] \right) \;\text{is not
bijective,}  \label{uslov}
\end{equation}%
with $j\in \left\{ 2,4\right\} $ and $k\geq 2$ for $j=2$ and $k\geq 4$ for $%
j=4,$ where $D_{y}$ denotes the Fr\'{e}chet derivative. The Fr\'{e}chet
derivatives of $M^{2}$, $M^{4}$ and $BC^{4}$ at $\left( \lambda ,0\right) $
are calculated as%
\begin{eqnarray}
&&\!\!\!\!\!\!\!\!\!\!\!\!\!\!\!\!\!\!L^{2}\left( \lambda \right) y\!\!%
\begin{tabular}{l}
:=%
\end{tabular}%
\!\!D_{y}M^{2}\left( \lambda ,0\right) y%
\begin{tabular}{l}
=%
\end{tabular}%
\ddot{y}-\frac{\lambda _{1}}{1-\kappa \lambda _{2}}\left( I_{2}y-\kappa
y\right) -\frac{\lambda _{2}}{1-\kappa \lambda _{2}}I_{1}\dot{y}  \label{opL}
\\
&&\!\!\!\!\!\!\!\!\!\!\!\!\!\!\!\!\!\!L^{4}\left( \lambda \right) y\!\!%
\begin{tabular}{l}
:=%
\end{tabular}%
\!\!D_{y}M^{4}\left( \lambda ,0\right) y%
\begin{tabular}{l}
=%
\end{tabular}%
y^{\mathrm{IV}}(t)+\frac{\kappa \lambda _{1}+\lambda _{2}}{1-\kappa \lambda
_{2}}\ddot{y}\left( t\right) -\frac{\lambda _{1}}{1-\kappa \lambda _{2}}y(t),
\label{opL4} \\
&&\!\!\!\!\!\!\!\!\!\!\!\!\!\!\!\!\!\!LBC\!\!%
\begin{tabular}{l}
=%
\end{tabular}%
\!\!\left\{ y\left( 0\right) =0,\;\dot{y}\left( 0\right) =0,\;\ddot{y}\left(
1\right) \left( 1-\kappa \lambda _{2}\right) +\kappa \lambda _{1}y(1)=0,\;y^{%
\mathrm{III}}\left( 1\right) \left( 1-\kappa \lambda _{2}\right) +\left(
\kappa \lambda _{1}+\lambda _{2}\right) \dot{y}(1)=0\right\} ,  \label{LBC}
\end{eqnarray}%
where%
\begin{equation*}
I_{2}z\left( t\right) :=\int_{t}^{1}\int_{\tau }^{1}z\left( \eta \right) 
\mathrm{d}\eta \mathrm{d}\tau .
\end{equation*}

Finding $\lambda _{0}$ such that (\ref{uslov}) holds is equivalent to
finding $\lambda _{0}$ for which kernel of the operator $L^{2}\left( \lambda
_{0}\right) $ (or $L^{4}\left( \lambda _{0}\right) $) is nontrivial (do not
consists of $y=0$ only). For fixed $\lambda $, one finds kernel of the
operator $L^{2}\left( \lambda \right) $ (or $L^{4}\left( \lambda \right) $)
by solving the equation 
\begin{equation}
L^{j}\left( \lambda \right) y=0,\;\;y\in Y^{j},\;j\in \left\{ 2,4\right\} ,
\label{EqL}
\end{equation}%
where%
\begin{equation*}
Y^{2}=\left\{ y:y\in C^{k}\left( \left[ 0,1\right] \right) ,\;k\geq
2\right\} \cap BC^{2}\;\;\text{and}\;\;Y^{4}=\left\{ y:y\in C^{k}\left( %
\left[ 0,1\right] \right) ,\;k\geq 4\right\} \cap LBC,
\end{equation*}%
where $BC^{2}$ and $LBC$ are given by (\ref{bc2}) and (\ref{LBC}). Note that 
$Y^{2}$ and $Y^{4}$ are Hilbert spaces with usual scalar product $%
\left\langle y,q\right\rangle =\int_{0}^{1}y\left( t\right) q\left( t\right) 
\mathrm{d}t.$

The problems (\ref{EqL}) for $j=2$ and (\ref{EqL}) for $j=4$ are equivalent.
Indeed, $\frac{\mathrm{d}^{2}}{\mathrm{d}t^{2}}\left( L^{2}\left( \lambda
\right) y\right) =L^{4}\left( \lambda \right) y,$ with boundary conditions (%
\ref{LBC}) obtained for $L^{2}\left( \lambda \right) y\left( 1\right) =0$
and $\left. \frac{\mathrm{d}}{\mathrm{d}t}\left( L^{2}\left( \lambda \right)
y\left( t\right) \right) \right\vert _{t=1}=0,$ while $I_{2}\left(
L^{4}\left( \lambda \right) y\right) =L^{2}\left( \lambda \right) y$ is
obtained by integration of (\ref{opL4}) and use of the boundary conditions (%
\ref{LBC}).

The problem (\ref{EqL}) for $j=4$ is considered in \cite{AZ-1}. The critical
value $\lambda _{0}=\left( \text{$\lambda _{01},\lambda _{02}$}\right) $ is
obtained from the condition of existence of nontrivial solution $y$ to
problem (\ref{EqL}), $j=4$, which requires that the determinant arising from
boundary conditions (\ref{LBC}) is equal to zero, i.e., as a solution of 
\begin{eqnarray}
&&f\left( \text{$\lambda _{1},\lambda _{2}$}\right) =\sqrt{\frac{\text{$%
\lambda _{1}$}}{1-\kappa \text{$\lambda _{2}$}}}\Bigg(2\text{$\lambda _{1}$}%
+\kappa \text{$\lambda _{1}$}\left( \kappa \text{$\lambda _{1}$}-\text{$%
\lambda _{2}$}\right) +\left( 2\text{$\lambda _{1}$}+\text{$\lambda _{2}^{2}$%
}-\kappa \text{$\lambda _{1}\lambda _{2}$}\right) \cos \left( r_{1}\left( 
\text{$\lambda _{1},\lambda _{2}$}\right) \right) \,\cosh \left( r_{2}\left( 
\text{$\lambda _{1},\lambda _{2}$}\right) \right)  \notag \\
&&\qquad \qquad \qquad \qquad -\sqrt{\frac{\text{$\lambda _{1}$}}{1-\kappa 
\text{$\lambda _{2}$}}}\left( \text{$\lambda _{2}$}-\kappa \left( \text{$%
\lambda _{1}$}-\kappa \text{$\lambda _{1}\lambda _{2}$}+\text{$\lambda
_{2}^{2}$}\right) \right) \sin \left( r_{1}\left( \text{$\lambda
_{1},\lambda _{2}$}\right) \right) \,\sinh \left( r_{2}\left( \text{$\lambda
_{1},\lambda _{2}$}\right) \right) \Bigg)=0,  \label{frekventna}
\end{eqnarray}%
where%
\begin{eqnarray}
r_{1}\left( \text{$\lambda _{1},\lambda _{2}$}\right) &=&\sqrt{\sqrt{\frac{%
\lambda _{1}}{1-\kappa \lambda _{2}}+\left( \frac{1}{2}\frac{\kappa \lambda
_{1}+\lambda _{2}}{1-\kappa \lambda _{2}}\right) ^{2}}+\frac{1}{2}\frac{%
\kappa \lambda _{1}+\lambda _{2}}{1-\kappa \lambda _{2}}},
\label{frekventna1} \\
r_{2}\left( \text{$\lambda _{1},\lambda _{2}$}\right) &=&\sqrt{\sqrt{\frac{%
\lambda _{1}}{1-\kappa \lambda _{2}}+\left( \frac{1}{2}\frac{\kappa \lambda
_{1}+\lambda _{2}}{1-\kappa \lambda _{2}}\right) ^{2}}-\frac{1}{2}\frac{%
\kappa \lambda _{1}+\lambda _{2}}{1-\kappa \lambda _{2}}}.
\label{frekventna2}
\end{eqnarray}

By the implicit function theorem, since $f\left( \text{$\lambda
_{01},\lambda _{02}$}\right) =0$ and $\left. \frac{\partial f\left( \text{$%
\lambda _{1},\lambda _{2}$}\right) }{\partial \lambda _{2}}\right\vert
_{\left( \text{$\lambda _{1},\lambda _{2}$}\right) =\left( \lambda
_{01},\lambda _{02}\right) }\neq 0,$ in the neighborhood of $\lambda
_{01},\lambda _{02},$ i.e., for $\lambda _{1}=\lambda _{01}+\Delta \lambda
_{1}$ and $\lambda _{2}=\lambda _{02}+\Delta \lambda _{2},$ equation (\ref%
{frekventna}) is solved with respect to $\lambda _{2},$ i.e., there exists a
unique differentiable function $\eta ,$ such that $\lambda _{2}=\eta \left( 
\text{$\lambda _{1}$}\right) $ and 
\begin{equation}
f\left( \text{$\lambda _{1},$}\eta \left( \text{$\lambda _{1}$}\right)
\right) =0\;\;\text{and}\;\;\eta ^{\prime }\left( \text{$\lambda _{1}$}%
\right) =\frac{\mathrm{d}\eta \left( \text{$\lambda _{1}$}\right) }{\mathrm{d%
}\text{$\lambda _{1}$}}=-\left. \frac{\frac{\partial f\left( \text{$\lambda
_{1},\lambda _{2}$}\right) }{\partial \lambda _{1}}}{\frac{\partial f\left( 
\text{$\lambda _{1},\lambda _{2}$}\right) }{\partial \lambda _{2}}}%
\right\vert _{\left( \text{$\lambda _{1},\lambda _{2}$}\right) =\left( \text{%
$\lambda _{1},$}\eta \left( \text{$\lambda _{1}$}\right) \right) }.
\label{eta prim}
\end{equation}%
Then also 
\begin{equation}
\Delta \lambda _{2}=\eta ^{\prime }\left( \text{$\lambda _{01}$}\right)
\Delta \lambda _{1}.  \label{delta lambda}
\end{equation}

Nontrivial solution to (\ref{EqL}), $j=4$, corresponding to $\lambda
_{0}=\left( \lambda _{01},\lambda _{02}\right) $ reads: 
\begin{equation}
y_{l}\left( t\right) =C\left( \cos \left( r_{01}t\right) -\cosh \left(
r_{02}t\right) -D\left( r_{01},r_{02}\right) \left( \sin \left(
r_{01}t\right) -\frac{r_{01}}{r_{02}}\sinh \left( r_{02}t\right) \right)
\right) ,  \label{Linearno}
\end{equation}%
where $C$ is an arbitrary constant and $D$ is a constant given by 
\begin{equation*}
D\left( r_{01},r_{02}\right) =\frac{r_{01}^{2}\cos r_{01}+r_{02}^{2}\cosh
r_{02}+\frac{\kappa \text{$\lambda _{01}$}}{1-\kappa \text{$\lambda _{02}$}}%
\left( \cosh r_{02}-\cos r_{01}\right) }{r_{01}^{2}\sin
r_{01}+r_{01}r_{02}\sinh r_{02}+\frac{\kappa \text{$\lambda _{01}$}}{%
1-\kappa \text{$\lambda _{02}$}}\left( \frac{r_{01}}{r_{02}}\sinh
r_{02}-\sin r_{01}\right) },
\end{equation*}%
where parameters $r_{01}$ and $r_{02}$ are calculated from (\ref{frekventna1}%
) and (\ref{frekventna2}) for $\lambda _{0}$.

The kernel of operator $L^{j}\left( \lambda _{0}\right) ,$ $j\in \left\{
2,4\right\} ,$ is one-dimensional space, i.e., 
\begin{equation}
\dim N\left( L^{j}\left( \lambda _{0}\right) \right) =1,\;\;j\in \left\{
2,4\right\} ,  \label{dimN}
\end{equation}%
since $N\left( L^{j}\left( \lambda _{0}\right) \right) =\func{span}%
[y_{L}]=\{ay_{L};\,a\in \mathbb{R}\},$ $j\in \left\{ 2,4\right\} ,$ where
the normalized solution (\ref{Linearno}) is denoted by $y_{L}$, i.e. the
solution with constant $C$ chosen such that $\left\Vert y_{L}\right\Vert
_{Y^{j}}=1.$

Orthogonal complement of the range of $L^{j}\left( \lambda _{0}\right) $ is
a kernel of the formal adjoint $L^{j\ast }\left( \lambda _{0}\right) $ of
operator $L^{j}\left( \lambda _{0}\right) $, where the formal adjoint of an
operator $L^{j}:Y^{j}\rightarrow Z^{j}$ is defined as an operator $L^{j\ast
}:Z^{j}\rightarrow Y^{j},$ such that for all $y\in Y^{j}$ and all $q\in
Z^{j} $ equality $\langle L^{j\ast }q,y\rangle _{Y^{j}}=\langle
L^{j}y,q\rangle _{Z^{j}}$ holds, where $Z^{j}=C^{k-j}\left( \left[ 0,1\right]
\right) ,$ $j\in \left\{ 2,4\right\} .$ Straightforward calculation gives%
\begin{eqnarray*}
&&L^{2\ast }\left( \lambda \right) q%
\begin{tabular}{l}
=%
\end{tabular}%
\ddot{y}-\frac{\lambda _{01}}{1-\kappa \lambda _{02}}\left( I_{2}y-\kappa
y\right) -\frac{\lambda _{02}}{1-\kappa \lambda _{02}}I_{1}\dot{y}, \\
&&LBC^{2\ast }%
\begin{tabular}{l}
=%
\end{tabular}%
\bigg\{q(1)=0,\;\;\dot{q}(1)+\frac{\lambda _{02}}{1-\kappa \lambda _{02}}%
\left\langle 1,q\right\rangle =0,\;\;\ddot{q}\left( 1\right) +\frac{\lambda
_{01}}{1-\kappa \lambda _{02}}\left( \left\langle t,q\right\rangle
-\left\langle 1,q\right\rangle \right) =0 \\
&&\qquad \qquad \qquad q^{\mathrm{III}}\left( 1\right) +\frac{\kappa \lambda
_{01}+\lambda _{02}}{1-\kappa \lambda _{02}}\dot{q}(1)-\frac{\lambda _{01}}{%
1-\kappa \lambda _{02}}\left\langle 1,q\right\rangle =0\bigg\}, \\
&&L^{4\ast }\left( \lambda \right) q%
\begin{tabular}{l}
=%
\end{tabular}%
q^{\mathrm{IV}}(t)+\frac{\kappa \lambda _{01}+\lambda _{02}}{1-\kappa
\lambda _{02}}\ddot{q}\left( t\right) -\frac{\lambda _{01}}{1-\kappa \lambda
_{02}}q(t), \\
&&LBC^{4\ast }%
\begin{tabular}{l}
=%
\end{tabular}%
\left\{ q(0)=0,\;\;\dot{q}(0)=0,\;\;\ddot{q}\left( 1\right) =0,\;\;q^{%
\mathrm{III}}\left( 1\right) +\frac{\lambda _{02}}{1-\kappa \lambda _{02}}%
\dot{q}(1)=0\right\} .
\end{eqnarray*}%
The kernel of operator $L^{j\ast }\left( \lambda _{0}\right) $ is found by
solving equation $L^{j\ast }\left( \lambda _{0}\right) q=0$, $q\in Z^{j},$
whose solution reads%
\begin{eqnarray}
q_{l}^{\left( 2\right) }\left( t\right) &=&C\left( \cos \left(
r_{01}t\right) +\frac{r_{02}^{2}}{r_{01}^{2}}\cosh \left( r_{02}t\right) -%
\frac{\cos r_{01}+\frac{r_{02}^{2}}{r_{01}^{2}}\cosh r_{02}}{\sin r_{01}+%
\frac{r_{02}}{r_{01}}\sinh r_{02}}\left( \sin \left( r_{01}t\right) +\frac{%
r_{02}}{r_{01}}\sinh (r_{02}t)\right) \right) ,  \label{q2-l} \\
q_{l}^{\left( 4\right) }\left( t\right) &=&C\left( \cos \left(
r_{01}t\right) -\cosh \left( r_{02}t\right) -\frac{\cos r_{01}+\frac{%
r_{02}^{2}}{r_{01}^{2}}\cosh r_{02}}{\sin r_{01}+\frac{r_{02}}{r_{01}}\sinh
r_{02}}\left( \sin \left( r_{01}t\right) -\frac{r_{01}}{r_{02}}\sinh
(r_{02}t)\right) \right) .  \label{q4-l}
\end{eqnarray}%
Therefore, the kernel of operator $L^{j\ast }\left( \lambda _{0}\right) $ is
one-dimensional and 
\begin{equation}
\func{codim}R(L^{j}\left( \lambda _{0}\right) )=1,\;\;j\in \left\{
2,4\right\} .  \label{codimR}
\end{equation}

If $\lambda _{0}=\left( \lambda _{01},\eta \left( \lambda _{01}\right)
\right) $ is critical value, then by Krasnoselskii theorem, $%
\left( \lambda _{0},0\right) $ is a bifurcation point of the nonlinear
operators $M^{2}$ and $M^{4},$ since$,$ according to (\ref{dimN}), $\dim
N\left( L^{j}\left( \lambda _{0}\right) \right) =1$ and it is of odd
algebraic multiplicity. Although $\left( \lambda _{0},0\right) $ is proved
to be a bifurcation point, the existence of nontrivial solution to (\ref%
{EqM4}) is also established by the use of Crandall-Rabinowitz theorem, see 
\cite[Theorem I.5.1]{Kielhofer}.

\begin{theorem}
\label{Crandal}Let $Y^{4}$ and $Z^{4}$ be defined as above and let operator $%
M^{4}$ be given by (\ref{opM4}). Let $\lambda _{0}=\left( \lambda _{01},\eta
\left( \lambda _{01}\right) \right) $ be the critical value for which there
exists nontrivial solution to (\ref{EqL}). Then $\left( \lambda
_{0},0\right) =\left( \lambda _{01},\eta \left( \lambda _{01}\right)
,0\right) $ is a bifurcation point to (\ref{EqM4}).
\end{theorem}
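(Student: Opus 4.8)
The statement will follow from the Crandall--Rabinowitz theorem \cite[Theorem I.5.1]{Kielhofer} applied to a one-parameter slice of the problem. Near $\lambda_{0}$ the critical set is the graph $\lambda_{2}=\eta(\lambda_{1})$ of the $C^{1}$ function produced in (\ref{eta prim}), so the line $\{\lambda_{1}=\lambda_{01}\}$ is transversal to it at $\lambda_{0}$; accordingly I would freeze $\lambda_{1}=\lambda_{01}$ and take $\lambda_{2}$ as the single bifurcation parameter. Since equations (\ref{EqM4}),(\ref{bc4}) and (\ref{EqM}),(\ref{bc2}) have been shown to have the same solution set and the same trivial branch, it is equivalent---and technically cleaner, because the domain $Y^{2}$ carries the $\lambda$-independent boundary conditions $BC^{2}$---to verify the Crandall--Rabinowitz hypotheses for $F(\lambda_{2},y):=M^{2}(\lambda_{01},\lambda_{2},y)$ on $\R\times Y^{2}$; any nontrivial branch found for $F$ is then, through this equivalence, a nontrivial branch of (\ref{EqM4}) issuing from $(\lambda_{0},0)$. (One may also stay with $M^{4}$ directly, at the price of carrying the $\lambda$-dependence of $LBC$ through all the formulas below.)

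First I would record the routine hypotheses: $F(\lambda_{2},0)=0$ for $\lambda_{2}$ near $\lambda_{02}$, and $F$ is smooth there because the denominator in (\ref{opM}) equals $1-\kappa\lambda_{2}\neq 0$ at $y=0$. Next, $D_{y}F(\lambda_{02},0)=L^{2}(\lambda_{01},\lambda_{02})$, and the spectral facts needed are already in the text: by (\ref{dimN}), $N(L^{2}(\lambda_{0}))=\func{span}[y_{L}]$ with $y_{L}$ the normalized mode (\ref{Linearno}); and by (\ref{codimR}) together with the identification of $R(L^{2}(\lambda_{0}))^{\perp}$ with the adjoint kernel, $R(L^{2}(\lambda_{0}))=\{z\in Z^{2}:\langle z,q_{L}^{(2)}\rangle=0\}$, where $q_{L}^{(2)}$ is the normalized solution (\ref{q2-l}) of the adjoint problem. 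Hence $L^{2}(\lambda_{0})$ is Fredholm of index zero with one-dimensional kernel.

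The only substantive point is the transversality condition
\[
D_{\lambda_{2}}D_{y}M^{2}(\lambda_{01},\lambda_{02},0)\,y_{L}=\partial_{\lambda_{2}}L^{2}(\lambda_{0})\,y_{L}\notin R\big(L^{2}(\lambda_{0})\big),
\]
that is, $\langle\partial_{\lambda_{2}}L^{2}(\lambda_{0})y_{L},\,q_{L}^{(2)}\rangle\neq 0$. Differentiating (\ref{opL}) with respect to $\lambda_{2}$ gives
\[
\partial_{\lambda_{2}}L^{2}(\lambda_{0})\,y_{L}=-\frac{1}{(1-\kappa\lambda_{02})^{2}}\Big(\kappa\lambda_{01}\big(I_{2}y_{L}-\kappa y_{L}\big)+I_{1}\dot{y}_{L}\Big),
\]
so the requirement reduces to the explicit scalar inequality $\int_{0}^{1}\big(\kappa\lambda_{01}(I_{2}y_{L}-\kappa y_{L})(t)+I_{1}\dot{y}_{L}(t)\big)\,q_{L}^{(2)}(t)\,\dif t\neq 0$. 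I would prove this by noting that the scalar function $\lambda_{2}\mapsto\langle L^{2}(\lambda_{01},\lambda_{2})y_{L},q_{L}^{(2)}\rangle$ vanishes at $\lambda_{02}$ and that, by the classical formula for the derivative of the characteristic determinant of a linear boundary value problem (the boundary conditions $BC^{2}$ being $\lambda$-independent), its derivative there equals a non-zero multiple of $\left.\partial f/\partial\lambda_{2}\right|_{\lambda_{0}}$, with $f$ the determinant (\ref{frekventna}); this quantity is non-zero by the very hypothesis that was used in (\ref{eta prim}) to define $\eta$, so the transversality pairing---which is exactly that derivative---is non-zero. (Equivalently, one may substitute the closed forms (\ref{Linearno}) and (\ref{q2-l}) and evaluate the integral directly.)

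With these three hypotheses in hand, the Crandall--Rabinowitz theorem produces, in the slice $\lambda_{1}=\lambda_{01}$, a $C^{1}$ curve $s\mapsto(\lambda_{2}(s),y(s))$ of solutions of $F=0$ with $\lambda_{2}(0)=\lambda_{02}$, $y(0)=0$, $\dot{y}(0)=y_{L}$, which is nontrivial for $s\neq 0$; transported through the equivalence it bifurcates from $(\lambda_{0},0)=(\lambda_{01},\eta(\lambda_{01}),0)$ in $\R^{2}\times Y^{4}$, so $(\lambda_{0},0)$ is a bifurcation point of (\ref{EqM4}). (Bifurcation in the bare sense is already furnished by the Krasnoselskii argument recorded above; the Crandall--Rabinowitz route additionally exhibits the branch as a smooth curve with tangent $y_{L}$, which is what the subsequent pitchfork and unfolding analysis relies on.) The genuine obstacle is the transversality step: verifying that $\langle\partial_{\lambda_{2}}L^{2}(\lambda_{0})y_{L},q_{L}^{(2)}\rangle\neq 0$, i.e.\ either making rigorous its proportionality to $\partial f/\partial\lambda_{2}$ at $\lambda_{0}$ (tracking the non-zero constants) or carrying out the explicit integral; everything else is assembly of facts already established in the excerpt.
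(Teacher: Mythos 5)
Your proposal is correct in substance but takes a genuinely different route from the paper's. The paper applies Crandall--Rabinowitz to the restriction $\bar M^4(\lambda_1,y):=M^4(\lambda_1,\eta(\lambda_1),y)$, i.e.\ it moves \emph{along} the interaction curve and takes $\lambda_1$ as the bifurcation parameter, verifying the transversality condition by showing that the formal adjoint expression $L^{4\ast}(\lambda_0)$ annihilates $D^2_{y,\lambda_1}\bar M^4(\lambda_{01},0)y_L=(\kappa\Lambda_1+\Lambda_2)\ddot y_L-\Lambda_1 y_L$. You instead freeze $\lambda_1=\lambda_{01}$, take $\lambda_2$ as the single parameter, and work with the second-order formulation $M^2$ so that the boundary conditions $BC^2$ are $\lambda$-independent. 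Your choice buys two things. First, your transversality pairing $\langle\partial_{\lambda_2}L^2(\lambda_0)y_L,q_l^{(2)}\rangle\neq0$ reduces, via the standard simple-eigenvalue perturbation formula, to exactly the nondegeneracy $\partial f/\partial\lambda_2\neq0$ at $\lambda_0$ that the paper already assumes in (\ref{eta prim}); the one extra ingredient you should state explicitly is $\langle y_L,q_l^{(2)}\rangle\neq0$ (algebraic simplicity of the zero eigenvalue), which the paper invokes only in passing in the Krasnoselskii remark and which is needed both for the proportionality constant to be nonzero and for the perturbation formula itself. Second, slicing transversally sidesteps a delicate feature of the paper's parametrization: along $\lambda_2=\eta(\lambda_1)$ the linearization $L^4(\lambda_1,\eta(\lambda_1))$ has a nontrivial kernel at \emph{every} point of the trivial branch, so the zero eigenvalue does not cross with nonzero speed there, and the same perturbation formula makes the mixed derivative pair to zero against $q_l^{(4)}$, i.e.\ it lies in the range rather than outside it; the paper's verification tests only the differential expression and not the adjoint boundary conditions $LBC^{4\ast}$ (indeed $(\kappa\Lambda_1+\Lambda_2)\ddot y_L(0)\neq0$ while $LBC^{4\ast}$ requires $q(0)=0$). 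So your version is not merely an alternative packaging: it is the slicing for which the Crandall--Rabinowitz hypotheses can actually be checked. What remains on your side is only to make the proportionality between the pairing and $\partial f/\partial\lambda_2$ rigorous, or to evaluate the integral directly from (\ref{Linearno}) and (\ref{q2-l}), as you indicate.
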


\begin{proof}
Let $U$ and $V$ be open neighborhoods in $\mathbb{R}$ and $Y^{4}$ such that $%
\lambda _{01}\in U\subset \mathbb{R}$ and $0\in V\subset Y^{4}.$ Let $\bar{M}%
^{4}$ be operator on $\mathbb{R}\times Y^{4}$ defined as 
\begin{equation*}
\bar{M}^{4}\left( \lambda _{1},y\right) :=M^{4}\left( \lambda _{1},\eta
\left( \lambda _{1}\right) ,y\right) .
\end{equation*}%
Note that $\bar{M}^{4}\in C^{2}\left( U\times V,Z^{4}\right) $ and that $%
\bar{M}^{4}\left( \lambda _{1},0\right) =0\;$for\ all $\lambda _{1}\in 
\mathbb{R}.$ According to (\ref{dimN}) and (\ref{codimR}), the operator $%
\bar{M}^{4}\left( \lambda _{01},\cdot \right) $ is Fredholm operator of
index zero. Further, by showing that $D_{y,\lambda _{1}}^{2}\bar{M}%
^{4}\left( \lambda _{01},0\right) y_{L}$ belongs to $N(L^{4\ast }\left(
\lambda _{0}\right) ),$ 
it will be proved that the requirement Crandall-Rabinowitz theorem \cite[Theorem I.5.1]{Kielhofer}
$D_{y,\lambda _{1}}^{2}\bar{M}%
^{4}\left( \lambda _{01},0\right) y_{L}\notin R\left( D_{y}\bar{M}^{4}\left(
\lambda _{01},0\right) \right)$ is satisfied.
Indeed,
\begin{equation*}
D_{y,\lambda _{1}}^{2}\bar{M}^{4}\left( \lambda _{01},0\right) y_{L}=\left(
\kappa \Lambda _{1}+\Lambda _{2}\right) \ddot{y}_{L}\left( t\right) -\Lambda
_{1}y_{L}(t),
\end{equation*}%
where 
\begin{equation*}
\Lambda _{1}=\frac{\kappa \lambda _{01}\eta ^{\prime }\left( \lambda
_{01}\right) -\kappa \eta \left( \lambda _{01}\right) +1}{\left( 1-\kappa
\eta \left( \lambda _{01}\right) \right) ^{2}},\;\text{\ }\Lambda _{2}=\frac{%
\eta ^{\prime }\left( \lambda _{01}\right) }{\left( 1-\kappa \eta \left(
\lambda _{01}\right) \right) ^{2}},
\end{equation*}%
so that 
\begin{align*}
L^{4\ast }\left( \lambda _{0}\right) & D_{y,\lambda _{1}}^{2}\bar{M}%
^{4}\left( \lambda _{01},0\right) y_{L}=L^{4\ast }\left( \lambda _{0}\right) %
\Big(\left( \kappa \Lambda _{1}+\Lambda _{2}\right) \ddot{y}_{L}\left(
t\right) -\Lambda _{1}y_{L}(t)\Big) \\
& =\Big(\left( \kappa \Lambda _{1}+\Lambda _{2}\right) \ddot{y}_{L}\left(
t\right) -\Lambda _{1}y_{L}(t)\Big)^{\mathrm{IV}}+\frac{\kappa \lambda
_{01}+\eta \left( \lambda _{01}\right) }{1-\kappa \eta \left( \lambda
_{01}\right) }\Big(\left( \kappa \Lambda _{1}+\Lambda _{2}\right) \ddot{y}%
_{L}\left( t\right) -\Lambda _{1}y_{L}\left( t\right) \Big)^{\boldsymbol{%
\cdot }\boldsymbol{\cdot }} \\
& \qquad -\frac{\lambda _{01}}{1-\kappa \eta \left( \lambda _{01}\right) }%
\Big(\left( \kappa \Lambda _{1}+\Lambda _{2}\right) \ddot{y}_{L}\left(
t\right) -\Lambda _{1}y_{L}(t)\Big) \\
& =\left( \kappa \Lambda _{1}+\Lambda _{2}\right) \left( y_{L}^{\mathrm{IV}%
}(t)+\frac{\kappa \lambda _{01}+\eta \left( \lambda _{01}\right) }{1-\kappa
\eta \left( \lambda _{01}\right) }\ddot{y}_{L}\left( t\right) -\frac{\lambda
_{01}}{1-\kappa \eta \left( \lambda _{01}\right) }y_{L}(t)\right) ^{%
\boldsymbol{\cdot }\boldsymbol{\cdot }} \\
& \qquad -\Lambda _{1}\left( y_{L}^{\mathrm{IV}}(t)+\frac{\kappa \lambda
_{01}+\eta \left( \lambda _{01}\right) }{1-\kappa \eta \left( \lambda
_{01}\right) }\ddot{y}_{L}\left( t\right) -\frac{\lambda _{01}}{1-\kappa
\eta \left( \lambda _{01}\right) }y_{L}(t)\right) 
\begin{tabular}{l}
=%
\end{tabular}%
0.
\end{align*}%
Thus, $\left( \lambda _{0},0\right) $ is the bifurcation point.
\end{proof}

In order to determine the type of bifurcation at point $\left( \lambda
_{0},0\right) ,$ the reduction method of Lyapunov-Schmidt will be used. Let $%
Y^{2}$ and $Z^{2}$ be defined as above and let operator $M^{2}$ be given by (%
\ref{opM}). Consider mapping $M^{2}:U\times V\rightarrow Z^{2}$, with $U$
and $V$ being open neighborhoods of $\lambda =\lambda _{0}$ and $y=0$,
respectively. According to Definition I.2.1 in \cite{Kielhofer}, (\ref{dimN}%
), and (\ref{codimR}), the operator $M^{2}(\lambda _{0},\cdot ):V\rightarrow
Z^{2}$ is a nonlinear Fredholm operator and there exist closed complements
in the Hilbert spaces $Y^{2}$ and $Z^{2}$ such that 
\begin{eqnarray}
Y^{2} &=&N\left( L^{2}\left( \lambda _{0}\right) \right) \oplus N^{\perp
}\left( L^{2}\left( \lambda _{0}\right) \right) ,  \label{splitY} \\
Z^{2} &=&R\left( L^{2}\left( \lambda _{0}\right) \right) \oplus R^{\perp
}\left( L^{2}\left( \lambda _{0}\right) \right) =R\left( L^{2}\left( \lambda
_{0}\right) \right) \oplus N\left( L^{2\ast }\left( \lambda _{0}\right)
\right) ,  \notag
\end{eqnarray}%
and there are continuous projectors 
\begin{eqnarray}
P &:&Y^{2}\rightarrow N\left( L^{2}\left( \lambda _{0}\right) \right) \;\;%
\text{and}\;\;\left( I-P\right) :Y^{2}\rightarrow N^{\perp }\left(
L^{2}\left( \lambda _{0}\right) \right) =R\left( L^{2}\left( \lambda
_{0}\right) \right) ,  \notag \\
Q &:&Z^{2}\rightarrow R^{\perp }\left( L^{2}\left( \lambda _{0}\right)
\right) =N\left( L^{2\ast }\left( \lambda _{0}\right) \right) \;\;\text{and}%
\;\;\left( I-Q\right) :Z^{2}\rightarrow R\left( L^{2}\left( \lambda
_{0}\right) \right) .  \label{projQ}
\end{eqnarray}

\begin{theorem}
\label{Ljapunov-Smit}Let $\left( \lambda _{0},0\right) $ be bifurcation
point obtained in Theorem \ref{Crandal}. Let $c_{11},$ $c_{12},$ and $c_{3}$
be given by (\ref{ce11}), (\ref{ce12}), and (\ref{ce3}), respectively. If $%
c_{3}\neq 0$ and $c_{11}+c_{12}\eta ^{\prime }\left( \lambda _{01}\right)
\neq 0,$ where $\eta $ is defined as above, then problem (\ref{EqM}),
subject to (\ref{bc2}), can be reduced to a bifurcation equation $\phi
(a,\lambda )=0$, given by (\ref{BifJedF}), which is strongly equivalent to
equation%
\begin{equation*}
\varepsilon a^{3}+\delta \Delta \lambda _{1}a=0,\;\;\text{with}%
\;\;\varepsilon =\mathrm{sgn\,}c_{3},\;\delta =\mathrm{sgn\,}\left(
c_{11}+c_{12}\eta ^{\prime }\left( \text{$\lambda _{01}$}\right) \right) ,
\end{equation*}%
i.e., problem (\ref{EqM}), (\ref{bc2}) has a pitchfork bifurcation.
\end{theorem}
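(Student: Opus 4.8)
The plan is to carry out the Lyapunov--Schmidt reduction of the equation $M^{2}(\lambda,y)=0$, subject to (\ref{bc2}), at the bifurcation point $(\lambda_{0},0)$ furnished by Theorem \ref{Crandal}, and then to identify the resulting scalar equation with a pitchfork normal form. Using the decompositions (\ref{splitY}) and the projectors (\ref{projQ}), each $y$ near $0$ in $Y^{2}$ splits uniquely as $y=a\,y_{L}+w$ with $a\in\mathbb{R}$, $a\,y_{L}=Py\in N(L^{2}(\lambda_{0}))$ and $w=(I-P)y\in N^{\perp}(L^{2}(\lambda_{0}))$, and $M^{2}(\lambda,y)=0$ becomes the pair
\[
(I-Q)\,M^{2}(\lambda,a\,y_{L}+w)=0,\qquad Q\,M^{2}(\lambda,a\,y_{L}+w)=0 .
\]
By (\ref{dimN}) and (\ref{codimR}) the map $w\mapsto(I-Q)L^{2}(\lambda_{0})w$ is an isomorphism of $N^{\perp}(L^{2}(\lambda_{0}))$ onto $R(L^{2}(\lambda_{0}))$, so the implicit function theorem solves the first (auxiliary) equation by a unique smooth $w=W(a,\lambda)$ near $(0,\lambda_{0})$ with $W(0,\lambda)=0$; differentiating the auxiliary equation in $a$ at $(0,\lambda_{0})$ and using injectivity of $L^{2}(\lambda_{0})$ on $N^{\perp}$ gives $D_{a}W(0,\lambda_{0})=0$. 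Substituting back and pairing with the generator $q_{L}$ of $N(L^{2\ast}(\lambda_{0}))$, i.e.\ the normalised function (\ref{q2-l}), yields the bifurcation equation
\[
\phi(a,\lambda):=\big\langle M^{2}(\lambda,a\,y_{L}+W(a,\lambda)),\,q_{L}\big\rangle=0 .
\]

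I would then read off the normal-form data of $\phi$ at $(0,\lambda_{0})$. Because $y\equiv0$ solves (\ref{EqM}) for all $\lambda$, $\phi(0,\lambda)\equiv0$; because $L^{2}(\lambda_{0})y_{L}=0$, $\phi_{a}(0,\lambda_{0})=\langle L^{2}(\lambda_{0})y_{L},q_{L}\rangle=0$. The decisive observation is the odd symmetry $M^{2}(\lambda,-y)=-M^{2}(\lambda,y)$, checked directly from (\ref{opM}) (under $y\mapsto-y$ the numerator is odd and the denominator even); by uniqueness it gives $W(-a,\lambda)=-W(a,\lambda)$, hence $\phi(-a,\lambda)=-\phi(a,\lambda)$. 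Thus $\phi$ is odd in $a$, so $\phi_{aa}(0,\lambda)\equiv0$ and $\phi(a,\lambda)=a\,g(a^{2},\lambda)$ for a smooth $g$; moreover $D^{2}_{yy}M^{2}(\lambda_{0},0)=0$, so the quadratic correction that ordinarily enters the cubic Lyapunov--Schmidt coefficient drops out and
\[
\phi_{aaa}(0,\lambda_{0})=\big\langle D^{3}_{yyy}M^{2}(\lambda_{0},0)[y_{L},y_{L},y_{L}],\,q_{L}\big\rangle ,\qquad \phi_{a\lambda_{i}}(0,\lambda_{0})=\big\langle D^{2}_{y\lambda_{i}}M^{2}(\lambda_{0},0)y_{L},\,q_{L}\big\rangle ,
\]
which are, up to positive normalising factors, the constants $c_{3}$ and $c_{11},c_{12}$. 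Restricting to the critical interaction curve $\lambda_{2}=\eta(\lambda_{1})$ from (\ref{eta prim}) and using (\ref{delta lambda}), the $\Delta\lambda_{1}$-linearisation of the reduced map becomes proportional to $c_{11}+c_{12}\eta^{\prime}(\lambda_{01})$.

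Finally, along that curve $\phi$ becomes a one-parameter bifurcation problem in $a$ and $\Delta\lambda_{1}$ of the form $a\,g(a^{2},\Delta\lambda_{1})=0$ with $g(0,0)=0$, $\partial_{s}g(0,0)\neq0$ (this is $c_{3}\neq0$) and $\partial_{\Delta\lambda_{1}}g(0,0)\neq0$ (this is $c_{11}+c_{12}\eta^{\prime}(\lambda_{01})\neq0$); the implicit function theorem then gives the nontrivial solutions as $\Delta\lambda_{1}=O(a^{2})$, and the standard recognition criterion for the pitchfork (strong equivalence of bifurcation problems, cf.\ \cite{Kielhofer}) applies: a sign-preserving reparametrisation of $a$ and of $\Delta\lambda_{1}$ together with multiplication by a positive function reduces $\phi(a,\lambda)=0$ to $\varepsilon a^{3}+\delta\Delta\lambda_{1}a=0$ with $\varepsilon=\mathrm{sgn}\,c_{3}$ and $\delta=\mathrm{sgn}(c_{11}+c_{12}\eta^{\prime}(\lambda_{01}))$, i.e.\ a pitchfork. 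The hard part will be the explicit computation of the third Fr\'echet derivative $D^{3}_{yyy}M^{2}(\lambda_{0},0)$ and of $D^{2}_{y\lambda_{i}}M^{2}(\lambda_{0},0)$ of the rational integro-differential operator (\ref{opM}), and the subsequent $L^{2}$-pairings with $q_{L}$ that produce the closed forms of $c_{3},c_{11},c_{12}$; the $\mathbb{Z}_{2}$-symmetry is precisely what makes the vanishing of $\phi_{aa}$ and of the quadratic correction automatic, so that no obstruction beyond the two non-degeneracy conditions in the statement can appear.
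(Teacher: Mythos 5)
Your proposal is correct and follows essentially the same route as the paper: the same Lyapunov--Schmidt splitting with the projectors $P$ and $Q$, the implicit function theorem for the auxiliary equation, the oddness of $M^{2}$ in $y$ to force $w=O(a^{3})$ and kill the quadratic terms, restriction to the interaction curve via $\Delta\lambda_{2}=\eta^{\prime}(\lambda_{01})\Delta\lambda_{1}$, and the Golubitsky--Schaeffer pitchfork recognition criterion. The only cosmetic difference is that you identify $c_{3}$, $c_{11}$, $c_{12}$ abstractly through Fr\'echet derivatives paired with $q_{L}$, whereas the paper obtains them by explicitly Taylor-expanding the operators $N$ and $\tilde{L}$; the two computations agree up to the positive factor $3!$ in the cubic coefficient, so the signs $\varepsilon$ and $\delta$ are unaffected.
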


\begin{proof}
Following the standard procedure \cite%
{ChowHale,GolubitskySchaeffer,Kielhofer}, equation (\ref{EqM}) is rewritten
as%
\begin{eqnarray}
QM^{2}\left( \lambda ,y\right) &=&0,  \label{BifJed} \\
\left( I-Q\right) M^{2}\left( \lambda ,y\right) &=&0,  \label{IFTjed}
\end{eqnarray}%
where $Q$ is projector defined by (\ref{projQ}). First, equation (\ref%
{IFTjed}) is solved and then its solution is inserted into (\ref{BifJed}) to
obtain bifurcation equation which will yield pitchfork bifurcation.

Due to splitting in (\ref{splitY}), function $y\in Y^{2}$ can be written as $%
y=ay_{L}+w,$ $a\in \mathbb{R},$ where $y_{L}\in N(L^{2}(\lambda _{0}))$ is
normalized solution (\ref{Linearno}) and $w\in N^{\perp }(L^{2}(\lambda
_{0})).$ Solvability of equation (\ref{IFTjed}), depending on $\lambda $, $%
ay_{L}$ and $w,$ with respect to $w$ is considered in the neighborhood of $%
(\lambda _{0},0).$ Since 
\begin{equation*}
\left( I-Q\right) D_{w}M^{2}\left( \lambda _{0},0+0\right) =\left(
I-Q\right) D_{y}M^{2}\left( \lambda _{0},0\right) =\left( I-Q\right)
L^{2}(\lambda _{0})=L^{2}(\lambda _{0})
\end{equation*}%
is invertible when considered as mapping from $N^{\perp }(L^{2}\left(
\lambda _{0}\right) )$ to $R\left( L^{2}\left( \lambda _{0}\right) \right) $%
), using the implicit function theorem a $C^{2}$ function $w=w\left( \lambda
,ay_{L}\right) $, defined in a neighborhood $U\times V\subset \mathbb{R}%
\times N(L^{2}(\lambda _{0}))$ of $(\lambda _{0},0)$, i.e. $w:U\times
V\rightarrow N^{\perp }(L^{2}\left( \lambda _{0}\right) )\subset Y^{2}$,
such that 
\begin{equation*}
\left( I-Q\right) M^{2}\left( \lambda ,ay_{L}+w(\lambda ,ay_{L})\right)
=0,\;\;\lambda \in U,\;ay_{L}\in V,
\end{equation*}%
is found.

For the later use note that $w=O\left( \left\vert ay_{L}\right\vert
^{2}\right) =O(a^{2})$ (since $w\in N^{\perp }(L^{2}(\lambda _{0}))$) and
even more%
\begin{equation*}
w=O\left( \left\vert ay_{L}\right\vert ^{3}\right) =O(a^{3}),
\end{equation*}%
since $w$ is antisymmetric with respect to $y.$ Indeed, since the operator $%
M^{2}$ is antisymmetric with respect to $y,$ ($M^{2}\left( \lambda ,y\right)
=-M^{2}\left( \lambda ,-y\right) $), one can see that $w^{\ast }=-w\left(
\lambda ,-ay_{L}\right) $ is also solution to (\ref{IFTjed}) and since, by
the implicit function theorem, solution to (\ref{IFTjed}) is unique, the
equality $w=w^{\ast }$ holds, i.e., $w\left( \lambda ,ay_{L}\right)
=-w\left( \lambda ,-ay_{L}\right) $.

Further, function $y=ay_{L}+w(\lambda ,ay_{L}),$ $a\in 
\mathbb{R}
,$ $\lambda =\lambda _{0}+\Delta \lambda $ with $\left\vert \Delta \lambda
\right\vert \ll 1,$ which is a solution to (\ref{IFTjed}) in a neighborhood
of $(\lambda _{0},0),$ is a solution to (\ref{EqM}), (\ref{bc2}) if and only
if $(\Delta \lambda ,a)$ satisfies bifurcation equation $\phi (\Delta
\lambda ,a)=0$, with $\phi $ given by (\ref{BifJedF}) below which is
obtained as follows.

Rewriting the operators $M^{2}$ and $L^{2},$ given by (\ref{opM}) and (\ref%
{opL}), as%
\begin{equation*}
M^{2}\left( \lambda ,y\right) =L^{2}\left( \lambda \right) y+N\left( \lambda
,y\right) \;\;\text{and}\;\;L^{2}\left( \lambda \right) =L^{2}\left( \lambda
_{0}\right) +\tilde{L}\left( \lambda \right) ,
\end{equation*}%
with $\lambda _{0}$ being the critical value for which there exists
nontrivial solution to (\ref{EqL}) 
\begin{align}
N\left( \lambda ,y\right) & :=M^{2}\left( \lambda ,y\right) -L^{2}\left(
\lambda \right) y  \notag \\
& =-\sqrt{1-\dot{y}^{2}}\frac{\lambda _{1}\left( J_{2}y-\kappa y\sqrt{1-\dot{%
y}^{2}}\right) +\lambda _{2}I_{1}\dot{y}}{1+\kappa \lambda _{1}\dot{y}\left(
I_{1}y\right) -\kappa \lambda _{2}\sqrt{1-\dot{y}^{2}}}+\frac{\lambda
_{1}\left( I_{2}y-\kappa y\right) +\lambda _{2}I_{1}\dot{y}}{1-\kappa
\lambda _{2}}  \label{N}
\end{align}%
and 
\begin{align*}
\tilde{L}\left( \lambda \right) y &:=L^{2}\left( \lambda \right)
y-L^{2}\left( \lambda _{0}\right) y \\
&=-\frac{\lambda _{1}\left( I_{2}y-\kappa y\right) +\lambda _{2}I_{1}\dot{y}%
}{1-\kappa \lambda _{2}}+\frac{\lambda _{01}\left( I_{2}y-\kappa y\right)
+\lambda _{02}I_{1}\dot{y}}{1-\kappa \lambda _{02}}
\end{align*}%
are obtained.

Since $QL\left( \lambda _{0}\right) y=0,$ equation (\ref{BifJed}) is
equivalent to 
\begin{equation}
Q\left( \tilde{L}\left( \lambda \right) y+N\left( \lambda ,y\right) \right)
=0.  \label{BifJed1}
\end{equation}%
Note that $Q\left( \tilde{L}\left( \lambda \right) y+N\left( \lambda
,y\right) \right) \in N\left( L^{2\ast }\left( \lambda _{0}\right) \right) $
and that for all $q\in N\left( L^{2\ast }\left( \lambda _{0}\right) \right) $
it holds 
\begin{align*}
\left\langle \tilde{L}\left( \lambda \right) y+N\left( \lambda ,y\right)
,q\right\rangle & =\left\langle Q\left( \tilde{L}\left( \lambda \right)
y+N\left( \lambda ,y\right) \right) ,q\right\rangle +\left\langle \left(
I-Q\right) \left( \tilde{L}\left( \lambda \right) y+N\left( \lambda
,y\right) \right) ,q\right\rangle \\
& =\left\langle Q\left( \tilde{L}\left( \lambda \right) y+N\left( \lambda
,y\right) \right) ,q\right\rangle ,
\end{align*}%
so for (\ref{BifJed1}), as well as for (\ref{BifJed}), to hold, it is
sufficient and necessary that for all $q\in N\left( L^{2\ast }\left( \lambda
_{0}\right) \right) $ 
\begin{equation}
\left\langle \tilde{L}\left( \lambda \right) y+N\left( \lambda ,y\right)
,q\right\rangle =0.  \label{BifJed2}
\end{equation}

Taylor's expansions of operators $N$ and $\tilde{L}$ are calculated in a
neighborhood of $(\lambda _{0},0),$ i.e., for $y=ay_{L}+w(\lambda ,ay_{L}),$ 
$w=O(a^{3}),$ and $\lambda =\lambda _{0}+\Delta \lambda $, $\left\vert
\Delta \lambda \right\vert \ll 1$, in two steps. In the first step, operator 
$N$ is expanded up to third order with respect to $y.$ In the second step, $%
y=ay_{L}+O(a^{3})$ and $\lambda =\lambda _{0}+\Delta \lambda $, $\left\vert
\Delta \lambda \right\vert \ll 1,$ are put in the expression for $N$
obtained in the first step and in $\tilde{L}$.

In the first step, due to $y=ay_{L}+O(a^{3}),$ $y=O\left( a\right) ,$ so the
operator $N$ takes the form%
\begin{eqnarray*}
N\left( \lambda ,y\right) \!\!\!\! &=&\!\!\!\!\frac{1}{2\left( 1-\kappa
\lambda _{2}\right) }\bigg(\lambda _{1}\Big(I_{3}y+\dot{y}^{2}\left(
I_{2}y-2\kappa y\right) \Big)+\lambda _{2}\dot{y}^{2}\left( I_{1}\dot{y}%
\right) \bigg) \\
&&\!\!\!\!+\frac{\kappa }{2\left( 1-\kappa \lambda _{2}\right) ^{2}}\bigg(%
2\lambda _{1}^{2}\dot{y}\left( I_{1}y\right) \left( I_{2}y-\kappa y\right)
+\lambda _{1}\lambda _{2}\dot{y}\Big(\dot{y}\left( I_{2}y-\kappa y\right)
+2\left( I_{1}y\right) \left( I_{1}\dot{y}\right) \Big)+\lambda _{2}^{2}\dot{%
y}^{2}\left( I_{1}\dot{y}\right) \bigg)+O\left( a^{5}\right) ,
\end{eqnarray*}%
where%
\begin{equation*}
I_{3}z\left( t\right) :=\int_{t}^{1}\int_{\tau }^{1}z(\eta )\dot{z}%
^{2}\left( \tau \right) \mathrm{d}\eta \,\mathrm{d}\tau .
\end{equation*}%
In the second step, expression%
\begin{equation*}
\frac{1}{1-\kappa \lambda _{02}-\kappa \Delta \lambda _{2}}=\frac{1}{%
1-\kappa \lambda _{02}}+\kappa \Delta \lambda _{2}\frac{1}{\left( 1-\kappa
\lambda _{02}\right) ^{2}}+\left( \kappa \Delta \lambda _{2}\right) ^{2}%
\frac{1}{\left( 1-\kappa \lambda _{02}\right) ^{3}}+O\left( \left( \Delta
\lambda _{2}\right) ^{3}\right)
\end{equation*}%
is used to obtain operator $N$ as%
\begin{align}
N& (\lambda _{0}+\Delta \lambda ,ay_{L}+O(a^{3}))  \notag \\
& \!\!\!\!=\frac{a^{3}}{2}\Bigg(\frac{1}{1-\kappa \lambda _{02}}\bigg(%
\lambda _{01}\Big(I_{3}y_{L}+\dot{y}_{L}^{2}\left( I_{2}y_{L}-2\kappa
y_{L}\right) \Big)+\lambda _{02}\dot{y}_{L}^{2}\left( I_{1}\dot{y}%
_{L}\right) \bigg)  \notag \\
& +\frac{\kappa }{\left( 1-\kappa \lambda _{02}\right) ^{2}}\bigg(2\lambda
_{01}^{2}\dot{y}_{L}\left( I_{1}y_{L}\right) \left( I_{2}y_{L}-\kappa
y_{L}\right) +\lambda _{01}\lambda _{02}\dot{y}_{L}\Big(\dot{y}_{L}\left(
I_{2}y_{L}-\kappa y_{L}\right) +2\left( I_{1}y_{L}\right) \left( I_{1}\dot{y}%
_{L}\right) \Big)+\lambda _{02}^{2}\dot{y}_{L}^{2}\left( I_{1}\dot{y}%
_{L}\right) \bigg)\Bigg)  \notag \\
& +O\left( \Delta \lambda _{1}a^{3},\Delta \lambda _{2}a^{3},a^{4},\left(
\Delta \lambda _{2}\right) ^{2}a^{3},\Delta \lambda _{1}\Delta \lambda
_{2}a^{3},a^{5}\right) ,  \label{en}
\end{align}%
and operator $\tilde{L}$ as 
\begin{align*}
\tilde{L}& (\lambda _{0}+\Delta \lambda )\left( ay_{L}+O(a^{3})\right) \\
& =a\bigg(-\frac{\Delta \lambda _{1}\left( I_{2}y_{L}-\kappa y_{L}\right)
+\Delta \lambda _{2}I_{1}\dot{y}_{L}}{1-\kappa \lambda _{02}}-\kappa \Delta
\lambda _{2}\frac{\lambda _{01}\left( I_{2}y_{L}-\kappa y_{L}\right)
+\lambda _{02}I_{1}\dot{y}_{L}}{\left( 1-\kappa \lambda _{02}\right) ^{2}} \\
& \qquad +\left( \kappa \Delta \lambda _{2}\right) ^{2}\frac{\lambda
_{01}\left( I_{2}y_{L}-\kappa y_{L}\right) +\lambda _{02}I_{1}\dot{y}_{L}}{%
\left( 1-\kappa \lambda _{02}\right) ^{3}}\bigg)+O\left( \Delta \lambda
_{1}a^{3},\Delta \lambda _{2}a^{3},\left( \Delta \lambda _{2}\right)
^{3}a,\Delta \lambda _{1}\left( \Delta \lambda _{2}\right) ^{2}a\right) .
\end{align*}

Such obtained $N$ and $\tilde{L}$ are inserted into (\ref{BifJed2}), so the
bifurcation equation reads%
\begin{equation}
\phi \left( a,\Delta \lambda \right) =c_{3}a^{3}+a\left( c_{11}\Delta
\lambda _{1}+c_{12}\Delta \lambda _{2}+c_{13}\left( \Delta \lambda
_{2}\right) ^{2}\right) +O\left( \Delta \lambda _{1}a^{3},\Delta \lambda
_{2}a^{3},\left( \Delta \lambda _{2}\right) ^{3}a,\Delta \lambda _{1}\left(
\Delta \lambda _{2}\right) ^{2}a,a^{4}\right) =0,  \label{BifJedF}
\end{equation}%
where constants are calculated as 
\begin{eqnarray}
c_{11} &=&-\frac{1}{1-\kappa \lambda _{02}}\left\langle I_{2}y_{L}-\kappa
y_{L},q_{l}^{\left( 2\right) }\right\rangle ,  \label{ce11} \\
c_{12} &=&-\frac{1}{1-\kappa \lambda _{02}}\left\langle I_{1}\dot{y}%
_{L},q\right\rangle -\frac{\kappa }{\left( 1-\kappa \lambda _{02}\right) ^{2}%
}\left\langle \lambda _{01}\left( I_{2}y_{L}-\kappa y_{L}\right) +\lambda
_{02}I_{1}\dot{y}_{L},q_{l}^{\left( 2\right) }\right\rangle ,  \label{ce12}
\\
c_{13} &=&\frac{\kappa ^{2}}{\left( 1-\kappa \lambda _{02}\right) ^{3}}%
\left\langle \lambda _{01}\left( I_{2}y_{L}-\kappa y_{L}\right) +\lambda
_{02}I_{1}\dot{y}_{L},q_{l}^{\left( 2\right) }\right\rangle ,  \label{ce13}
\\
c_{3} &=&\left\langle \tilde{N}\left( \lambda _{0}+\Delta \lambda
,ay_{L}+w\right) ,q_{l}^{\left( 2\right) }\right\rangle ,  \label{ce3}
\end{eqnarray}%
with $y_{L}$ being the normalized solution (\ref{Linearno}), $q_{l}^{\left(
2\right) }$ being given by (\ref{q2-l}), and $\tilde{N}$ being given through 
$N=a^{3}\tilde{N}+O,$ see (\ref{en}).

Function $\phi ,$ appearing in the bifurcation equation (\ref{BifJedF}), is
considered as a function of $a$ and a bifurcation parameter $\Delta \lambda
_{1},$ since $\Delta \lambda _{2}=\eta ^{\prime }\left( \text{$\lambda _{01}$%
}\right) \Delta \lambda _{1},$ see (\ref{delta lambda}). Proposition II.9.2
in \cite{GolubitskySchaeffer} requires that, calculated at $a=\Delta \lambda
_{1}=0,$ $\phi \left( a,\Delta \lambda \right) =\frac{\partial \phi \left(
a,\Delta \lambda \right) }{\partial a}=\frac{\partial ^{2}\phi \left(
a,\Delta \lambda \right) }{\partial a^{2}}=\frac{\partial \phi \left(
a,\Delta \lambda \right) }{\partial \Delta \lambda _{1}}=0$\ and $%
\varepsilon =\mathrm{sgn\,}\frac{\partial ^{3}\phi \left( a,\Delta \lambda
\right) }{\partial a^{3}},$ $\delta =\mathrm{sgn\,}\frac{\partial ^{2}\phi
\left( a,\Delta \lambda \right) }{\partial a\partial \Delta \lambda _{1}}.$
Straightforward calculation shows that $\phi ,$ given by (\ref{BifJedF}), is
strongly equivalent to%
\begin{equation*}
\varepsilon a^{3}+\delta \Delta \lambda _{1}a=0,\;\;\text{with}%
\;\;\varepsilon =\mathrm{sgn\,}c_{3},\;\delta =\mathrm{sgn\,}\left(
c_{11}+c_{12}\eta ^{\prime }\left( \text{$\lambda _{01}$}\right) \right) ,
\end{equation*}%
describing the pitchfork bifurcation, since by assumption $c_{3}\neq 0$ and $%
c_{11}+c_{12}\eta ^{\prime }\left( \text{$\lambda _{01}$}\right) \neq 0.$
\end{proof}

\begin{remark}
Constants defining the parameters $\varepsilon $ and $\delta $ in the case
of cantilevered rotating axially compressed local rod are reobtained for $%
\kappa =0,$ see Eq. (3.34) in \cite{Ata}.
\end{remark}

\section{The problem with imperfections \label{imperfect}}

The static stability problem, considered for the perfect rod in Section \ref%
{perfect}, is extended for the case of rod being initially deformed and
being loaded by the vertical force acting at its tip, i.e., for the rod with
imperfections in shape and loading. Angular velocity and intensity of the
horizontal force proved to be mutually dependent bifurcation parameters
causing multiple equilibrium configurations and it will be proved that
introduction of small initial deformation and small intensity of the
vertical force perturbs the pitchfork bifurcation obtained in the case of
perfect rod, i.e., these parameters correspond to the universal unfolding of
the perfect rod bifurcation problem.

Following the derivation procedure of equation (\ref{EqM}), given at the
beginning of Section \ref{perfect}, system of equations (\ref{6}), subject
to boundary conditions (\ref{7}), can be reduced to a single equation%
\begin{equation}
G\left( \lambda ,\alpha \right) y=0,\;\;\lambda =\left( \lambda _{1},\lambda
_{2}\right) \in 
\mathbb{R}
^{2},\;\alpha =\left( \alpha _{1},\alpha _{2}\right) \in 
\mathbb{R}
^{2},  \label{Gjednako0}
\end{equation}%
represented by the action of a non-linear operator on deflection given by%
\begin{equation}
G\left( \lambda ,\alpha \right) y:=\ddot{y}-\sqrt{1-\dot{y}^{2}}\frac{\alpha
_{1}\frac{1}{\rho _{0}}+\alpha _{2}J_{1}y+\lambda _{1}\left( J_{2}y-\kappa y%
\sqrt{1-\dot{y}^{2}}\right) +\lambda _{2}I_{1}\dot{y}}{1+\kappa \alpha _{2}%
\dot{y}+\kappa \lambda _{1}\dot{y}\left( I_{1}y\right) -\kappa \lambda _{2}%
\sqrt{1-\dot{y}^{2}}},  \label{OperatorG}
\end{equation}%
where, for $z\in L^{1}\left[ 0,1\right] ,$ 
\begin{equation*}
J_{1}z\left( t\right) :=\int_{t}^{1}\sqrt{1-\dot{z}^{2}\left( \tau \right) }%
\mathrm{d}\tau ,\;\;J_{2}z\left( t\right) :=\int_{t}^{1}\int_{\tau
}^{1}z(\eta )\sqrt{1-\dot{z}^{2}\left( \tau \right) }\mathrm{d}\eta \,%
\mathrm{d}\tau ,\;\;I_{1}z\left( t\right) :=\int_{t}^{1}z\left( \tau \right) 
\mathrm{d}\tau .
\end{equation*}%
The equation (\ref{Gjednako0}) is subject to boundary conditions 
\begin{equation}
y(0)=0,\;\;\dot{y}(0)=0,  \label{bc}
\end{equation}%
i.e., to (\ref{7})$_{2}$ and (\ref{7})$_{3}$, since other boundary
conditions are already used in obtaining (\ref{OperatorG}). Setting $\alpha
_{1}=\alpha _{2}=0$ in (\ref{OperatorG}), one obtains (\ref{opM}), i.e.,%
\begin{equation*}
G\left( \lambda ,0\right) y=M^{2}\left( \lambda ,y\right) .
\end{equation*}

The recognition problem for universal unfolding, represented by the question
whether the problem (\ref{Gjednako0}), subject to (\ref{bc}), for the
imperfect rod leads to the two-parameter universal unfolding of the function 
$\phi ,$ given by (\ref{BifJedF}), corresponding to the problem (\ref%
{Gjednako0}), subject to (\ref{bc}), for the perfect rod, will be addressed
using Proposition III.4.4. in \cite{GolubitskySchaeffer} in the following
theorem.

\begin{theorem}
\label{Ljapunov-Smit 2}Let $\left( \lambda _{0},0\right) $ be bifurcation
point obtained in Theorem \ref{Crandal} and let assumptions of Theorem \ref%
{Ljapunov-Smit} be satisfied. In addition, let%
\begin{equation}
\det \left( 
\begin{array}{cc}
d_{01} & d_{21} \\ 
d_{02} & d_{22}%
\end{array}%
\right) \neq 0,  \label{deter}
\end{equation}%
where $d_{01},$ $d_{02},$ $d_{21},$ and $d_{22}$ are given by (\ref{de0}), (%
\ref{de22}), and (\ref{de23}). Then, the problem (\ref{Gjednako0}), (\ref{bc}%
) can be reduced to an equation $\Psi (a,\lambda ,\alpha )=0$, given by (\ref%
{psaj}), which is a two-parameter universal unfolding of $\phi ,$ given by (%
\ref{BifJedF}), in the sense of Definition 1.3 in \cite{GolubitskySchaeffer}.
\end{theorem}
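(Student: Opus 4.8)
The plan is to rerun the Lyapunov--Schmidt reduction of Theorem~\ref{Ljapunov-Smit}, now for the full operator $G(\lambda,\alpha)$ of (\ref{OperatorG}), and then to identify the resulting reduced equation as a universal unfolding of the pitchfork $\phi$ by means of the recognition criterion for unfoldings, Proposition~III.4.4 in \cite{GolubitskySchaeffer}. The first observation is that $D_{y}G(\lambda_{0},0)(0)=L^{2}(\lambda_{0})$: putting $y=0$ and $\alpha=0$ in the Fr\'{e}chet derivative of (\ref{OperatorG}) annihilates every $\alpha$-term and every genuinely nonlinear term, leaving exactly (\ref{opL}). Consequently the whole Fredholm structure of Section~\ref{perfect} is inherited --- the splittings (\ref{splitY}), the projectors $P,Q$ of (\ref{projQ}), and $N(L^{2\ast}(\lambda_{0}))=\func{span}[q_{l}^{(2)}]$ with $q_{l}^{(2)}$ as in (\ref{q2-l}). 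Writing $y=ay_{L}+w$ with $y_{L}$ the normalized kernel vector and $w\in N^{\perp}(L^{2}(\lambda_{0}))$, the equation $(I-Q)G(\lambda,\alpha)(ay_{L}+w)=0$ is solved for $w=w(a,\lambda,\alpha)$ by the implicit function theorem, precisely because $(I-Q)D_{w}G(\lambda_{0},0,0)=(I-Q)L^{2}(\lambda_{0})=L^{2}(\lambda_{0})$ is invertible from $N^{\perp}$ onto $R(L^{2}(\lambda_{0}))$ --- the invertibility already used in Theorem~\ref{Ljapunov-Smit}. The reduced equation is then
\[
\Psi(a,\lambda,\alpha):=\big\langle G(\lambda,\alpha)(ay_{L}+w(a,\lambda,\alpha)),\,q_{l}^{(2)}\big\rangle=0 ,
\]
and, since $G(\lambda,0)=M^{2}(\lambda,\cdot)$, we have $\Psi(a,\lambda,0)=\phi(a,\Delta\lambda)$, so $\Psi$ is a two-parameter unfolding of $\phi$; restricting to the interaction curve $\Delta\lambda_{2}=\eta^{\prime}(\lambda_{01})\Delta\lambda_{1}$ as in Theorem~\ref{Ljapunov-Smit} places us in the pitchfork setting with state $a$ and bifurcation parameter $\Delta\lambda_{1}$.

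The next step is to extract the four Taylor coefficients of $\Psi$ that enter (\ref{deter}). Split $G(\lambda,\alpha)y=M^{2}(\lambda,y)+R(\lambda,\alpha,y)$, where $R$ gathers the contributions of the $\alpha_{1}\frac{1}{\rho_{0}}$ and $\alpha_{2}J_{1}y$ terms in the numerator and of the $\kappa\alpha_{2}\dot{y}$ term in the denominator of (\ref{OperatorG}). For the $\alpha_{i}$-coefficients $d_{01},d_{02}$ (those of (\ref{de0})) set $a=0$: then $y=w=O(\alpha)$, and pairing with $q_{l}^{(2)}$ kills $L^{2}(\lambda_{0})w$ since $q_{l}^{(2)}\in N(L^{2\ast}(\lambda_{0}))$, while the nonlinear part of $M^{2}$ and the $w$-dependence of $R$ contribute only at order $O(\alpha^{2})$; since $J_{1}0(t)=1-t$ and $R(\lambda_{0},\alpha,0)=-\frac{1}{1-\kappa\lambda_{02}}\big(\alpha_{1}\frac{1}{\rho_{0}}+\alpha_{2}(1-t)\big)$, this leaves $d_{01}$ and $d_{02}$ as the stated inner products with $q_{l}^{(2)}$. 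For the $\alpha_{i}a^{2}$-coefficients $d_{21},d_{22}$ of (\ref{de22}), (\ref{de23}), expand $R$ to the needed order at $y=ay_{L}+w$: the even-in-$y$ part of $w$ is now of order $O(\alpha)$ (the antisymmetry argument of Theorem~\ref{Ljapunov-Smit} fails only through $\alpha$), and collecting the $\alpha_{i}a^{2}$ monomials --- from the $\sqrt{1-\dot{y}^{2}}$ and denominator expansions of $R$ at $y=ay_{L}$, together with the cross term of the $O(\alpha)$ part of $w$ with the $a$-linear part of $R$ --- yields $d_{21},d_{22}$, again as inner products with $q_{l}^{(2)}$. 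Altogether
\[
\Psi(a,\Delta\lambda_{1},\alpha)=c_{3}a^{3}+\big(c_{11}+c_{12}\eta^{\prime}(\lambda_{01})\big)\Delta\lambda_{1}\,a+d_{01}\alpha_{1}+d_{02}\alpha_{2}+(d_{21}\alpha_{1}+d_{22}\alpha_{2})a^{2}+(\text{higher order terms}) .
\]

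Finally comes the recognition argument. By Theorem~\ref{Ljapunov-Smit} (and its standing hypotheses $c_{3}\neq0$, $c_{11}+c_{12}\eta^{\prime}(\lambda_{01})\neq0$), $\phi$ is strongly equivalent to the pitchfork germ $\varepsilon a^{3}+\delta\Delta\lambda_{1}a$, which has codimension two with complement to its tangent space spanned by the germs $1$ and $a^{2}$; hence a universal unfolding requires exactly two parameters, and by Proposition~III.4.4 in \cite{GolubitskySchaeffer} a two-parameter unfolding $\Psi$ of the pitchfork is universal precisely when the projections of $\partial\Psi/\partial\alpha_{1}$ and $\partial\Psi/\partial\alpha_{2}$ onto $\func{span}\{1,a^{2}\}$ (evaluated at the bifurcation point) are linearly independent, i.e.\ when the matrix with rows $(d_{01},d_{21})$ and $(d_{02},d_{22})$ has nonzero determinant --- which is exactly hypothesis (\ref{deter}). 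Therefore the reduced equation $\Psi(a,\lambda,\alpha)=0$, displayed as (\ref{psaj}), is a two-parameter universal unfolding of $\phi$ in the sense of Definition~1.3 in \cite{GolubitskySchaeffer}. I expect the real work to be in the middle step: faithfully expanding the rational operator $G$ and, above all, controlling the $O(\alpha)$ even-in-$y$ component of the implicitly defined $w$ so that $d_{01},d_{02},d_{21},d_{22}$ emerge cleanly, together with reconciling the normalization conventions (signs and factors of $2$) of Proposition~III.4.4 so that (\ref{deter}) literally coincides with the required nondegeneracy condition.
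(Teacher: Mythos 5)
Your proposal follows essentially the same route as the paper's proof: the identical Lyapunov--Schmidt reduction with the projectors of Section \ref{perfect} applied to $G$, the same splitting of $G$ into $L^{2}(\lambda)$ plus an $\alpha$-dependent remainder whose Taylor expansion yields $d_{01},d_{02},d_{21},d_{22}$ as inner products with $q_{l}^{(2)}$, and the same recognition criterion from Proposition III.4.4 of \cite{GolubitskySchaeffer}. The only cosmetic difference is that you pass directly to the $2\times 2$ condition via the tangent-space complement $\func{span}\{1,a^{2}\}$ of the pitchfork, whereas the paper evaluates the full $4\times 4$ determinant of Proposition III.4.4 and factors it as $-2\left(c_{11}+c_{12}\eta^{\prime}(\lambda_{01})\right)^{2}$ times (\ref{deter}); under the standing hypothesis $c_{11}+c_{12}\eta^{\prime}(\lambda_{01})\neq 0$ the two conditions coincide.
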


\begin{proof}
In order to obtain the two-parameter unfolding of function $\phi ,$ given by
(\ref{BifJedF}), the procedure for obtaining the bifurcation equation (\ref%
{BifJedF}), given in proof of Theorem \ref{Ljapunov-Smit}, is followed. The
analogue of bifurcation equation (\ref{BifJed}) reads%
\begin{equation}
Q\left( G\left( \lambda ,\alpha \right) y\right) =0.  \label{BifJed-G}
\end{equation}

First, the operator $G,$ given by (\ref{OperatorG}), is rewritten as%
\begin{equation*}
G\left( \lambda ,\alpha \right) y=L^{2}\left( \lambda \right) y+K\left(
\lambda ,\alpha ,y\right) ,
\end{equation*}%
with 
\begin{align*}
K\left( \lambda ,\alpha ,y\right) & :=G\left( \lambda ,\alpha \right)
y-L^{2}\left( \lambda \right) y \\
& =-\sqrt{1-\dot{y}^{2}}\frac{\alpha _{1}\frac{1}{\rho _{0}}+\alpha
_{2}J_{1}y+\lambda _{1}\left( J_{2}y-\kappa y\sqrt{1-\dot{y}^{2}}\right)
+\lambda _{2}I_{1}\dot{y}}{1+\kappa \alpha _{2}\dot{y}+\kappa \lambda _{1}%
\dot{y}\left( I_{1}y\right) -\kappa \lambda _{2}\sqrt{1-\dot{y}^{2}}}+\frac{%
\lambda _{1}\left( I_{2}y-\kappa y\right) +\lambda _{2}I_{1}\dot{y}}{%
1-\kappa \lambda _{2}}.
\end{align*}

Second, using Taylor's expansion of the operator $K$ in a neighborhood of $%
(\lambda _{0},0)$, i.e., for $\lambda =\lambda _{0}+\Delta \lambda $, $%
\left\vert \Delta \lambda \right\vert \ll 1$ and $y=ay_{L}+w(\lambda
,ay_{L}) $, with $w=O(a^{3}),$ the following expression is obtained%
\begin{align*}
K& (\lambda ,\alpha ,y) \\
& =\frac{1}{2\left( 1-\kappa \lambda _{2}\right) }\bigg(\alpha _{1}\frac{1}{%
\rho _{0}}\left( \dot{y}^{2}-2\right) +\alpha _{2}\Big(\left( 1-t\right)
\left( \dot{y}^{2}-2\right) +I_{1}\dot{y}^{2}\Big)+\lambda _{1}\Big(I_{3}y+%
\dot{y}^{2}\left( I_{2}y-2\kappa y\right) \Big)+\lambda _{2}\dot{y}%
^{2}\left( I_{1}\dot{y}\right) \bigg) \\
& \qquad +\frac{\kappa }{2\left( 1-\kappa \lambda _{2}\right) ^{2}}\bigg(%
\alpha _{1}\alpha _{2}\frac{1}{\rho _{0}}\dot{y}\left( 2+\dot{y}^{2}\right)
+\alpha _{2}^{2}\dot{y}\Big(\left( 1-t\right) \left( \dot{y}^{2}+2\right)
-I_{1}\dot{y}^{2}\Big) \\
& \qquad +2\alpha _{1}\lambda _{1}\frac{1}{\rho _{0}}\dot{y}\left(
I_{1}y\right) +\alpha _{1}\lambda _{2}\frac{1}{\rho _{0}}\dot{y}^{2}+2\alpha
_{2}\lambda _{1}\dot{y}\Big(\left( 1-t\right) \left( I_{1}y\right)
+I_{2}y-\kappa y\Big)+\alpha _{2}\lambda _{2}\dot{y}\Big(\left( 1-t\right) 
\dot{y}+2I_{1}\dot{y}\Big) \\
& \qquad +2\lambda _{1}^{2}\dot{y}\left( I_{1}y\right) \left( I_{2}y-\kappa
y\right) +\lambda _{1}\lambda _{2}\dot{y}\Big(\dot{y}\left( I_{2}y-\kappa
y\right) +2\left( I_{1}y\right) \left( I_{1}\dot{y}\right) \Big)+\lambda
_{2}^{2}\dot{y}^{2}\left( I_{1}\dot{y}\right) \bigg)+O\left( a^{4}\right) ,
\end{align*}%
implying 
\begin{align*}
K& \big(\lambda _{0}+\Delta \lambda ,\alpha ,ay_{L}+O(a^{3})\big) \\
& =-\frac{1}{1-\kappa \lambda _{02}}\left( \alpha _{1}\frac{1}{\rho _{0}}%
+\alpha _{2}\left( 1-t\right) \right) \\
& \qquad +a\frac{\kappa }{\left( 1-\kappa \lambda _{02}\right) ^{2}}\left(
\alpha _{1}\alpha _{2}\frac{1}{\rho _{0}}+\alpha _{2}^{2}\left( 1-t\right)
\right) \dot{y}_{L}+\Delta \lambda _{2}\frac{\kappa }{\left( 1-\kappa
\lambda _{02}\right) ^{2}}\left( \alpha _{1}\frac{1}{\rho _{0}}+\alpha
_{2}\left( 1-t\right) \right) \\
& \qquad +a^{2}\Bigg(\frac{1}{2\left( 1-\kappa \lambda _{02}\right) }\bigg(%
\alpha _{1}\frac{1}{\rho _{0}}\dot{y}_{L}^{2}+\alpha _{2}\Big(\left(
1-t\right) \dot{y}_{L}^{2}+I_{1}\dot{y}_{L}^{2}\Big)\bigg) \\
& \qquad \qquad +\frac{\kappa }{2\left( 1-\kappa \lambda _{02}\right) ^{2}}%
\bigg(2\alpha _{1}\lambda _{01}\frac{1}{\rho _{0}}\dot{y}_{L}\left(
I_{1}y_{L}\right) +\alpha _{1}\lambda _{02}\frac{1}{\rho _{0}}\dot{y}_{L}^{2}
\\
& \qquad \qquad +2\alpha _{2}\lambda _{01}\dot{y}_{L}\Big(\left( 1-t\right)
\left( I_{1}y_{L}\right) +I_{2}y_{L}-\kappa y_{L}\Big)+\alpha _{2}\lambda
_{02}\dot{y}_{L}\Big(\left( 1-t\right) \dot{y}_{L}+2I_{1}\dot{y}_{L}\Big)%
\bigg)\Bigg) \\
& \qquad +a\Delta \lambda _{2}\frac{\kappa }{\left( 1-\kappa \lambda
_{02}\right) ^{2}}\left( 1-\frac{\kappa }{1-\kappa \lambda _{02}}\right)
\left( \alpha _{1}\alpha _{2}\frac{1}{\rho _{0}}+\alpha _{2}^{2}\left(
1-t\right) \right) \dot{y}_{L} \\
& \qquad -\left( \Delta \lambda _{2}\right) ^{2}\frac{\kappa ^{2}}{\left(
1-\kappa \lambda _{02}\right) ^{3}}\left( \alpha _{1}\frac{1}{\rho _{0}}%
+\alpha _{2}\left( 1-t\right) \right) \\
& \qquad +a^{3}\Bigg(\frac{1}{2\left( 1-\kappa \lambda _{02}\right) }\bigg(%
\lambda _{01}\Big(I_{3}y_{L}+\dot{y}_{L}^{2}\left( I_{2}y_{L}-2\kappa
y_{L}\right) \Big)+\lambda _{02}\dot{y}_{L}^{2}\left( I_{1}\dot{y}%
_{L}\right) \bigg) \\
& \qquad \qquad +\frac{\kappa }{2\left( 1-\kappa \lambda _{02}\right) ^{2}}%
\bigg(\alpha _{1}\alpha _{2}\frac{1}{\rho _{0}}\dot{y}_{L}^{3}+\alpha
_{2}^{2}\dot{y}_{L}\Big(\left( 1-t\right) \dot{y}_{L}^{2}-\left( I_{1}\dot{y}%
_{L}^{2}\right) \Big) \\
& \qquad \qquad +2\lambda _{01}^{2}\dot{y}_{L}\left( I_{1}y_{L}\right) \Big(%
I_{2}y_{L}-\kappa y_{L}\Big)+\lambda _{01}\lambda _{02}\dot{y}_{L}\Big(\dot{y%
}_{L}\left( I_{2}y_{L}-\kappa y_{L}\right) +2\left( I_{1}y_{L}\right) \left(
I_{1}\dot{y}_{L}\right) \Big)+\lambda _{02}^{2}\dot{y}_{L}^{2}\left( I_{1}%
\dot{y}_{L}\right) \bigg)\Bigg) \\
& \qquad +a^{2}\Delta \lambda _{1}\frac{\kappa }{\left( 1-\kappa \lambda
_{02}\right) ^{2}}\bigg(\alpha _{1}\frac{1}{\rho _{0}}\dot{y}_{L}\left(
I_{1}y_{L}\right) +\alpha _{2}\dot{y}_{L}\Big(\left( 1-t\right) \left(
I_{1}y_{L}\right) +I_{2}y_{L}-\kappa y_{L}\Big)\bigg) \\
& \qquad +a^{2}\Delta \lambda _{2}\Bigg(\frac{\kappa }{2\left( 1-\kappa
\lambda _{02}\right) ^{2}}\alpha _{2}\bigg(2\dot{y}_{L}\left( I_{1}\dot{y}%
_{L}\right) -\left( I_{1}\dot{y}_{L}^{2}\right) \bigg) \\
& \qquad \qquad -\frac{\kappa ^{2}}{2\left( 1-\kappa \lambda _{02}\right)
^{3}}\bigg(\alpha _{1}\frac{1}{\rho _{0}}\dot{y}_{L}\Big(2\lambda
_{01}\left( I_{1}y_{L}\right) +\lambda _{02}\dot{y}_{L}\Big) \\
& \qquad \qquad +2\alpha _{2}\lambda _{01}\dot{y}_{L}\Big(\left( 1-t\right)
\left( I_{1}y_{L}\right) +I_{2}y_{L}-\kappa y_{L}\Big)+\alpha _{2}\lambda
_{02}\dot{y}_{L}\Big(\left( 1-t\right) \dot{y}_{L}+2I_{1}\dot{y}_{L}\Big)%
\bigg)\Bigg) \\
& \qquad -a\left( \Delta \lambda _{2}\right) ^{2}\frac{\kappa }{\left(
1-\kappa \lambda _{02}\right) ^{2}}\left( 1-\frac{2\kappa }{1-\kappa \lambda
_{02}}\right) \left( \alpha _{1}\alpha _{2}\frac{1}{\rho _{0}}+\alpha
_{2}^{2}\left( 1-t\right) \right) \dot{y}_{L} \\
& \qquad -\left( \Delta \lambda _{2}\right) ^{3}\frac{\kappa ^{3}}{\left(
1-\kappa \lambda _{02}\right) ^{4}}\left( \alpha _{1}\frac{1}{\rho _{0}}%
+\alpha _{2}\left( 1-t\right) \right) \\
& \qquad +O\left( a^{4},\Delta \lambda _{1}a^{3},\Delta \lambda
_{2}a^{3},\Delta \lambda _{2}^{2}a^{2},\Delta \lambda _{1}\Delta \lambda
_{2}a^{2},\left( \Delta \lambda _{2}\right) ^{3}a,\left( \Delta \lambda
_{2}\right) ^{4}\right) .
\end{align*}

Using the same arguments as for obtaining equation (\ref{BifJed2}), equation
(\ref{BifJed-G}) becomes 
\begin{equation*}
\left\langle \tilde{L}\left( \lambda \right) y+K\left( \lambda ,\alpha
,y\right) ,q\right\rangle =0,
\end{equation*}%
yielding the two-parameter unfolding of function $\phi $ in the following
form%
\begin{eqnarray}
&&\Psi \left( a,\Delta \lambda ,\alpha \right) 
\begin{tabular}{l}
=%
\end{tabular}%
\alpha _{1}d_{01}+\alpha _{2}d_{02}  \notag \\
&&\qquad +a\left( \alpha _{1}\alpha _{2}d_{11}+\alpha _{2}^{2}d_{12}\right)
+\Delta \lambda _{2}\left( \alpha _{1}d_{13}+\alpha _{2}d_{14}\right)  \notag
\\
&&\qquad +a^{2}\left( \alpha _{1}d_{21}+\alpha _{2}d_{22}\right) +a\Delta
\lambda _{1}c_{11}+a\Delta \lambda _{2}\left( c_{12}+\alpha _{1}\alpha
_{2}d_{23}+\alpha _{2}^{2}d_{24}\right) +\left( \Delta \lambda _{2}\right)
^{2}\left( \alpha _{1}d_{25}+\alpha _{2}d_{26}\right)  \notag \\
&&\qquad +a^{3}\left( c_{3}+\alpha _{1}\alpha _{2}d_{31}+\alpha
_{2}^{2}d_{32}\right) +a^{2}\Delta \lambda _{1}\left( \alpha
_{1}d_{33}+\alpha _{2}d_{34}\right) +a^{2}\Delta \lambda _{2}\left( \alpha
_{1}d_{35}+\alpha _{2}d_{36}\right)  \notag \\
&&\qquad +a\left( \Delta \lambda _{2}\right) ^{2}\left( c_{13}+\alpha
_{1}\alpha _{2}d_{34}+\alpha _{2}^{2}d_{35}\right) +\left( \Delta \lambda
_{2}\right) ^{3}\left( \alpha _{1}d_{51}+\alpha _{2}d_{52}\right)  \notag \\
&&\qquad +O\left( a^{4},\Delta \lambda _{1}a^{3},\Delta \lambda
_{2}a^{3},\Delta \lambda _{2}^{2}a^{2},\Delta \lambda _{1}\Delta \lambda
_{2}a^{2},\Delta \lambda _{1}\left( \Delta \lambda _{2}\right) ^{2}a,\left(
\Delta \lambda _{2}\right) ^{3}a,\left( \Delta \lambda _{2}\right)
^{4}\right) 
\begin{tabular}{l}
=%
\end{tabular}%
0,  \label{psaj}
\end{eqnarray}%
where constants are calculated as%
\begin{gather}
d_{01}=-\frac{1}{1-\kappa \lambda _{02}}\left\langle \frac{1}{\rho _{0}}%
,q\right\rangle ,\qquad d_{02}=-\frac{1}{1-\kappa \lambda _{02}}\left\langle
1-t,q\right\rangle ,  \label{de0} \\
d_{11}=\frac{\kappa }{\left( 1-\kappa \lambda _{02}\right) ^{2}}\left\langle 
\frac{1}{\rho _{0}}\dot{y}_{L},q\right\rangle ,\qquad d_{12}=\frac{\kappa }{%
\left( 1-\kappa \lambda _{02}\right) ^{2}}\left\langle \left( 1-t\right) 
\dot{y}_{L},q\right\rangle ,  \notag \\
d_{13}=-\frac{\kappa }{1-\kappa \lambda _{02}}d_{01},\qquad d_{14}=-\frac{%
\kappa }{1-\kappa \lambda _{02}}d_{02},  \notag
\end{gather}%
\begin{eqnarray}
d_{21} &=&\frac{1}{2\left( 1-\kappa \lambda _{02}\right) }\Bigg<\frac{1}{%
\rho _{0}}\dot{y}_{L}\bigg(\dot{y}_{L}+\frac{\kappa }{1-\kappa \lambda _{02}}%
\Big(2\lambda _{01}\left( I_{1}y_{L}\right) +\lambda _{02}\dot{y}_{L}\Big)%
\bigg),q\Bigg>,  \label{de22} \\
d_{22} &=&\frac{1}{2\left( 1-\kappa \lambda _{02}\right) }\Bigg<\left(
1-t\right) \dot{y}_{L}^{2}+I_{1}\dot{y}_{L}^{2}  \notag \\
&&+\frac{\kappa }{1-\kappa \lambda _{02}}\dot{y}_{L}\bigg(2\lambda _{01}\Big(%
\left( 1-t\right) \left( I_{1}y_{L}\right) +I_{2}y_{L}-\kappa y_{L}\Big)%
+\lambda _{02}\Big(\left( 1-t\right) \dot{y}_{L}+2I_{1}\dot{y}_{L}\Big)\bigg)%
,q\Bigg>,  \label{de23}
\end{eqnarray}%
\begin{gather*}
d_{23}=\left( 1-\frac{\kappa }{1-\kappa \lambda _{02}}\right) d_{11},\qquad
d_{24}=\left( 1-\frac{\kappa }{1-\kappa \lambda _{02}}\right) d_{12}, \\
d_{25}=\frac{\kappa ^{2}}{\left( 1-\kappa \lambda _{02}\right) ^{2}}%
d_{01},\qquad d_{26}=\frac{\kappa ^{2}}{\left( 1-\kappa \lambda _{02}\right)
^{2}}d_{02},
\end{gather*}%
\begin{gather*}
d_{31}=\frac{\kappa }{2\left( 1-\kappa \lambda _{02}\right) ^{2}}%
\left\langle \frac{1}{\rho _{0}}\dot{y}_{L}^{3},q\right\rangle ,\qquad
d_{32}=\frac{\kappa }{2\left( 1-\kappa \lambda _{02}\right) ^{2}}\bigg<\dot{y%
}_{L}\Big(\left( 1-t\right) \dot{y}_{L}^{2}-\left( I_{1}\dot{y}%
_{L}^{2}\right) \Big),q\bigg>, \\
d_{33}=\frac{\kappa }{\left( 1-\kappa \lambda _{02}\right) ^{2}}\left\langle 
\frac{1}{\rho _{0}}\dot{y}_{L}\left( I_{1}y_{L}\right) ,q\right\rangle
,\qquad d_{34}=\frac{\kappa }{\left( 1-\kappa \lambda _{02}\right) ^{2}}%
\bigg<\dot{y}_{L}\Big(\left( 1-t\right) \left( I_{1}y_{L}\right)
+I_{2}y_{L}-\kappa y_{L}\Big),q\bigg>,
\end{gather*}%
\begin{eqnarray*}
d_{35} &=&-\frac{\kappa ^{2}}{2\left( 1-\kappa \lambda _{02}\right) ^{3}}%
\bigg<\frac{1}{\rho _{0}}\dot{y}_{L}\Big(2\lambda _{01}\left(
I_{1}y_{L}\right) +\lambda _{02}\dot{y}_{L}\Big),q\bigg>, \\
d_{36} &=&\frac{\kappa }{2\left( 1-\kappa \lambda _{02}\right) ^{2}}\Bigg<2%
\dot{y}_{L}\left( I_{1}\dot{y}_{L}\right) -I_{1}\dot{y}_{L}^{2} \\
&&-\frac{\kappa }{1-\kappa \lambda _{02}}\dot{y}_{L}\bigg(2\lambda _{01}\Big(%
\left( 1-t\right) \left( I_{1}y_{L}\right) +I_{2}y_{L}-\kappa y_{L}\Big)%
+\lambda _{02}\Big(\left( 1-t\right) \dot{y}_{L}+2I_{1}\dot{y}_{L}\Big)\bigg)%
,q\Bigg>,
\end{eqnarray*}%
\begin{gather*}
d_{37}=-\left( 1-\frac{2\kappa }{1-\kappa \lambda _{02}}\right)
d_{11},\qquad d_{38}=-\left( 1-\frac{2\kappa }{1-\kappa \lambda _{02}}%
\right) d_{12}, \\
d_{39}=\frac{\kappa ^{3}}{\left( 1-\kappa \lambda _{02}\right) ^{3}}%
d_{01},\qquad d_{310}=\frac{\kappa ^{3}}{\left( 1-\kappa \lambda
_{02}\right) ^{3}}d_{02},
\end{gather*}%
while constants $c_{11},$ $c_{12},$ $c_{13},$ and $c_{3}$ are given by (\ref%
{ce11}), (\ref{ce12}), (\ref{ce13}), and (\ref{ce3}), respectively.

According to Proposition III.4.4 in \cite{GolubitskySchaeffer}, in order for 
$\Psi ,$ given by (\ref{psaj}), to be the two-parameter universal unfolding
of $\phi ,$ given by (\ref{BifJedF}), it is required that%
\begin{equation*}
\det \left( 
\begin{array}{cccc}
0 & 0 & \frac{\partial ^{2}\phi \left( a,\Delta \lambda \right) }{\partial
a\partial \Delta \lambda _{1}} & \frac{\partial ^{3}\phi \left( a,\Delta
\lambda \right) }{\partial a^{3}} \\ 
0 & \frac{\partial ^{2}\phi \left( a,\Delta \lambda \right) }{\partial
\Delta \lambda _{1}\partial a} & \frac{\partial ^{2}\phi \left( a,\Delta
\lambda \right) }{\partial \Delta \lambda _{1}^{2}} & \frac{\partial
^{3}\phi \left( a,\Delta \lambda \right) }{\partial \Delta \lambda
_{1}\partial a^{2}} \\ 
\frac{\partial \Psi \left( a,\Delta \lambda ,\alpha \right) }{\partial
\alpha _{1}} & \frac{\partial ^{2}\Psi \left( a,\Delta \lambda ,\alpha
\right) }{\partial \alpha _{1}\partial a} & \frac{\partial ^{2}\Psi \left(
a,\Delta \lambda ,\alpha \right) }{\partial \alpha _{1}\partial \Delta
\lambda _{1}} & \frac{\partial ^{3}\Psi \left( a,\Delta \lambda ,\alpha
\right) }{\partial \alpha _{1}\partial a^{2}} \\ 
\frac{\partial \Psi \left( a,\Delta \lambda ,\alpha \right) }{\partial
\alpha _{2}} & \frac{\partial ^{2}\Psi \left( a,\Delta \lambda ,\alpha
\right) }{\partial \alpha _{2}\partial a} & \frac{\partial ^{2}\Psi \left(
a,\Delta \lambda ,\alpha \right) }{\partial \alpha _{2}\partial \Delta
\lambda _{1}} & \frac{\partial ^{3}\Psi \left( a,\Delta \lambda ,\alpha
\right) }{\partial \alpha _{2}\partial a^{2}}%
\end{array}%
\right) \neq 0,
\end{equation*}%
where the partial derivatives are calculated at $a=\Delta \lambda _{1}=0.$
Straightforward calculation yields%
\begin{equation*}
\det \left( 
\begin{array}{cccc}
0 & 0 & c_{11}+\eta ^{\prime }\left( \text{$\lambda _{01}$}\right) c_{12} & 
6c_{2} \\ 
0 & c_{11}+\eta ^{\prime }\left( \text{$\lambda _{01}$}\right) c_{12} & 0 & 0
\\ 
d_{01} & 0 & \eta ^{\prime }\left( \text{$\lambda _{01}$}\right) d_{13} & 
2d_{21} \\ 
d_{02} & 0 & \eta ^{\prime }\left( \text{$\lambda _{01}$}\right) d_{14} & 
2d_{22}%
\end{array}%
\right) =-2\left( c_{11}+\eta ^{\prime }\left( \text{$\lambda _{01}$}\right)
c_{12}\right) ^{2}\det \left( 
\begin{array}{cc}
d_{01} & d_{21} \\ 
d_{02} & d_{22}%
\end{array}%
\right) \neq 0,
\end{equation*}%
due to assumption $\left( c_{11}+\eta ^{\prime }\left( \text{$\lambda _{01}$}%
\right) c_{12}\right) \neq 0$ of Theorem \ref{Ljapunov-Smit} and assumption $%
\det \left( 
\begin{array}{cc}
d_{01} & d_{21} \\ 
d_{02} & d_{22}%
\end{array}%
\right) \neq 0.$
\end{proof}

\begin{remark}
The condition for existence of universal unfolding in the case of
cantilevered rotating axially compressed local rod is reobtained from $\det
\left( 
\begin{array}{cc}
d_{01} & d_{21} \\ 
d_{02} & d_{22}%
\end{array}%
\right) \neq 0$ for $\kappa =0,$ see Eq. (4.9) in \cite{Ata}.
\end{remark}

\section{Numerical examples \label{calc}}

Theoretical results regarding the existence of bifurcation points,
occurrence of the pitchfork bifurcation for perfect rod and the
two-parameter unfolding corresponding to imperfect rod, given in Theorems %
\ref{Crandal}, \ref{Ljapunov-Smit}, and \ref{Ljapunov-Smit 2}, respectively,
are illustrated by the numerical examples. In particular, buckling mode
degeneration and post buckling shapes, along with the type of pitchfork
bifurcation, are numerically investigated.

The critical values $\lambda _{01}$ and $\lambda _{02},$ lying on the
interaction curve implicitly given by (\ref{frekventna}), along with the
trivial solution to equation (\ref{EqM}), subject to (\ref{bc2}), or
equation (\ref{EqM4}), subject to (\ref{bc4}), by Theorem \ref{Crandal}
represent the bifurcation points. The dependence of interaction curve shape
on non-locality parameter is reinvestigated in Figures \ref{krivaint1}, \ref%
{krivaint2}, and \ref{kis}. Namely, interaction curves for the first
buckling mode are monotonically decreasing functions for $\lambda _{01}\geq
0 $ and $\lambda _{02}\geq 0$ up to value of (dimensionless) non-locality
parameter $\kappa _{cr}=0.375325,$ as stated in \cite{AZ-1}. There is an
interaction curve branching at $\left( \lambda _{01},\lambda _{02}\right)
=\left( 29.145,0\right) $ for the critical value of the non-locality
parameter $\kappa _{cr},$ see Figure \ref{krivaint1}. If the non-locality
parameter has a value larger than $\kappa _{cr},$ then the interaction curve
branches for smaller value of $\lambda _{01},$ and higher value of $\lambda
_{02},$ as can be seen from Figure \ref{krivaint2}. 
\begin{figure}[tbph]
\begin{minipage}{0.65\columnwidth}
\centering
\includegraphics[width=0.68\columnwidth]{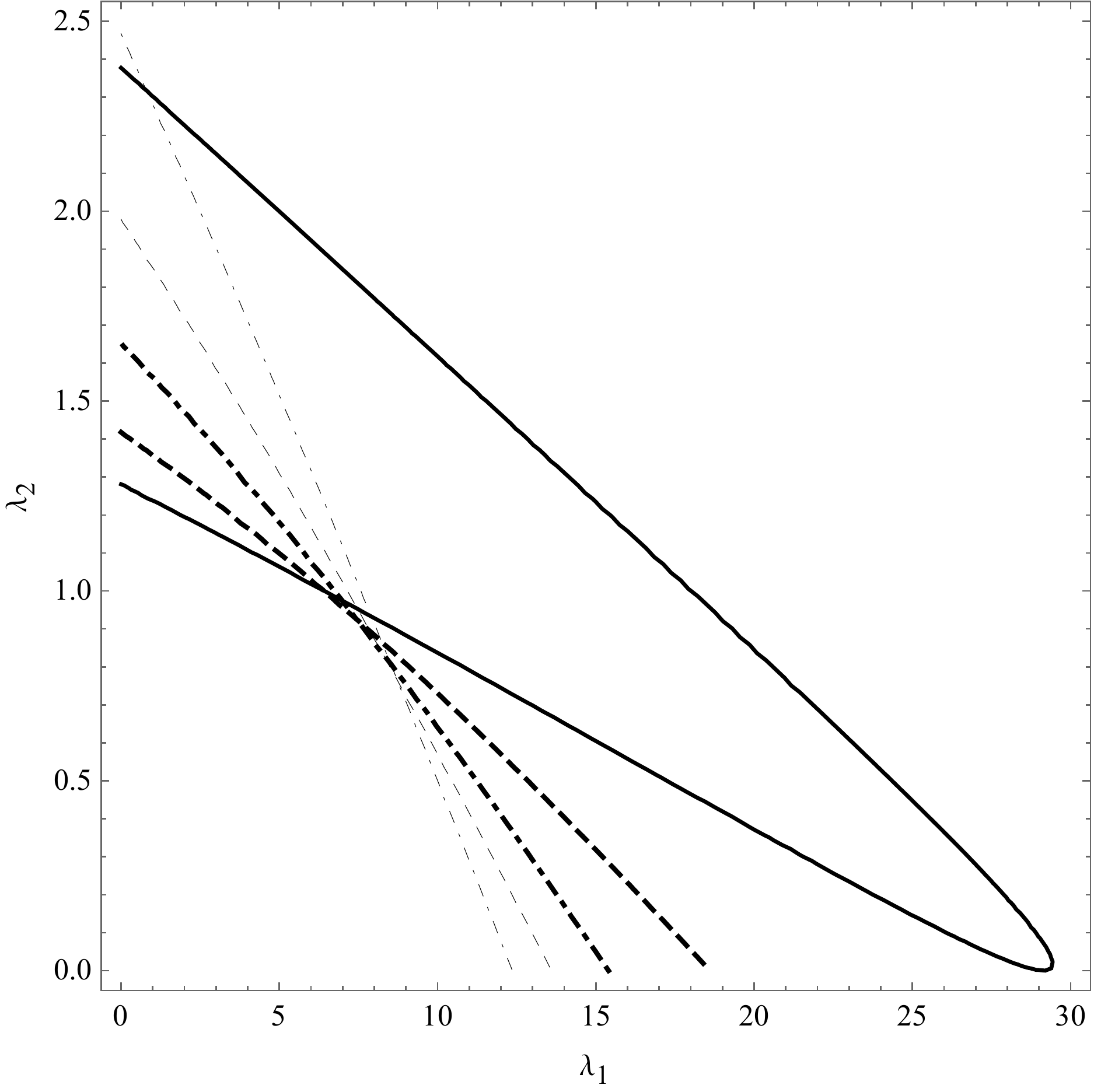}
\end{minipage}
\hfil
\begin{minipage}{0.35\columnwidth}
 \centering
 \begin{tabular}{l}
 $\protect\kappa=0$ - thin dot-dashed line\\
 $\protect\kappa=0.1$ - thin dashed line\\
 $\protect\kappa=0.2$ - thick dot-dashed line\\
 $\protect\kappa=0.3$ - thick dashed line\\
 $\protect\kappa_{cr}=0.375325$ - solid line
 \end{tabular}
\end{minipage}
\caption{Interaction curves for different values of non-locality parameter $%
\protect\kappa $.}
\label{krivaint1}
\end{figure}
\begin{figure}[tbph]
\begin{minipage}{0.65\columnwidth}
\centering
\includegraphics[width=0.68\columnwidth]{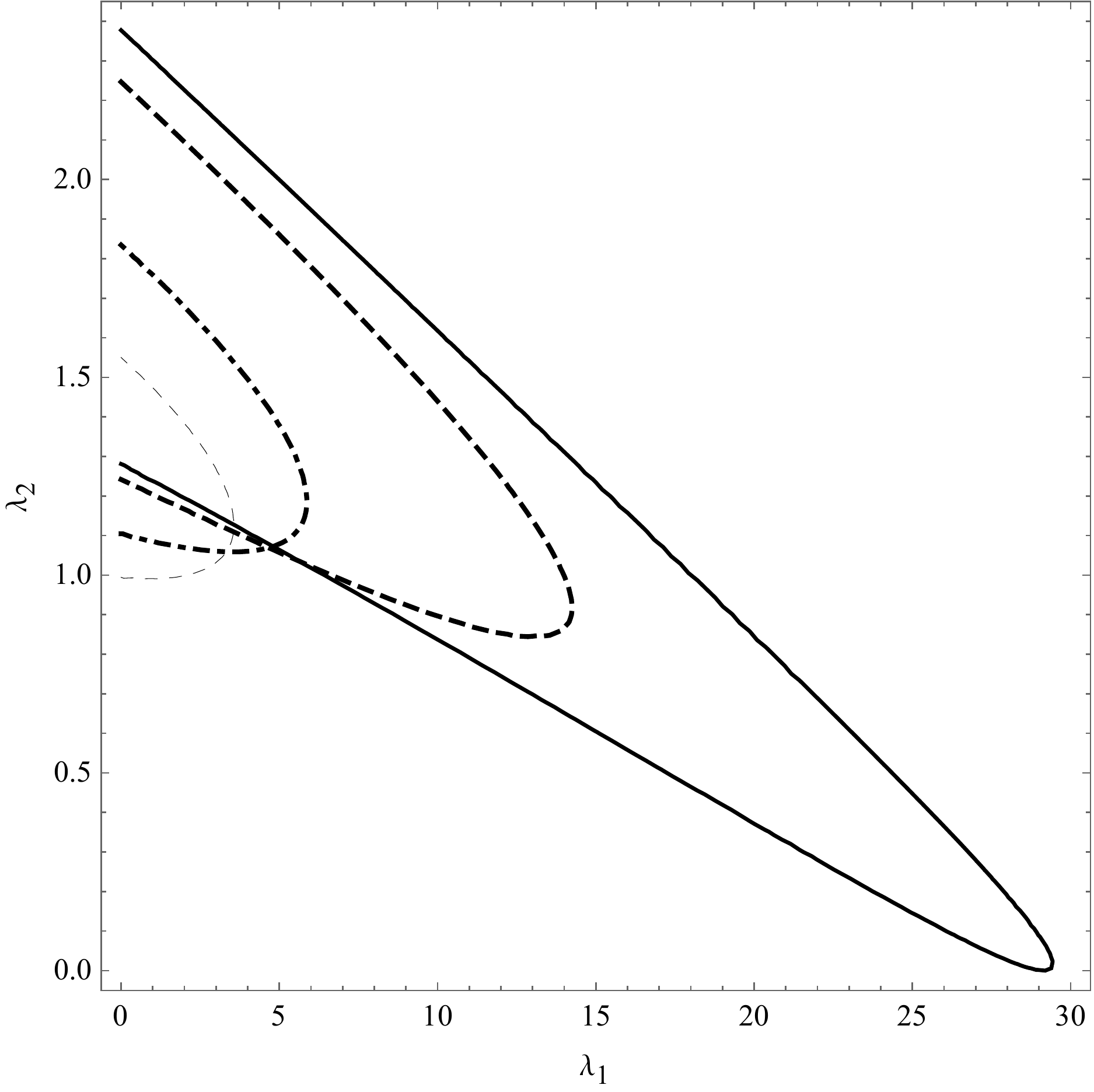}
\end{minipage}
\hfil
\begin{minipage}{0.35\columnwidth}
 \centering
 \begin{tabular}{l}
 $\protect\kappa=0.375325$ - solid line\\
 $\protect\kappa=0.4$ - thick dashed line\\
 $\protect\kappa=0.5$ - thick dot-dashed line\\
 $\protect\kappa=0.6$ - thin dashed line
 \end{tabular}
\end{minipage}
\caption{Interaction curves for different values of non-locality parameter $%
\protect\kappa $.}
\label{krivaint2}
\end{figure}

The occurrence of interaction curve branching is observed for higher modes
even if the non-locality parameter is less than the critical one, as in the
upper graph in Figure \ref{kis}, where $\kappa =0.25<\kappa _{cr}.$ If the
interaction curve branching occurs for the first mode, then it branches in
higher modes as well, see the lower graphs in Figure \ref{kis}. 
\begin{figure}[htbp]
\begin{center}
\begin{minipage}{0.45\columnwidth}
   \includegraphics[width=0.775\columnwidth]{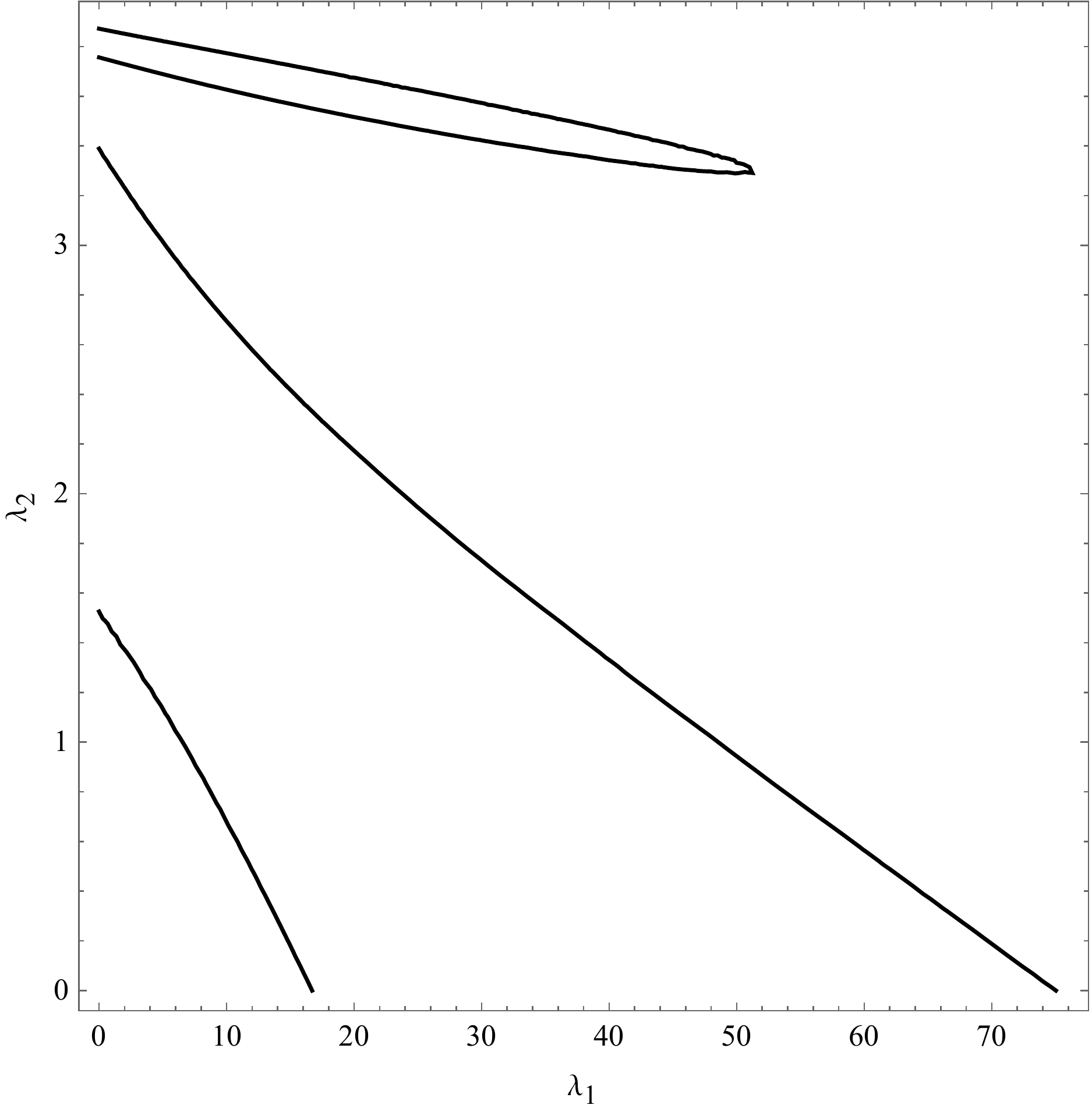}
  \end{minipage}
\vfil
\begin{minipage}{0.45\columnwidth}
   \includegraphics[width=0.775\columnwidth]{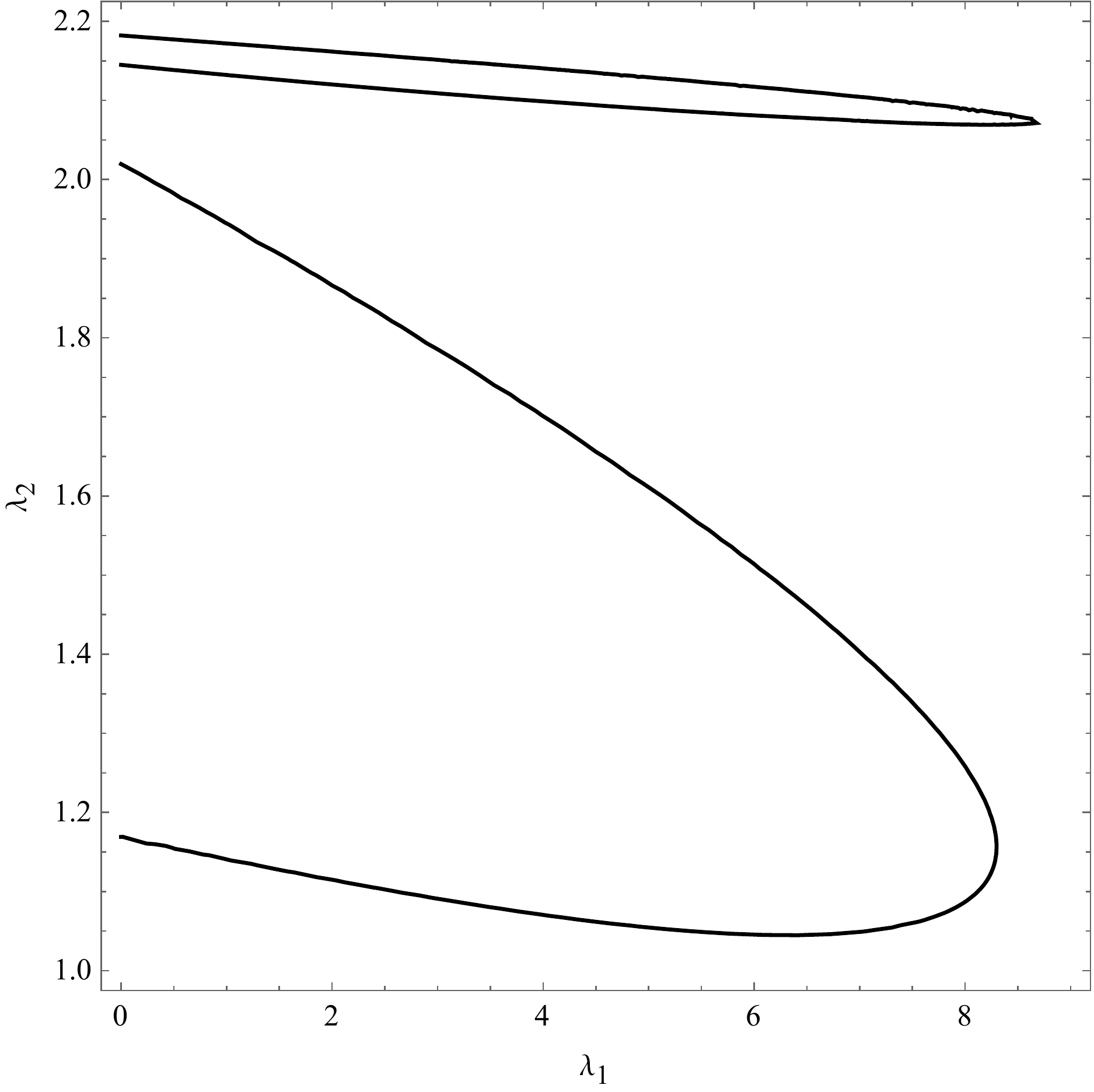}
  \end{minipage}
\hfil
\begin{minipage}{0.45\columnwidth}
   \includegraphics[width=0.775\columnwidth]{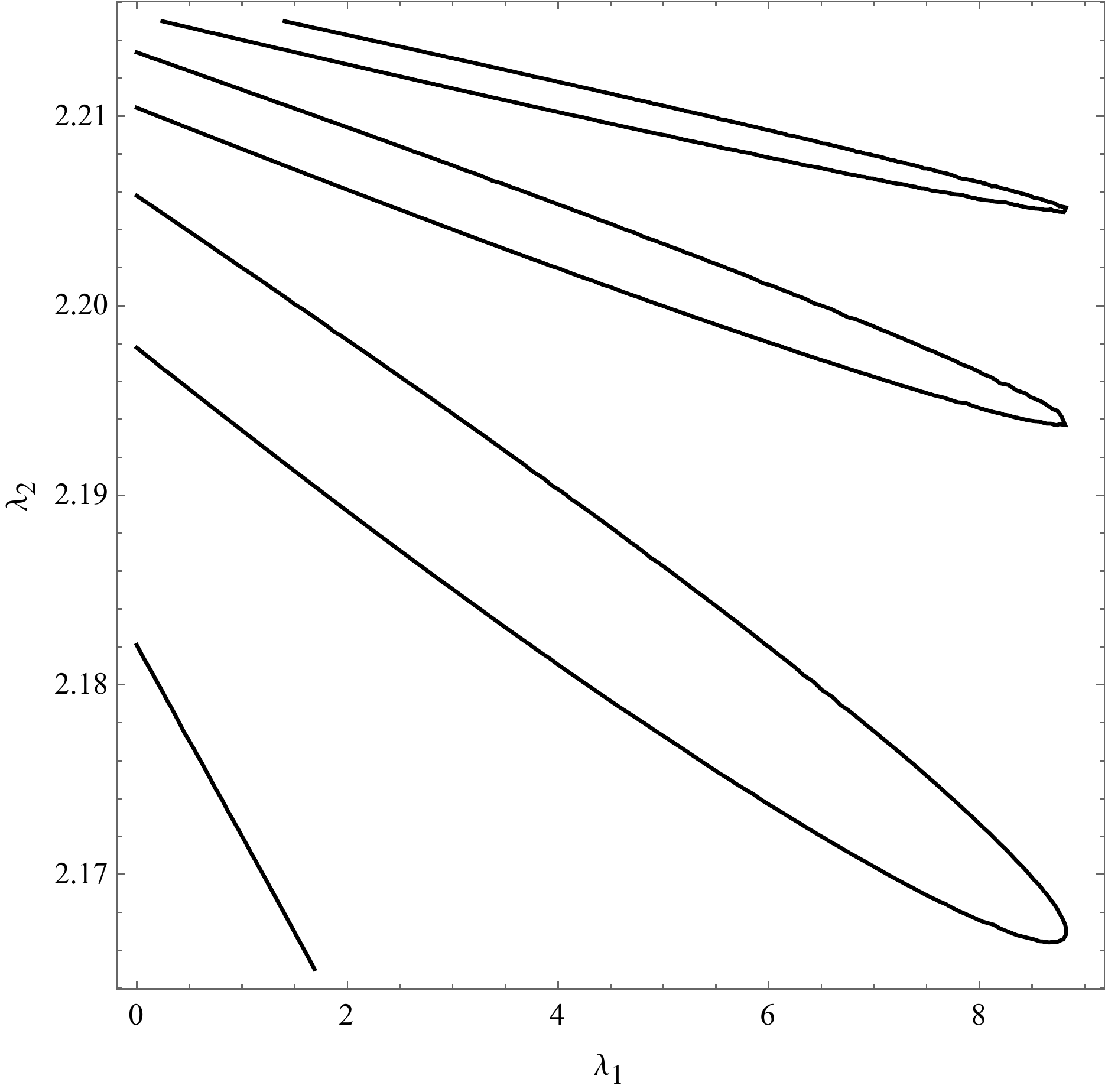}
  \end{minipage}
\end{center}
\caption{Interaction curves corresponding to different bucking modes for: $%
\protect\kappa =0.25$ - upper graphs; $\protect\kappa =0.45$ - lower left
graphs for lower-order modes; $\protect\kappa =0.45$ - lower right graphs
for higher-order modes.}
\label{kis}
\end{figure}

Figure \ref{linear} depicts the behavior of (to its maximum value)
normalized solution (\ref{Linearno}) of the linearized problem (\ref{EqL})
at the interaction curve branching point $\left( \lambda _{01},\lambda
_{02}\right) =\left( 8.29796,1.15665\right) $ for $\kappa =0.45,$ see also
the lower right graph in Figure \ref{kis}, as well as for the critical
values in its neighborhood on the lower (thick lines) and upper (thin lines)
branch. One notices that the shape of linear solution corresponding to the
first buckling mode is degenerating into the shape resembling to the second
buckling mode as the critical values pass from the lower to the upper branch
through the interaction curve branching point, as shown in Figure \ref%
{linear}. 
\begin{figure}[h!]
\begin{minipage}{0.65\columnwidth}
\centering
\includegraphics[width=0.8\columnwidth]{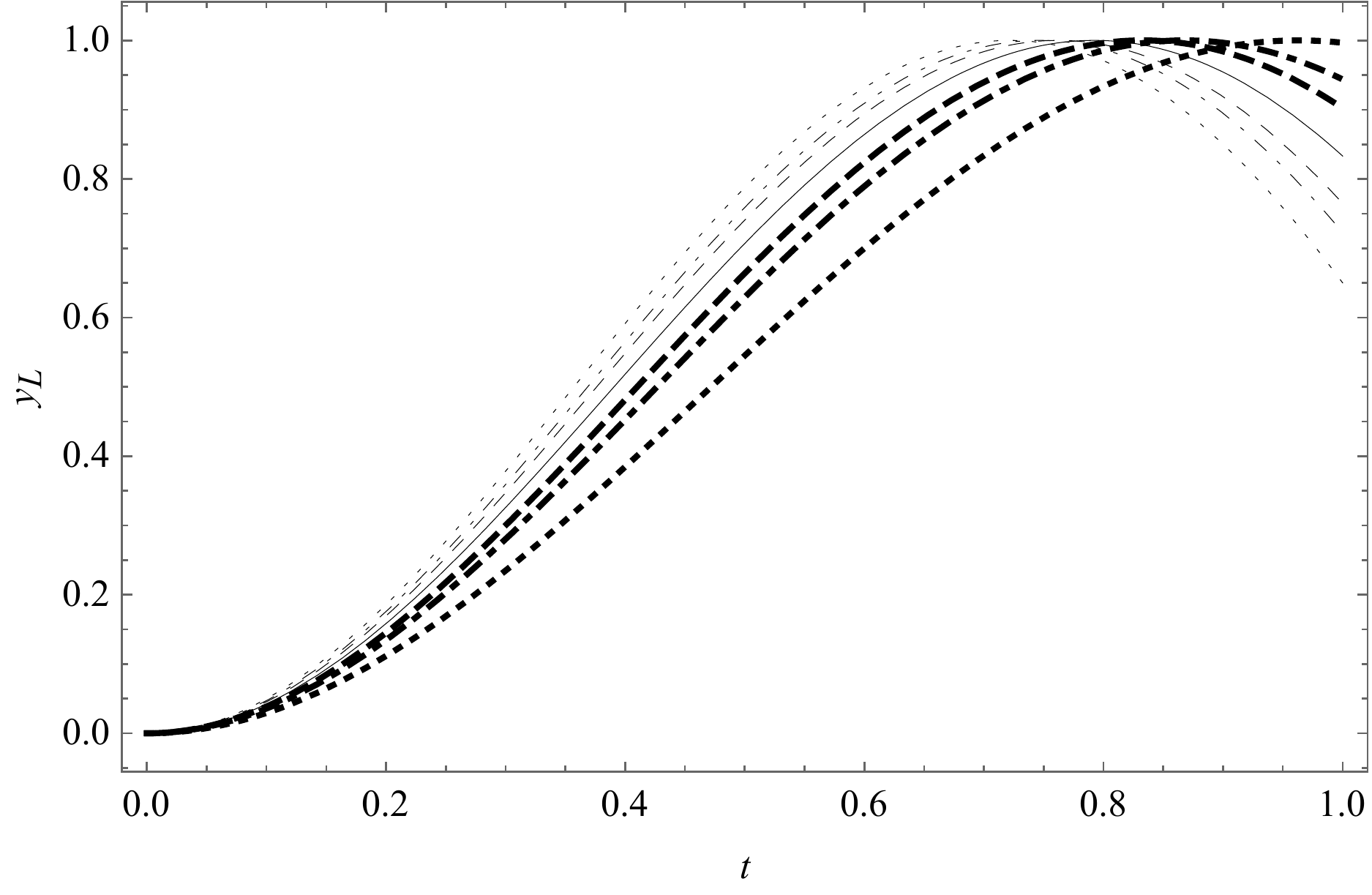}
\end{minipage}
\hfil
\begin{minipage}{0.35\columnwidth}
 \centering
 \begin{tabular}{l}
 $\!\!\!\! \left( \lambda _{01},\lambda _{02}\right)=$\\
 $(6,1.04544)$ - thick dotted line\\
 $(7.5,1.05978)$ - thick dot-dashed line\\
 $(8,1.08694)$ - thick dashed line\\
 $(8.29796,1.15665)$ - solid line\\
 $(8,1.25843)$ - thin dashed line\\
 $(7.5,1.33932)$ - thin dot-dashed line\\
 $(6,1.51419)$ - thin dotted line
 \end{tabular}
\end{minipage}
\caption{Plots of linear solution $y_{L}$ versus $t$ for non-locality
parameter $\protect\kappa =0.45$ for different critical values $(\protect%
\lambda _{01},\protect\lambda _{02})$.}
\label{linear}
\end{figure}

Post-critical buckling shapes, presented in Figures \ref{nelinearni-kapa-025}
- \ref{nelinearni-kapa-045-3}, are obtained as the numerical solution of
system of non-linear equations (\ref{6}), subject to boundary conditions (%
\ref{7}), with $\alpha _{1}=\alpha _{2}=0,\ $and either $\lambda
_{1}=\lambda _{01}+\Delta \lambda _{1}$ and $\lambda _{2}=\lambda _{02}+\eta
^{\prime }\left( \text{$\lambda _{01}$}\right) \Delta \lambda _{1},$ or $%
\lambda _{1}=\lambda _{01}+\bar{\eta}^{\prime }\left( \text{$\lambda _{02}$}%
\right) \Delta \lambda _{2}$ and $\lambda _{2}=\lambda _{02}+\Delta \lambda
_{2},$ where $\eta ^{\prime }$ is given by (\ref{eta prim}) and $\bar{\eta}%
^{\prime }$ is obtained analogously to $\eta ^{\prime }.$ In the each case
of post-buckling modes, the type of bifurcation point is determined
according to Theorem \ref{Ljapunov-Smit} by calculating $\varepsilon =%
\mathrm{sgn\,}c_{3}\;$and $\delta =\mathrm{sgn\,}\left( c_{11}+c_{12}\eta
^{\prime }\left( \text{$\lambda _{01}$}\right) \right) ,$ with $c_{11},$ $%
c_{12},$ and $c_{3}$ given by (\ref{ce11}), (\ref{ce12}), and (\ref{ce3}).
Using Theorem \ref{Ljapunov-Smit 2}, i.e., by calculating the determinant (%
\ref{deter}), it is also shown that in the each case of post-buckling modes
there exists the two-parameter unfolding for the initial displacement, i.e.,
curvature, assumed as 
\begin{equation*}
y\left( t\right) =t^{3}-\frac{4}{3}t^{2}+\frac{4}{9}t,\;\;\text{i.e.,}\;\;%
\frac{1}{\rho _{0}\left( t\right) }=\frac{6t-\frac{8}{3}}{\sqrt{1-\left(
3t^{2}-\frac{8}{3}t+\frac{4}{9}\right) ^{2}}},
\end{equation*}%
regardless of the use of $q_{l}^{\left( 2\right) },$ given by (\ref{q2-l}),
or $q_{l}^{\left( 4\right) },$ given by (\ref{q4-l}).

In the case of non-locality parameter $\kappa =0.25<\kappa _{cr},$ the first
post-buckling modes, corresponding to the different critical values on the
interaction curve from the upper graph in Figure \ref{kis}, are presented in
Figure \ref{nelinearni-kapa-025}. Their shape strongly resembles to the
shape of the first buckling mode of linear solution. Numerical calculation
of $\varepsilon =\mathrm{sgn\,}c_{3}$ and $\delta =\mathrm{sgn\,}\left(
c_{11}+c_{12}\eta ^{\prime }\left( \text{$\lambda _{01}$}\right) \right) $
shows that they are of different sign for both $q_{l}^{\left( 2\right) }$
and $q_{l}^{\left( 4\right) },$ implying the super-critical bifurcation. 
\begin{figure}[tbph]
\begin{minipage}{0.65\columnwidth}
\centering
\includegraphics[width=0.9\columnwidth]{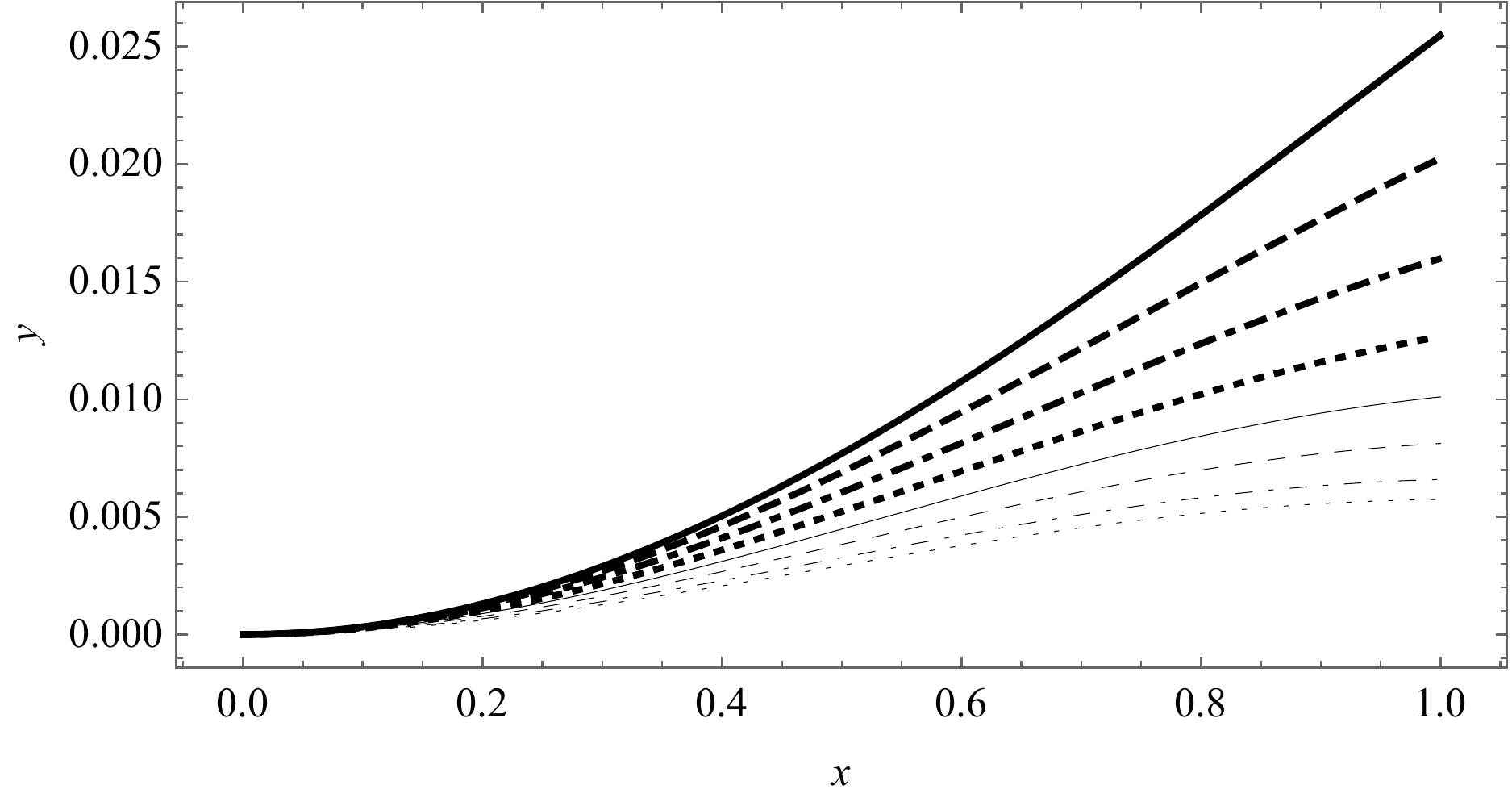}
\end{minipage}
\hfil
\begin{minipage}{0.35\columnwidth}
 \centering
 \begin{tabular}{l}
 $\!\!\!\! \left( \lambda _{01},\lambda _{02}\right)=$\\
 $(0.05,1.52248)$ - thick solid line\\
 $(2.5,1.33903)$ - thick dashed line\\
 $(5,1.13541)$ - thick dot-dashed line\\
 $(7.5,0.916144)$ - thick dotted line\\
 $(10,0.682732)$ - thin solid line\\
 $(12.5,0.436978)$ - thin dashed line\\
 $(15,0.180736)$ - thin dot-dashed line\\
 $(16.71310,0)$ - thin dotted line
 \end{tabular}
\end{minipage}
\caption{Plots of non-linear solution $y$ versus $x$ for non-locality
parameter $\protect\kappa =0.25$ for different critical values $(\protect%
\lambda _{01},\protect\lambda _{02})$, with $\Delta \protect\lambda _{1}=0.5$%
.}
\label{nelinearni-kapa-025}
\end{figure}

For the non-locality parameter value of $\kappa =0.45>\kappa _{cr}$ the
lower branch of interaction curve from the lower left graph in Figure \ref%
{kis} has a minimum at $\left( \lambda _{01}^{\left( \min \right) },\lambda
_{02}^{\left( \min \right) }\right) =\left( 6.32271,1.04474\right) $ at
which there is a distinct change in the shape of the first post-buckling
mode, as noticeable from Figure \ref{nelinearni-kapa-045-1}, since for $%
\lambda _{01}<\lambda _{01}^{\left( \min \right) }$ mode shapes resemble to
the shape of the first buckling mode of linear solution (thick lines), while
for $\lambda _{01}>\lambda _{01}^{\left( \min \right) }$ mode shapes
resemble more to the shape of the second buckling mode of linear solution
(thin lines). The post-buckling mode shapes at the minimum and in its
neighborhood are shown in Figure \ref{nelinearni-kapa-045-2}. Again, the
numerical calculation of $\varepsilon $ and $\delta $ shows that they are of
different sign for both $q_{l}^{\left( 2\right) }$ and $q_{l}^{\left(
4\right) },$ implying the super-critical bifurcation. 
\begin{figure}[tbph]
\begin{minipage}{0.65\columnwidth}
\centering
\includegraphics[width=0.9\columnwidth]{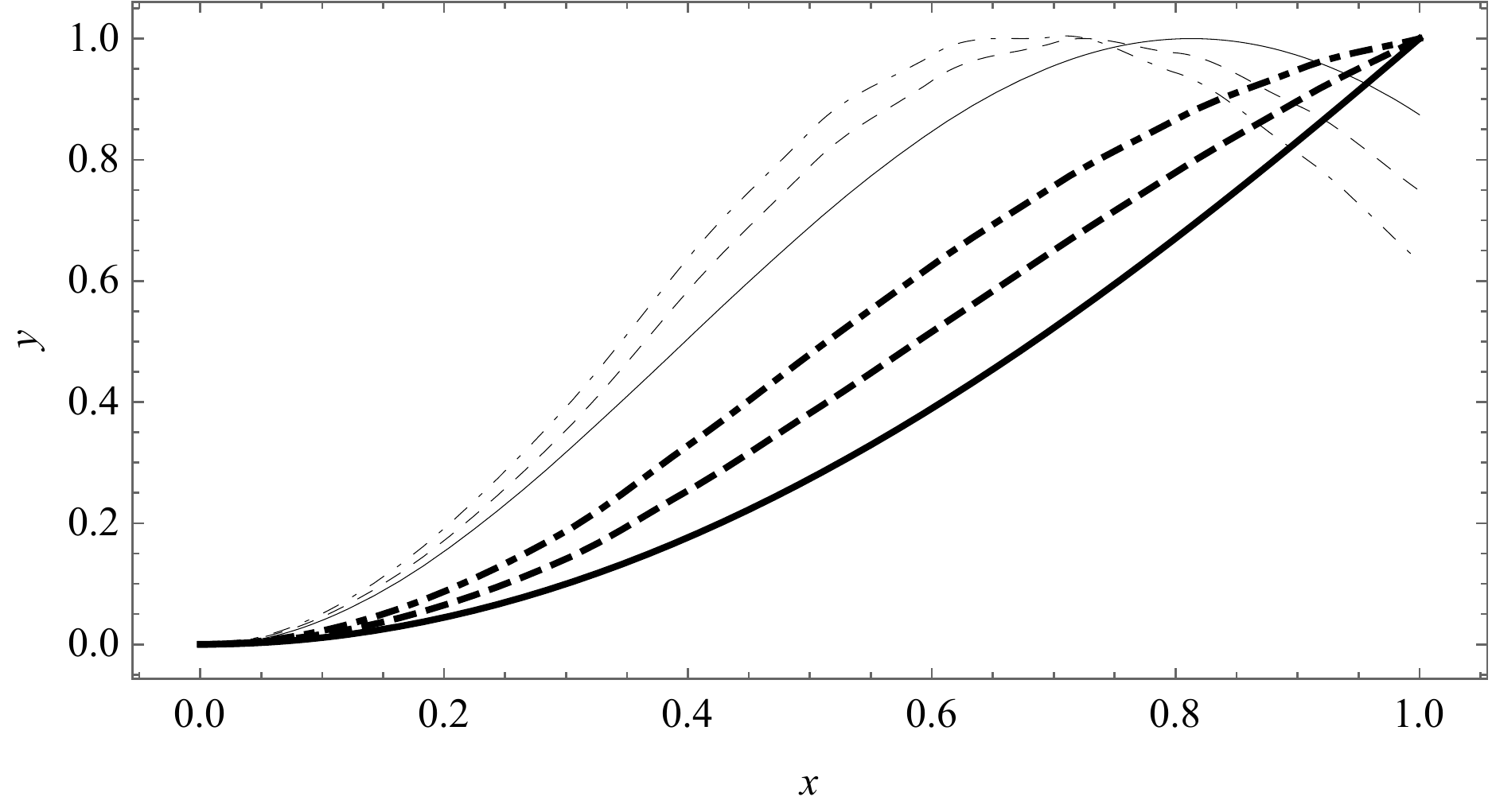}
\end{minipage}
\hfil
\begin{minipage}{0.35\columnwidth}
 \centering
 \begin{tabular}{l}
 $\!\!\!\! \left( \lambda _{01},\lambda _{02}\right)=$\\
 $(0.05,1.16776)$ - thick solid line\\
 $(2.5,1.10261)$ - thick dashed line\\
 $(5,1.05447)$ - thick dot-dashed line\\
 $(7,1.04881)$ - thin solid line\\
 $(7.5,1.05978)$ - thin dashed line\\
 $(8.29796,1.15665)$ - thin dot-dashed line
 \end{tabular}
\end{minipage}
\caption{Plots of non-linear solution $y$ versus $x$ for non-locality
parameter $\protect\kappa =0.45$ for different critical values $(\protect%
\lambda _{01},\protect\lambda _{02})$, with $\Delta \protect\lambda %
_{2}=0.02 $.}
\label{nelinearni-kapa-045-1}
\end{figure}
\begin{figure}[tbph]
\begin{minipage}{0.65\columnwidth}
\centering
\includegraphics[width=0.9\columnwidth]{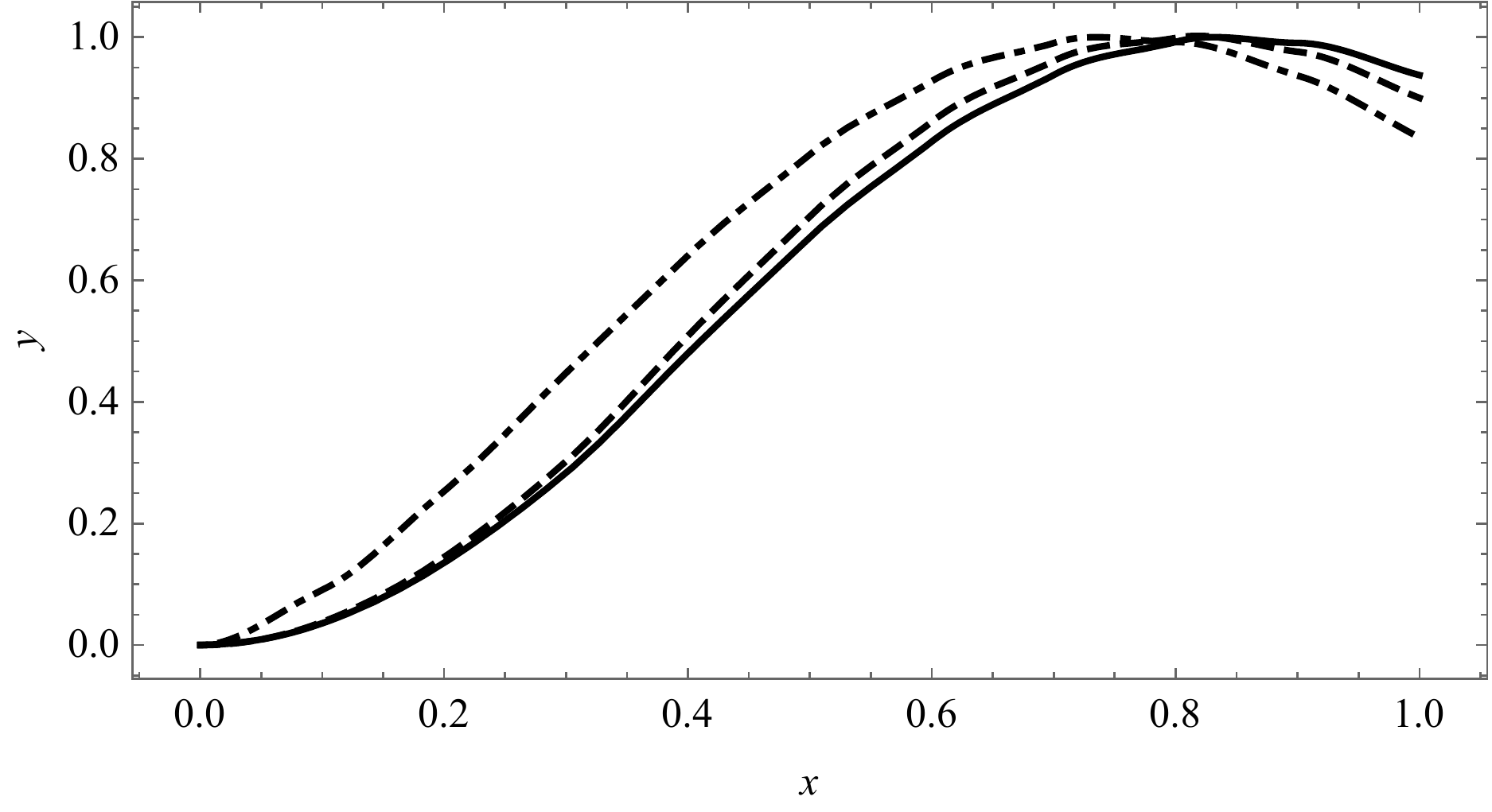}
\end{minipage}
\hfil
\begin{minipage}{0.35\columnwidth}
 \centering
 \begin{tabular}{l}
 $\!\!\!\! \left( \lambda _{01},\lambda _{02}\right)=$\\
 $(5.75,1.04684)$ - solid line\\
 $(6.32271,1.04474)$ - dashed line\\
 $(6.75,1.04684)$ - dot-dashed line
 \end{tabular}
\end{minipage}
\caption{Plots of non-linear solution $y$ versus $x$ for non-locality
parameter $\protect\kappa =0.45$ for different critical values $(\protect%
\lambda _{01},\protect\lambda _{02})$, with $\Delta \protect\lambda _{1}=0.5$%
.}
\label{nelinearni-kapa-045-2}
\end{figure}

The post-buckling mode shapes for the critical values lying on the upper
branch of interaction curve from the lower left graph in Figure \ref{kis}
are presented in Figure \ref{nelinearni-kapa-045-3}. Mode shapes
corresponding to $\left( \lambda _{01},\lambda _{02}\right) =\left(
7.5,1.33932\right) $ and $\left( \lambda _{01},\lambda _{02}\right) =\left(
5,1.61161\right) $ resemble to the shape of the first buckling mode of
linear solution and $\varepsilon $ and $\delta $ are of different sign
implying the super-critical bifurcation, while for $\left( \lambda
_{01},\lambda _{02}\right) =\left( 2.5,1.82714\right) $ and $\left( \lambda
_{01},\lambda _{02}\right) =\left( 0.05,2.01637\right) $ $\varepsilon $ and $%
\delta $ are of the same sign implying the sub-critical bifurcation. 
\begin{figure}[tbph]
\begin{minipage}{0.65\columnwidth}
\centering
\includegraphics[width=0.9\columnwidth]{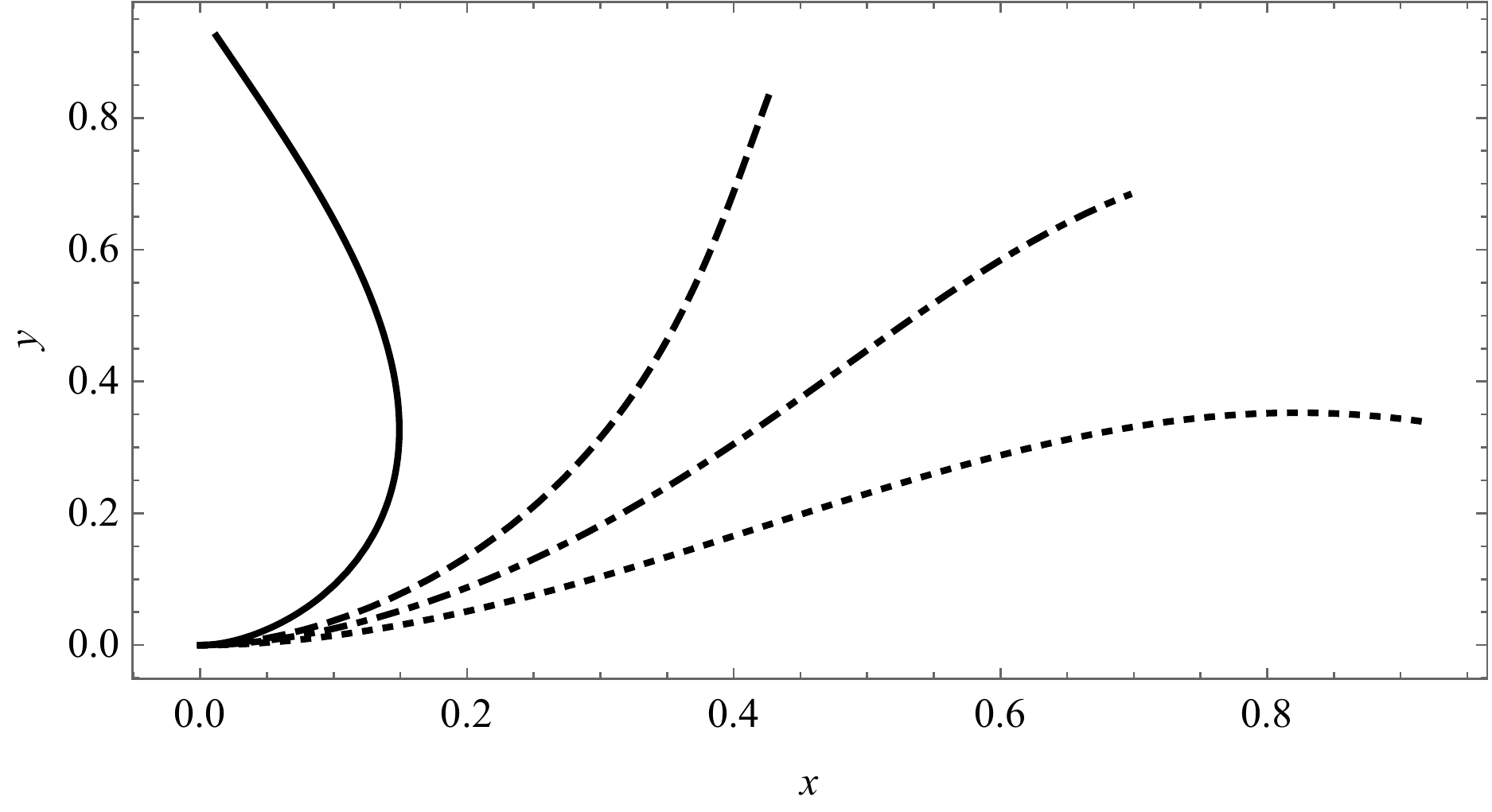}
\end{minipage}
\hfil
\begin{minipage}{0.35\columnwidth}
 \centering
 \begin{tabular}{l}
 $\!\!\!\! \left( \lambda _{01},\lambda _{02}\right)=$\\
 $(0.05,2.01637)$ - solid line\\
 $(2.5,1.82714)$ - dashed line\\
 $(5,1.61161)$ - dot-dashed line\\
 $(7.5,1.33932)$ - dotted line
 \end{tabular}
\end{minipage}
\caption{Plots of non-linear solution $y$ versus $x$ for non-locality
parameter $\protect\kappa =0.45$ for different critical values $(\protect%
\lambda _{01},\protect\lambda _{02})$, with $\Delta \protect\lambda %
_{2}=0.02 $.}
\label{nelinearni-kapa-045-3}
\end{figure}

\section{Conclusion}

Using the bifurcation theory, the static stability problem of cantilevered
rotating axially compressed non-local rod is revisited and extended by
considering imperfections in shape and loading, represented by the small
initial deformation of the rod and vertical force of small intensity acting
on rod's tip. The non-locality effects are included by considering the
stress gradient Eringen moment-curvature constitutive relation.

Theorem \ref{Crandal} states that the critical values of angular velocity
and intensity of the horizontal axial force acting on rod's tip, obtained
from the implicitly given interaction curve equation (\ref{frekventna}),
represent the bifurcation points for the non-linear equation (\ref{EqM4}),
subject to boundary conditions (\ref{bc4}). Theorem \ref{Ljapunov-Smit} uses
the Lyapunov-Schmidt reduction method and determines that the problem (\ref%
{EqM}), subject to (\ref{bc2}), admits pitchfork bifurcation, while Theorem %
\ref{Ljapunov-Smit 2} states that the selected imperfections constitute the
two-parameter universal unfolding of the same problem. The obtained results
in the case of Bernoulli-Euler moment-curvature constitutive equation reduce
to the results obtained in \cite{Ata}.

Numerical treatment of the interaction curve equation (\ref{frekventna})
shown the interaction curve branching in cases of small value of
non-locality parameter for higher modes even if the interaction curve is
monotone for the first (or second) mode. The interaction curve branching
occurs in cases of large value of non-locality parameter even for the first
mode (and higher modes as well). In the case of monotonically decreasing
interaction curve, using Theorem \ref{Ljapunov-Smit}, it is shown that the
pitchfork bifurcation is super-critical, which is also the case on lower
branch of interaction curve, while the bifurcation may change to
sub-critical on the upper branch. It is also shown, using Theorem \ref%
{Ljapunov-Smit 2}, that the selected initial deformation and vertical force
constitute the two-parameter universal unfolding.

\section*{Acknowledgement}

This work is supported by projects $174005$ and $174024$ of the Serbian
Ministry of Education, Science, and Technological Development and project $%
142-451-2384/2018$ of the Provincial Secretariat for Higher Education and
Scientific Research.


\end{document}